\begin{document}
\title{Abelian Group Codes for Source Coding and Channel Coding}

\author{\IEEEauthorblockN{Aria G. Sahebi and S. Sandeep
    Pradhan\thanks{This work was supported by NSF grant
      CCF-1116021. This work was presented in part at IEEE
      International Symposium on Information Theory (ISIT), July
      2011, and Allerton conference on communication, conrol and
      computing, October 2012.  }}\\
\IEEEauthorblockA{Department of Electrical Engineering and Computer Science,\\
University of Michigan, Ann Arbor, MI 48109, USA.\\
Email: \tt\small ariaghs@umich.edu, pradhanv@umich.edu}}

\newtheorem{theorem}{Theorem}[section]
\newtheorem{deff}{Definition}[section]
\newtheorem{example}{Example}[section]
\newtheorem{lemma}[theorem]{Lemma}
\newtheorem{prop}[theorem]{Proposition}
\newtheorem{cor}[theorem]{Corollary}
\newtheorem{remark}[theorem]{Remark}
\maketitle

%\begin{abstract}
%%\boldmath
%\end{abstract}

\IEEEpeerreviewmaketitle
\begin{abstract}
In this paper, we study the asymptotic performance of Abelian group codes for the
lossy source coding problem for arbitrary discrete (finite alphabet) memoryless sources
as well as the channel coding problem for arbitrary discrete (finite alphabet)
memoryless channels. For the source coding problem, we derive an
achievable rate-distortion function that is characterized in a
single-letter information-theoretic form  using the ensemble of Abelian
group codes. When the underlying group is a field, it
simplifies to the symmetric rate-distortion function. Similarly, for
the channel coding problem, we find an  achievable rate characterized
in a single-letter information-theoretic form using group
codes. This simplifies to the symmetric capacity of the
channel when the underlying group is a field. We compute the
rate-distortion function and the achievable rate for several examples
of sources and channels. Due to the non-symmetric nature of  the
sources and channels considered, our analysis uses a synergy of information
theoretic and group-theoretic tools.

\end{abstract}

\section{Introduction}
Approaching information theoretic performance limits of communication
using structured codes has been of great interest for the last several decades
\cite{ahlswede_alg_codes,Goblick,Dobrusin,dobrushin_group}.
The earlier attempts to design computationally efficient  encoding and decoding algorithms
for point-to-point communication (both channel coding and source
coding) resulted in injection of finite field structures to the coding
schemes \cite{CSBook}. In the channel coding problem \cite{MacBook},
the channel input alphabets are replaced with algebraic fields and
encoders are replaced with  matrices. Similarly in source
coding problem \cite{CoveringBook}, the reconstruction alphabets are
replaced with a finite fields and decoders are replaced with matrices. Later these coding approaches were extended to
weaker algebraic structures such as rings and groups
\cite{forney_dynamics,fagnani_abelian,Loeliger96,Garello95,loeliger-signal}\footnote{Note
  that this is an incomplete list. There is a vast body of work on
  group codes. See \cite{CSBook} for a more complete bibliography.}.
The motivation for this are two fold: a) finite fields exist only for alphabets
with size equal to a prime power, and b) for communication under certain constraints,
codes with weaker algebraic structures have better properties.
For example, when communicating over an additive white Gaussian noise
channel with $8$-PSK constellation, codes over $\mathds{Z}_8$, the
cyclic group of size $8$, are more desirable over binary linear codes
because the structure of the code is matched to the structure of the signal set
\cite{loeliger-signal}, and hence the former have superior error
correcting properties. As another example, construction of polar codes
over alphabets of size $p^r$,  for $r>1$ and $p$ prime,  is simpler with a module structure
rather than a vector space structure \cite{Sahebi_polar_allerton2011,sasoglu_polar_q,Park_Barg_Polar}.
Subsequently, as interest in network information theory grew, these codes were
used to approach the information-theoretic  performance limits of certain special cases of multi-terminal
communication problems \cite{Wyner74,Zamir02,sandeep_discus,Erez05}. These limits were
obtained earlier using the random coding ensembles in the information theory
literature.

In 1979,   Korner and Marton, in a significant departure from
tradition,  showed that for
a binary distributed source coding problem, the asymptotic  average performance of
binary linear code ensembles can be superior to that of the standard
random coding ensembles.  Although, structured codes were being used in communication
mainly for computational complexity reasons, the duo showed that, in
contrast, even when computational complexity is a non-issue, the use of structured
codes leads to superior asymptotic performance limits in multi-terminal communication problems. In the recent
past, such gains were shown for a wide class of
problems \cite{phiosof_zamir_journal,dinesh_dsc,nazer_gastpar,vinodh_group_source_coding,Sridharan10}.
In our prior work, we developed an inner bound
to the optimal rate-distortion region for the distributed source
coding problem  \cite{dinesh_dsc,sahebi_DSC_Allerton} in which Abelian group codes were used as building blocks in the
coding schemes. Similar coding approaches were applied for the
interference channel and the broadcast channel in
\cite{padakandla_3_user_inter,padakandla_broadcast_arxiv}.
The motivation for studying Abelian group codes beyond the non-existence of finite fields
over arbitrary alphabets  is the following.
The algebraic structure of the code imposes certain restrictions on the performance.
For certain problems, linear codes were shown to be not
optimal\cite{dinesh_dsc}, and  Abelian
group codes exhibit a superior performance. For example, consider a
distributed source coding problem with two statistically correlated
but individually uniform quaternary sources $X$ and $Y$ that are related via the relation $X=Y+Z$, where $+$ denotes addition modulo-$4$
and $Z$ is a hidden quaternary random variable that has a non-uniform
distribution and is independent of $Y$. The joint decoder wishes to
reconstruct $Z$ losslessly. In this problem, codes over $\mathds{Z}_4$ perform better than linear codes over
the Galois field of size $4$.
In summary, the main reason for using algebraic structured codes in this context is performance
rather than complexity of encoding and decoding.
Hence information-theoretic characterizations of asymoptotic
performance of Abelian group code ensembles for various communication problems and under various
decoding constraints became important.

Such performance limits have been characterized in certain special
cases.  It is well-known that binary linear codes achieve the capacity of
binary symmetric channels \cite{elias}. More generally, it has also been
shown that $q$-ary linear codes can achieve the capacity of symmetric
channels \cite{dobrushin_group} and linear codes can be used to compress
a source losslessly down to its entropy \cite{korner_marton}.
Goblick \cite{Goblick} showed that binary linear codes achieve the rate-distortion function
of binary uniform sources with Hamming distortion criterion.
Group codes were first studied by Slepian \cite{slepian_group} for the Gaussian channel. In
\cite{ahlswede_group}, the capacity of group codes for certain classes of
channels has been computed. Further results on the capacity of group codes
were established in \cite{ahlswede_alg_codes,ahlswede_alg_codes2}. The
capacity of group codes over a class of channels exhibiting symmetries with
respect to the action of a finite Abelian group has been investigated in
\cite{fagnani_abelian}.

In this work, we focus on two problems. In the first, we consider the lossy source coding problem for arbitrary
discrete memoryless sources with single-letter distortion measures and
the reconstruction alphabet being  equipped with the structure of a finite
Abelian group $G$.  We derive an upper bound on the rate-distortion
function achievable using group codes which are subgroups of $G^n$,
where $n$ denotes the block length of encoding which is arbitrarily
large.  The average performance of the ensemble is shown
to be the symmetric rate-distortion function of the source when the underlying group
is a field i.e. the Shannon rate-distortion
function with the additional constraint that the reconstruction variable is uniformly distributed. For the general case, it
turns out that several additional terms appear corresponding to
subgroups of the underlying group in the form of a maximization and this can
result in a larger rate compared to the symmetric rate for a given distortion level.

In the second part, we consider the channel coding problem for
arbitrary discrete memoryless channels. We assume that the channel
input alphabet is equipped with the structure of a finite  Abelian group $G$.
 We derive a lower bound on the capacity of such channels achievable
 using group codes  which are subgroups of $G^n$. We show that the achievable rate is equal to the
symmetric capacity of the channel when the underlying group is a field; i.e. it is equal to the Shannon mutual information between the channel
input and the channel output when the channel input is uniformly
distributed. Similar to the source coding, we show that in the general
case, several
additional terms appear corresponding to subgroups of the underlying group in the form of a minimization and the achievable rate can be
smaller than the symmetric capacity of the channel.

It can be noted that the bounds on the performance limits as mentioned
above apply to any arbitrary discrete memoryless case. Moreover, we
use joint typicality encoding and decoding \cite{Csiszarbook} for both problems at
hand. This will make the analysis more tractable. In this approach we
use a synergy of information-theoretic and group-theoretic  tools.
The traditional approaches have looked at
encoding and decoding of structured codes based on  either minimum distance or maximum likelihood.
We introduce two information quantities that capture the performance
limits achievable using Abelian group codes that are analogous to the mutual
information which captures the Shannon performance limits when no
algebraic structure is enforced on the codes. They are source coding
group mutual information and channel coding group mutual
information. The converse bounds for both problems will be addressed in a future correspondence.

The paper is organized as follows: In section \ref{Preliminaries}, some definitions and basic facts are stated which are
used in the paper. In Section \ref{section:Abelian}, we introduce the ensemble of Abelian group codes used in the paper.
In section \ref{section:main_results}, we state the main results of the paper for both the source coding problem as well as the channel coding problem. We also simplify the expressions for the case where the underlying group is a
$\mathds{Z}_{p^r}$ ring. In Section \ref{section:proof_source}, we prove the results for the source coding problem and similarly, in Section \ref{section:proof_channel}, we prove the results for the channel coding problem. In Section \ref{section:Examples}, we show that for the source coding problem, when the underlying group is a field, the
rate-distortion function achievable using Abelian group codes is equal to the symmetric rate-distortion function of the source
and for the channel coding problem, the rate achievable using Abelian group codes is equal to the symmetric capacity of the channel. We also provide
several examples dealing with non-field groups in this section.  We conclude in
Section \ref{section:Conclusion}.

\section{Preliminaries} \label{Preliminaries}

\subsubsection{Source Model}
The source is modeled as a discrete-time memoryless random process $X$ with each sample
taking values from a finite set $\mathcal{X}$ called alphabet according to the distribution $p_{X}$. The reconstruction alphabet is denoted by $\mathcal{U}$ and the quality of reconstruction is measured by a single-letter distortion functions $d:\mathcal{X}\times \mathcal{U}\rightarrow \mathds{R}^{+}$. We denote this source by $(\mathcal{X},\mathcal{U},p_{X},d)$.

\subsubsection{Channel Model}
We consider discrete memoryless channels used without feedback. We associate two finite sets $\mathcal{X}$ and $\mathcal{Y}$ with the channel as the channel input and output alphabets. The input-output relation of the channel is characterized by a conditional probability law $W_{Y|X}(y|x)$ for $x\in \mathcal{X}$ and $y\in \mathcal{Y}$. The channel is specified by $(\mathcal{X},\mathcal{Y},W_{Y|X})$.

\subsubsection{Groups}
All groups referred to in this paper are \emph{Abelian groups}. Given a group $(G,+)$, a subset $H$ of $G$ is called a \emph{subgroup} of $G$ if it is closed under the group operation. In this case, $(H,+)$ is a group in its own right. This is denoted by $H\le G$. A \emph{coset} $C$ of a subgroup $H$ is a shift of $H$ by an arbitrary element $a\in G$ (i.e. $C=a+H$ for some $a\in G$). For a subgroup $H$ of $G$, the number of cosets of $H$ in $G$ is called the \emph{index} of $H$ in $G$ and is denoted by $|G:H|$. The index of $H$ in $G$ is equal to $|G|/|H|$ where $|G|$ and $|H|$ are the cardinality or size of $G$ and $H$ respectively. For a prime $p$ dividing the cardinality of $G$, the \emph{Sylow}-$p$ subgroup of $G$ is the largest subgroup of $G$ whose cardinality is a power of $p$.
Group isomorphism is denoted by $\cong$. \\

\subsubsection{Group Codes}
Given a group $G$, a group code $\mathds{C}$ over $G$ with block length $n$ is any subgroup of $G^n$. A shifted group code over $G$, $\mathds{C}+B$ is a translation of a group code $\mathds{C}$ by a fixed vector $B\in G^n$. Group codes generalize the notion of linear codes over {fields} to sources with reconstruction alphabets (and channels with input alphabets) having composite sizes.\\

\subsubsection{Achievability for Source Coding and the Rate-Distortion Function}
For a group $G$, a group transmission system with parameters $(n,\Theta,\Delta,\tau)$ for compressing a given source $(\mathcal{X},\mathcal{U}=G,P_{X},d)$ consists of a codebook, an encoding mapping and a decoding mapping. The codebook $\mathds{C}$ is a shifted subgroup of $G^n$ whose size is equal to $\Theta$ and the mappings are defined as %$e:\mathcal{X}^n\rightarrow \{1,2,\cdots,\Theta\}$ and a decoding mapping $g:\mathcal{S}^n\times \{1,2,\cdots,\Theta\}\rightarrow \mathcal{U}^n$
\begin{align*}
&\mbox{\small Enc}:\mathcal{X}^n\rightarrow \{1,2,\ldots,\Theta \},\\
&\mbox{\small Dec}:\{1,2,\ldots,\Theta \} \rightarrow \mathds{C}
\end{align*}
such that% the following condition is met:%$P\left(d(X^n,g(e(X^n)))>\Delta\right)\le \tau$ %
\begin{align*}
P\left[d(X^n,\mbox{\small Dec}(\mbox{\small Enc}(X^n)))>\Delta\right]\le \tau
\end{align*}
where $X^n$ is the random vector of length $n$ generated by the source. In this transmission system, $n$ denotes the block length, $\log \Theta$ denotes the number of ``channel uses'', $\Delta$ denotes the distortion level and $\tau$ is a bound on the probability of exceeding the distortion level $\Delta$.\\
Given a source $(\mathcal{X},\mathcal{U}=G,P_{X},d)$, a pair of non-negative real numbers $(R,D)$ is said to be achievable using group codes if for every $\epsilon>0$ and for all sufficiently large numbers $n$, there exists a group transmission system with parameters $(n,\Theta,\Delta,\tau)$ for compressing the source such that
\begin{align*}
\frac{1}{n}\log \Theta\le R+\epsilon, \qquad \Delta\le D+\epsilon,\qquad \tau\le \epsilon
\end{align*}
The optimal group rate-distortion function $R^*(D)$ of the source is given by the infimum of the rates $R$ such that $(R,D)$ is achievable using group codes.\\

\subsubsection{Achievability for Channel Coding}
For a group $G$, a group transmission system with parameters $(n,\Theta,\tau)$ for reliable communication over a given channel $(\mathcal{X}=G,\mathcal{Y},W_{Y|X})$ consists of a codebook, an encoding mapping and a decoding mapping. The codebook $\mathds{C}$ is a shifted subgroup of $G^n$ whose size is equal to $\Theta$ and the mappings are defined as
\begin{align*}
&\mbox{\small Enc}:\{1,2,\cdots,\Theta\}\rightarrow \mathds{C}\\
&\mbox{\small Dec}:\mathcal{Y}^n\rightarrow\{1,2,\cdots,\Theta\}
\end{align*}
such that
\begin{align*}
%\mathds{E}_{p_S}\left\{\sum_{m=1}^{\Theta}\frac{1}{\Theta}P\left(\mbox{Dec}(Y^n)\ne m|X^n=\mbox{Enc}(S^n,m)\right)\right\}\le \tau
%\int_{s\in\mathcal{S}^n}f_S^n(s)\int_{\substack{y\in\mathcal{Y}^n\\f(y)\ne m}}f_{Y|XS}^n\left(y|e(m),s \right) ds dy\le \tau
\sum_{m=1}^\Theta \frac{1}{\Theta} \sum_{x\in\mathcal{X}^n} \mathds{1}_{\{x=\mbox{\scriptsize Enc}(m)\}} \sum_{y\in\mathcal{Y}^n} \mathds{1}_{\{m\ne \mbox{\scriptsize Dec}(y)\}} W^n(y|x)\le \tau
\end{align*}
Given a channel $(\mathcal{X}=G,\mathcal{Y},W_{Y|X})$, the rate $R$ is said to be achievable using group codes if for all $\epsilon>0$ and for all sufficiently large $n$, there exists a group transmission system for reliable communication with parameters $(n,\Theta,\tau)$ such that
\begin{align*}
\frac{1}{n}\log \Theta \ge R-\epsilon,\qquad\tau\le \epsilon
\end{align*}
The group capacity of the channel $C$ is defined as the supremum of the set of all achievable rates using group codes.\\

\subsubsection{Typicality}
Consider two random variables $X$ and $Y$ with joint probability mass
 function $p_{X,Y}(x,y)$ over $\mathcal{X}\times\mathcal{Y}$. Let $n$ be an integer and $\epsilon$ be a positive real number. The sequence pair $(x^n,y^n)$ belonging to $\mathcal{X}^n\times \mathcal{Y}^n$ is said to be jointly $\epsilon$-typical with respect to $p_{X,Y}(x,y)$ if
\begin{align*}
\nonumber \forall a\in\mathcal{X},\mbox{   }\forall b\in\mathcal{Y}:
\left|\frac{1}{n}N\left(a,b|x^n,y^n\right)-p_{X,Y}(a,b)\right|\le
\frac{\epsilon}{|\mathcal{X}||\mathcal{Y}|}
\end{align*}
and none of the pairs $(a,b)$ with $p_{X,Y}(a,b)=0$ occurs in $(x^n,y^n)$. Here, $N(a,b|x^n,y^n)$ counts the number of occurrences of the pair $(a,b)$ in the sequence pair $(x^n,y^n)$. We denote the set of all jointly $\epsilon$-typical sequence pairs in $\mathcal{X}^n\times \mathcal{Y}^n$ by $A_\epsilon^n(X,Y)$.\\
Given a sequence $x^n\in A_\epsilon^n(X)$, the set of conditionally $\epsilon$-typical sequences $A_\epsilon^n(Y|x^n)$ is defined as
\begin{align*}
A_\epsilon^n(Y|x^n)=\left\{y^n\in \mathcal{Y}^n\left| (x^n,y^n)\in
A_\epsilon^n(X,Y)\right.\right\}
\end{align*}

\subsubsection{Notation}
In our notation, $O(\epsilon)$ is any function of $\epsilon$ such that $\lim_{\epsilon\downarrow 0}O(\epsilon)=0$, $\mathds{P}$ is
the set of all primes, $\mathds{Z}^+$ is the set of positive integers and $\mathds{R}^+$ is the set of non-negative reals.
Since we deal with summations over several groups in this paper, when not clear from the context, we indicate the underlying group in each summation;
e.g. summation over the group $G$ is denoted by $\overbrace{\sum}^{(G)}$. Direct sum of groups is denoted by
$\bigoplus$ and direct product of sets is denoted by $\bigotimes$.

%%%%%%%%%%%%%%%%%%%%%%%%%%%%%%%%%%%%%%%%%%%%%%%%%%%%%%%%%%%%%%%%%%%%%%%%%%%%%%%%%%%%%%%%%%%%%%%%%%%%%%%%%%%%%%%%%%%%%%%

\section{Abelian Group Code Ensemble}\label{section:Abelian}
In this section, we use a standard characterization of Abelian groups and introduce the ensemble of Abelian group codes used in the paper.
\subsection{Abelian Groups}
For an Abelian group $G$, let $\mathcal{P}(G)$ denote the set of all distinct primes which divide $|G|$ and for a prime $p\in\mathcal{P}(G)$ let $S_p(G)$ be the corresponding Sylow subgroup of $G$. It is known \cite[Theorem 3.3.1]{group_hall} that any Abelian group $G$ can be decomposed as a direct sum of its Sylow subgroups in the following manner
\begin{align}\label{eqn:G_decomposition}
G=\bigoplus_{p\in \mathcal{P}(G)} S_p(G)
\end{align}
Furthermore, each Sylow subgroup $S_p(G)$ can be decomposed into $\mathds{Z}_{p^r}$ groups as follows:
\begin{align}\label{eqn:Sp_decomposition}
S_p(G)\cong \bigoplus_{r\in\mathcal{R}_p(G)} \mathds{Z}_{p^r}^{M_{p,r}}
\end{align}
where $\mathcal{R}_p(G)\subseteq \mathds{Z}^+$ and for $r\in\mathcal{R}_p(G)$, $M_{p,r}$ is a positive integer. Note that $\mathds{Z}_{p^r}^{M_{p,r}}$ is defined as the direct sum of the ring $\mathds{Z}_{p^r}$ with itself for $M_{p,r}$ times. Combining Equations \eqref{eqn:G_decomposition} and \eqref{eqn:Sp_decomposition}, we can represent any Abelian group as follows:
\begin{align}\label{eqn:G}
G\cong \bigoplus_{p\in\mathcal{P}(G)} \bigoplus_{r\in\mathcal{R}_p(G)} \mathds{Z}_{p^r}^{M_{p,r}}= \bigoplus_{p\in\mathcal{P}(G)} \bigoplus_{r\in\mathcal{R}_p(G)} \bigoplus_{m=1}^{M_{p,r}} \mathds{Z}_{p^r}^{(m)}
\end{align}
where $\mathds{Z}_{p^r}^{(m)}$ is called the $m\textsuperscript{th}$ $\mathds{Z}_{p^r}$ ring of $G$ or the $(p,r,m)\textsuperscript{th}$ ring of $G$. Equivalently, this can be written as follows
\begin{align*}
G\cong \bigoplus_{(p,r,m)\in\mathcal{G}(G)} \mathds{Z}_{p^r}^{(m)}
\end{align*}
where $\mathcal{G}(G)\subseteq \mathds{P}\times \mathds{Z}^+\times\mathds{Z}^+$ is defined as:
\begin{align*}
\mathcal{G}(G)=\{(p,r,m)\in\mathds{P}\times \mathds{Z}^+\times\mathds{Z}^+| p\in\mathcal{P}(G), r\in\mathcal{R}_p(G), m\in\{1,2,\cdots,M_{p,r}\}  \}
\end{align*}

This means any element $a$ of the Abelian group can be regarded as a vector whose components are indexed by $(p,r,m)\in\mathcal{G}(G)$ and whose $(p,r,m)\textsuperscript{th}$ component $a_{p,r,m}$ takes values from the ring $\mathds{Z}_{p^r}$. With a slight abuse of notation, we represent an element $a$ of $G$ as
\begin{align*}
a=\bigoplus_{(p,r,m)\in\mathcal{G}(G)} a_{p,r,m}
\end{align*}
Furthermore, for two elements $a,b\in G$, we have
\begin{align*}
a+b=\bigoplus_{(p,r,m)\in\mathcal{G}(G)} a_{p,r,m}+_{p^r} b_{p,r,m}
\end{align*}
where $+$ denotes the group operation and $+_{p^r}$ denotes addition mod-$p^r$. More generally, let $a,b,\cdots,z$ be any number of elements of $G$. Then we have
\begin{align}\label{eqn:a_to_z_summation}
a+b+\cdots+z=\bigoplus_{(p,r,m)\in\mathcal{G}(G)} \left(a_{p,r,m}+_{p^r} b_{p,r,m}+_{p^r} \cdots+_{p^r} z_{p,r,m}\right)
\end{align}
This can equivalently be written as
\begin{align*}
\left[a+b+\cdots+z\right]_{p,r,m}= a_{p,r,m}+_{p^r} b_{p,r,m}+_{p^r} \cdots+_{p^r} z_{p,r,m}
\end{align*}
where $[\cdot]_{p,r,m}$ denotes the $(p,r,m)\textsuperscript{th}$ component of it's argument.

Let $\mathbb{I}_{G:p,r,m}\in G$ be a generator for the group which is isomorphic to the $(p,r,m)\textsuperscript{th}$ ring of $G$. Then we have
\begin{align}\label{eqn:generator_representaion}
a=\overbrace{\sum}_{(p,r,m)\in\mathcal{G}(G)}^{(G)} a_{p,r,m} \mathbb{I}_{G:p,r,m}
\end{align}
where the summations are done with respect to the group operation and the multiplication $a_{p,r,m} \mathbb{I}_{G:p,r,m}$ is by definition the summation (with respect to the group operation) of $\mathbb{I}_{G:p,r,m}$ to itself for $a_{p,r,m}$ times. In other words, $a_{p,r,m} \mathbb{I}_{G:p,r,m}$ is the short hand notation for
\begin{align*}
a_{p,r,m} \mathbb{I}_{G:p,r,m}=\overbrace{\sum}_{i\in\{1,\cdots,a_{p,r,m}\}}^{(G)} \mathbb{I}_{G:p,r,m}
\end{align*}
where the summation is the group operation.

\noindent \textbf{Example:} Let $G=\mathds{Z}_4 \oplus
\mathds{Z}_3\oplus\mathds{Z}_9^2$. Then we have
$\mathcal{P}(G)=\{2,3\}$, $S_2(G)=\mathds{Z}_4$ and
$S_3(G)=\mathds{Z}_3\oplus\mathds{Z}_9^2$, $\mathcal{R}_2(G)=\{2\}$,
$\mathcal{R}_3(G)=\{1,2\}$, $M_{2,2}=1$, $M_{3,1}=1$, $M_{3,2}=2$ and
\begin{align*}
\mathcal{G}(G)=\{(2,2,1),(3,1,1),(3,2,1),(3,2,2)\}
\end{align*}
Each element $a$ of $G$ can be represented by a quadruple $(a_{2,2,1},a_{3,1,1},a_{3,2,1},a_{3,2,2})$ where $a_{2,2,1}\in\mathds{Z}_4$, $a_{3,1,1}\in\mathds{Z}_3$ and $a_{3,2,1},a_{3,2,2}\in\mathds{Z}_9$. Finally, we have $\mathbb{I}_{G:2,2,1}=(1,0,0,0)$, $\mathbb{I}_{G:3,1,1}=(0,1,0,0)$, $\mathbb{I}_{G:3,2,1}=(0,0,1,0)$, $\mathbb{I}_{G:3,2,2}=(0,0,0,1)$ so that Equation \eqref{eqn:generator_representaion} holds.

In the following section, we introduce the ensemble of Abelian group codes which we use in the paper.
\subsection{The Image Ensemble}\label{Image_Ensemble}
Recall that for a positive integer $n$, an Abelian group code of length $n$ over the group $G$ is a subgroup of $G^n$. Our ensemble of codes consists of all Abelian group codes over $G$; i.e. we consider all subgroups of $G^n$. We use the following fact to characterize all subgroups of $G^n$:

\begin{lemma}
For an Abelian group $\tilde{G}$, let $\phi:J\rightarrow \tilde{G}$ be a homomorphism from some Abelian group $J$ to $\tilde{G}$. Then $\phi(J)\le \tilde{G}$; i.e. the image of the homomorphism is a subgroup of $\tilde{G}$. Moreover, for any subgroup $\tilde{H}$ of $\tilde{G}$ there exists a corresponding Abelian group $J$ and a homomorphism $\phi:J\rightarrow \tilde{G}$ such that $\tilde{H}=\phi(J)$.
\end{lemma}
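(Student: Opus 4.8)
The plan is to treat the two assertions of the lemma separately: the first is the standard ``image of a group homomorphism is a subgroup'' fact, and the second is an \emph{existence} statement that we can settle with the most economical choice of $J$.

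For the first part, I would verify directly that $\phi(J)$ satisfies the subgroup criterion in $\tilde G$. The set $\phi(J)$ is nonempty because a homomorphism sends identity to identity, so $\phi(0) = 0 \in \phi(J)$. Given $\phi(j_1),\phi(j_2) \in \phi(J)$ with $j_1,j_2 \in J$, the homomorphism property gives $\phi(j_1) + \phi(j_2) = \phi(j_1 + j_2) \in \phi(J)$, so $\phi(J)$ is closed under the group operation; likewise $-\phi(j) = \phi(-j) \in \phi(J)$ handles inverses. Associativity and commutativity are inherited from $\tilde G$. Hence $\phi(J) \le \tilde G$.

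For the second part, I would take $J := \tilde H$, which is itself an Abelian group since any subgroup of an Abelian group is Abelian, and let $\phi := \iota$ be the canonical inclusion $\iota\colon \tilde H \hookrightarrow \tilde G$, $\iota(h) = h$. This $\iota$ is trivially a homomorphism, and $\phi(J) = \iota(\tilde H) = \tilde H$, which is exactly the required conclusion. This already proves the lemma as stated.

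Although the bare inclusion suffices, for the code construction in the sequel it is convenient to have $J$ in a canonical shape. When $\tilde G$ is finite (as in the case $\tilde G = G^n$ that we use), I would additionally invoke the structure theorem for finite Abelian groups to write $\tilde H \cong \bigoplus_{(p,r)} \mathds{Z}_{p^r}^{\ell_{p,r}}$ for suitable nonnegative integers $\ell_{p,r}$, and replace $\iota$ by its precomposition with this isomorphism; the resulting $\phi$ has the same image $\tilde H$ but is now a homomorphism out of a direct sum of $\mathds{Z}_{p^r}$ rings, which is the form needed to describe $\phi$ by a matrix over these rings. There is essentially no obstacle in the argument itself; the only thing to be careful about is that the ``moreover'' clause asks only for \emph{existence} of a pair $(J,\phi)$, so one should resist trying to prove the stronger (and, for a prescribed $J$, false) statement that every subgroup arises as the image of a homomorphism from some fixed group.
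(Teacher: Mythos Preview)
Your proposal is correct and follows essentially the same route as the paper: for the first assertion the paper simply cites a standard textbook theorem while you spell out the subgroup criterion, and for the second assertion both you and the paper take $J$ to be (isomorphic to) $\tilde H$ with $\phi$ the inclusion/isomorphism. Your additional remarks on the structure theorem and on the existential nature of the ``moreover'' clause go slightly beyond the paper's one-line proof but are consonant with how the lemma is used in the sequel.
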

\begin{proof}
The first part of the lemma is proved in \cite[Theorem 12-1]{algebra_bloch}. For the second part, Let $J$ be isomorphic to $\tilde{H}$ and let $\phi$ be the identity mapping (more rigorously, let $\phi$ be the isomorphism between $J$ and $\tilde{H}$).
\end{proof}

In order to use the above lemma to construct the ensemble of subgroups of $G^n$, we need to identify all groups $J$ from which there exist non-trivial homomorphisms to $G^n$. Then the above lemma implies that for each such $J$ and for each homomorphism $\phi:J\rightarrow G^n$, the image of the homomorphism is a group code over $G$ of length $n$ and for each group code $\mathds{C}\le G^n$, there exists a group $J$ and a homomorphism such that $\mathds{C}$ is the image of the homomorphism. This ensemble corresponds to the ensemble of linear codes characterized by their generator matrix when the underlying group is a field of prime size. Note that as in the case of standard ensembles of linear codes, the correspondence between this ensemble and the set of Abelian group codes over $G$ of length $n$ may not be one-to-one.\\

Let $\tilde{G}$ and $J$ be two Abelian groups with decompositions:
\begin{align*}
&\tilde{G}=\bigoplus_{(p,r,m)\in\mathcal{G}(\tilde{G})} \mathds{Z}_{p^r}^{(m)}\\
&J=\bigoplus_{(q,s,l)\in\mathcal{G}(J)} \mathds{Z}_{q^s}^{(l)}
\end{align*}
and let $\phi$ be a homomorphism from $J$ to $\tilde{G}$. For $(q,s,l)\in\mathcal{G}(J)$ and $(p,r,m)\in\mathcal{G}(\tilde{G})$, let
\begin{align*}
g_{(q,s,l)\rightarrow (p,r,m)}=[\phi(\mathbb{I}_{J:q,s,l})]_{p,r,m}
\end{align*}
where $\mathbb{I}_{J:q,s,l}\in J$ is the standard generator for the $(q,s,l)\textsuperscript{th}$ ring of $J$ and $[\phi(\mathbb{I}_{J:q,s,l})]_{p,r,m}$ is the $(p,r,m)\textsuperscript{th}$ component of $\phi(\mathbb{I}_{J:q,s,l})\in\tilde{G}$. For $a=\bigoplus_{(q,s,l)\in\mathcal{G}(J)} a_{q,s,l}\in J$, let $b=\phi(a)$ and write $b=\bigoplus_{(p,r,m)\in\mathcal{G}(\tilde{G})} b_{p,r,m}$. Note that as in Equation \eqref{eqn:generator_representaion}, we can write:
\begin{align*}
a&=\overbrace{\sum}_{(q,s,l)\in\mathcal{G}(J)}^{(J)} a_{q,s,l} \mathbb{I}_{J:q,s,l}\\
&=\overbrace{\sum}_{(q,s,l)\in\mathcal{G}(J)}^{(J)} \overbrace{\sum}_{i\in\{1,\cdots,a_{q,s,l}\}}^{(J)} \mathbb{I}_{J:q,s,l}
\end{align*}
where the summations are the group summations. We have
\begin{align*}
b_{p,r,m}&=\left[\phi(a)\right]_{p,r,m}\\
&=\left[ \phi\left(\overbrace{\sum}_{(q,s,l)\in\mathcal{G}(J)}^{(J)} \overbrace{\sum}_{i\in\{1,\cdots,a_{q,s,l}\}}^{(J)} \mathbb{I}_{J:q,s,l}\right)\right]_{p,r,m}\\
&\stackrel{(a)}{=}\left[\overbrace{\sum}_{(q,s,l)\in\mathcal{G}(J)}^{(\tilde{G})} \overbrace{\sum}_{i\in\{1,\cdots,a_{q,s,l}\}}^{(\tilde{G})}  \phi\left(\mathbb{I}_{J:q,s,l}\right)\right]_{p,r,m}\\
&\stackrel{(b)}{=}\overbrace{\sum}_{(q,s,l)\in\mathcal{G}(J)}^{(\mathds{Z}_{p^r})}  \overbrace{\sum}_{i\in\{1,\cdots,a_{q,s,l}\}}^{(\mathds{Z}_{p^r})}  \left[\phi\left(\mathbb{I}_{J:q,s,l}\right)\right]_{p,r,m}\\
&\stackrel{(c)}{=}\overbrace{\sum}_{(q,s,l)\in\mathcal{G}(J)}^{(\mathds{Z}_{p^r})} a_{q,s,l} \left[\phi\left(\mathbb{I}_{J:q,s,l}\right)\right]_{p,r,m}\\
&=\overbrace{\sum}_{(q,s,l)\in\mathcal{G}(J)}^{(\mathds{Z}_{p^r})} a_{q,s,l} g_{(q,s,l)\rightarrow (p,r,m)}
\end{align*}
Note that $(a)$ follows since $\phi$ is a homomorphism; $(b)$ follows from Equation \eqref{eqn:a_to_z_summation}; and $(c)$ follows by using $a_{q,s,l} \left[\phi\left(\mathbb{I}_{J:q,s,l}\right)\right]_{p,r,m}$ as the short hand notation for the summation of $\left[\phi\left(\mathbb{I}_{J:q,s,l}\right)\right]_{p,r,m}$ to itself for $a_{q,s,l}$ times. %$\overbrace{\sum}_{i\in\{1,\cdots,a_{q,s,l}\}}^{(\mathds{Z}_{p^r})} \left[\phi\left(\mathbb{I}_{J:q,s,l}\right)\right]_{p,r,m}$.\\

Note that $g_{(q,s,l)\rightarrow (p,r,m)}$ represents the effect of the $(q,s,l)\textsuperscript{th}$ component of $a$ on the $(p,r,m)\textsuperscript{th}$ component of $b$ dictated by the homomorphism. This means that the homomorphism $\phi$ can be represented by
\begin{align}\label{eqn:hom_general_form1}
\phi(a)=\bigoplus_{(p,r,m)\in\mathcal{G}(\tilde{G})} \overbrace{\sum}_{(q,s,l)\in\mathcal{G}(J)}^{(\mathds{Z}_{p^r})} a_{q,s,l} g_{(q,s,l)\rightarrow (p,r,m)}
\end{align}
where $a_{q,s,l} g_{(q,s,l)\rightarrow (p,r,m)}$ is the short-hand notation for the mod-$p^r$ addition of $g_{(q,s,l)\rightarrow (p,r,m)}$ to itself for $a_{q,s,l}$ times. We have the following lemma on $g_{(q,s,l)\rightarrow (p,r,m)}$:

\begin{lemma}\label{lemma:g}
For a homomorphism described by \eqref{eqn:hom_general_form1}, we have
\begin{align*}
\begin{array}{ll}
g_{(q,s,l)\rightarrow (p,r,m)}=0&\mbox{If }p\ne q\\
%g_{(q,s,l)\rightarrow (p,r,m)}\in \mathds{Z}_{p^r}&\mbox{If }p= q, r\le s\\
g_{(q,s,l)\rightarrow (p,r,m)}\in p^{r-s}\mathds{Z}_{p^r}&\mbox{If }p= q, r\ge s
\end{array}
\end{align*}
Moreover, any mapping described by \eqref{eqn:hom_general_form1} and satisfying these conditions is a homomorphism.
\end{lemma}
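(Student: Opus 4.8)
The plan is to encode both conditions as a single divisibility fact about the order of $\phi(\mathbb{I}_{J:q,s,l})$ in $\tilde{G}$ and then argue component by component. Since $\mathbb{I}_{J:q,s,l}$ is a generator of the $(q,s,l)$\textsuperscript{th} ring of $J$, its order in $J$ is $q^s$. As $\phi$ is a homomorphism, $q^s\,\phi(\mathbb{I}_{J:q,s,l})=\phi(q^s\,\mathbb{I}_{J:q,s,l})=\phi(0)=0$ in $\tilde{G}$, and since addition in $\tilde{G}$ is componentwise, this forces $q^s\,g_{(q,s,l)\rightarrow(p,r,m)}\equiv 0$ in $\mathds{Z}_{p^r}$ for every $(p,r,m)\in\mathcal{G}(\tilde{G})$; equivalently, the additive order of $g_{(q,s,l)\rightarrow(p,r,m)}$ in $\mathds{Z}_{p^r}$ divides $q^s$. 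If $p\ne q$, this order also divides $p^r=|\mathds{Z}_{p^r}|$, hence divides $\gcd(p^r,q^s)=1$, so $g_{(q,s,l)\rightarrow(p,r,m)}=0$. If $p=q$ and $r\ge s$, write $g_{(q,s,l)\rightarrow(p,r,m)}=p^{k}v$ with $v$ coprime to $p$ and $0\le k\le r$ (taking $k=r$, $v=1$ as the degenerate case $g=0$); then its order is $p^{r-k}$, and $p^{r-k}\mid p^{s}$ gives $k\ge r-s$, i.e.\ $g_{(q,s,l)\rightarrow(p,r,m)}\in p^{r-s}\mathds{Z}_{p^r}$. This settles the forward direction.

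For the converse, let $\phi$ be defined by \eqref{eqn:hom_general_form1} with coefficients $g_{(q,s,l)\rightarrow(p,r,m)}$ obeying the two conditions. The first point to settle is that \eqref{eqn:hom_general_form1} is actually well defined: the term $a_{q,s,l}\,g_{(q,s,l)\rightarrow(p,r,m)}$, being an $a_{q,s,l}$-fold sum modulo $p^r$, must be insensitive to replacing the representative $a_{q,s,l}\in\mathds{Z}_{q^s}$ by $a_{q,s,l}+q^s$, which is exactly the requirement $q^s\,g_{(q,s,l)\rightarrow(p,r,m)}\equiv 0 \pmod{p^r}$. This holds in all three cases: trivially when $p\ne q$ (as $g_{(q,s,l)\rightarrow(p,r,m)}=0$); when $p=q$, $r\ge s$ because $p^{s}\cdot p^{r-s}\mathds{Z}_{p^r}=p^{r}\mathds{Z}_{p^r}=\{0\}$; and when $p=q$, $r<s$ automatically since $p^{r}\mid p^{s}$. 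With well-definedness in hand, fix $(p,r,m)$; for each $(q,s,l)$ the same independence of representative gives $(a_{q,s,l}+_{q^s}b_{q,s,l})\,g_{(q,s,l)\rightarrow(p,r,m)}\equiv a_{q,s,l}\,g_{(q,s,l)\rightarrow(p,r,m)}+_{p^r}b_{q,s,l}\,g_{(q,s,l)\rightarrow(p,r,m)}\pmod{p^r}$, and summing over $(q,s,l)$ yields $[\phi(a+b)]_{p,r,m}=[\phi(a)]_{p,r,m}+_{p^r}[\phi(b)]_{p,r,m}$. Since this holds for all $(p,r,m)\in\mathcal{G}(\tilde{G})$, we conclude $\phi(a+b)=\phi(a)+\phi(b)$, so $\phi$ is a homomorphism.

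I expect the only genuinely delicate step to be the well-definedness check in the converse: it is easy to overlook that \eqref{eqn:hom_general_form1} a priori only pins down $\phi$ up to the chosen representatives of the $a_{q,s,l}$, and the condition $g_{(q,s,l)\rightarrow(p,r,m)}\in p^{r-s}\mathds{Z}_{p^r}$ for $r\ge s$ is precisely what makes the formula consistent. The remaining ingredients — the order computation in the forward direction and the componentwise additivity in the converse — are routine arithmetic in $\mathds{Z}_{p^r}$, provided one keeps track of the degenerate cases $g_{(q,s,l)\rightarrow(p,r,m)}=0$ and $r<s$, which the statement wisely declines to commit to.
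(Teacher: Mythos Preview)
Your proof is correct and follows essentially the same route as the paper: both directions hinge on the identity $q^{s}\,g_{(q,s,l)\rightarrow(p,r,m)}\equiv 0 \pmod{p^r}$, used in the forward direction via the order of $\mathbb{I}_{J:q,s,l}$ and in the converse to reconcile $a_{q,s,l}+_{q^s}b_{q,s,l}$ with integer addition modulo $p^r$ in the same three cases ($p\ne q$; $p=q,\ r\le s$; $p=q,\ r\ge s$). Your explicit framing of the converse as a well-definedness check is a nice conceptual touch the paper leaves implicit, but the underlying arithmetic is identical.
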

\begin{proof}
The proof is provided in Appendix \ref{section:Proof_Image_Ensemble}.
\end{proof}

This lemma implies that in order to construct a subgroup of $\tilde{G}$, we only need to consider homomorphisms from an Abelian group $J$ to $\tilde{G}$ such that %that for two Abelian groups $\tilde{G}$ and $J$ if $\mathcal{P}(\tilde{G})$ and $\mathcal{P}(J)$ are disjoint, then there does not exist non-trivial homomorphisms between the two groups. It further implies
\begin{align*}
\mathcal{P}(J)\subseteq \mathcal{P}(\tilde{G})
\end{align*}
since if for some $(q,s,l)\in\mathcal{G}(J)$, $q\notin \mathcal{P}(\tilde{G})$ then $\phi(a)$ would not depend on $a_{q,s,l}$. For $p\in\mathcal{P}(\tilde{G})$, define
\begin{align}\label{eqn:r_p}
r_{p}=\max \mathcal{R}_p(G)%\max_{(p,r',m')\in\mathcal{G}(\tilde{G})} r'
\end{align}
We show that we can restrict ourselves to $J$'s such that for all $(q,s,l)\in\mathcal{G}(J)$, $s\le r_q$. % For $(q,s,l)\in\mathcal{G}(J)$, assume $s>r_p$. Then for all $(p,r,m)\in\mathcal{G}(\tilde{G})$, if $p=q$, we have $s>r$.
%Recall that
%\begin{align*}
%\phi(a)=\bigoplus_{(p,r,m)\in\mathcal{G}(\tilde{G})} \overbrace{\sum}_{(q,s,l)\in\mathcal{G}(J)}^{\mathds{Z}_{p^r}} a_{q,s,l} g_{(q,s,l)\rightarrow (p,r,m)}
%\end{align*}
Let $(p,r,m)\in\mathcal{G}(\tilde{G})$ be such that $p=q$. Since $g_{(q,s,l)\rightarrow (p,r,m)}\in\mathds{Z}_{p^r}$ and $r\le r_q$, we have
\begin{align*}
\left(a_{q,s,l} g_{(q,s,l)\rightarrow (p,r,m)}\right) \pmod{p^r}&=\left(\left(a_{q,s,l} \right) \pmod{p^r} g_{(q,s,l)\rightarrow (p,r,m)}\right) \pmod{p^r}\\
&=\left(\left(a_{q,s,l} \right) \pmod{p^{r_q}} g_{(q,s,l)\rightarrow (p,r,m)}\right) \pmod{p^r}
\end{align*}
This implies that for all $a\in J$ and all $(q,s,l)\in\mathcal{G}(J)$, in the expression for the $(p,r,m)\textsuperscript{th}$ component of $\phi(a)$ with $p=q$, $a_{q,s,l}$ appears as $\left(a_{q,s,l} \right) \pmod{q^{r_q}}$. Therefore, it suffices for $a_{q,s,l}$ to take values from $\mathds{Z}_{q^{r_q}}$ and this happens if $s\le r_q$.\\
%and this is valid for all ${q,s,l}\in\mathcal{G}(J)$. Therefore we only need to consider $J$'s such that for $a\in J$ and for $(q,s,l)\in\mathcal{G}(J)$, $a_{q,s,l}\in \mathds{Z}_{p^r}$ for some $(p,r,m)\in\mathcal{G}(\tilde{G})$. This implies that we only need to consider the case where $s\le r_p$.\\

To construct Abelian group codes of length $n$ over $G$, let $\tilde{G}=G^n$. we have
\begin{align}\label{eqn:Gn}
G^n&\cong\bigoplus_{p\in\mathcal{P}(G)} \bigoplus_{r\in\mathcal{R}_p} \mathds{Z}_{p^r}^{n M_{p,r}} =\bigoplus_{p\in\mathcal{P}(G)} \bigoplus_{r\in\mathcal{R}_p} \bigoplus_{m=1}^{n M_{p,r}} \mathds{Z}_{p^r}^{(m)}
=\bigoplus_{(p,r,m)\in\mathcal{G}(G^n)} \mathds{Z}_{p^r}^{(m)}
\end{align}

Define $J$ as
\begin{align}\label{eqn:J}
J=\bigoplus_{q\in\mathcal{P}(G)} \bigoplus_{s=1}^{r_q} \mathds{Z}_{q^s}^{k_{q,s}} =\bigoplus_{q\in\mathcal{P}(G)} \bigoplus_{s=1}^{r_q} \bigoplus_{l=1}^{k_{q,s}} \mathds{Z}_{q^s}^{(l)} =\bigoplus_{(q,s,l)\in\mathcal{G}(J)} \mathds{Z}_{q^s}^{(l)}
\end{align}
for some positive integers $k_{q,s}$.

\noindent \textbf{Example:} Let $G=\mathds{Z}_8 \oplus \mathds{Z}_9
\oplus \mathds{Z}_5$. Then we have
\[
J=\mathds{Z}_2^{k_{2,1}} \oplus \mathds{Z}_4^{k_{2,2}} \oplus
\mathds{Z}_8^{k_{2,3}} \oplus \mathds{Z}_3^{k_{3,1}} \oplus
\mathds{Z}_9^{k_{3,2}} \oplus \mathds{Z}_5^{k_{5,1}}
\]

Define
\begin{align*}
k=\sum_{q\in \mathcal{P}(G)} \sum_{s=1}^{r_q} k_{q,s}
\end{align*}
and $w_{q,s}=\frac{k_{q,s}}{k}$ for $q\in\mathcal{P}(G)$ and $s=1,\cdots,r_q$ so that we can write
\begin{align}\label{eqn:J2}
J=\bigoplus_{q\in\mathcal{P}(G)} \bigoplus_{s=1}^{r_q} \bigoplus_{l=1}^{k w_{q,s}} \mathds{Z}_{q^s}^{(l)}
\end{align}
for some constants $w_{q,s}$ adding up to one.

The ensemble of Abelian group encoders consists of all mappings $\phi:J\rightarrow G^n$ of the form% Let $k=\sum_{q\in\mathcal{P}(G)} \sum_{s=1}^{r_q} k_{q,s}$ and define $w_{q,s}=\frac{k_{q,s}}{k}$.
\begin{align}\label{eqn:phi}
\phi(a)=\bigoplus_{(p,r,m)\in\mathcal{G}(G^n)} \overbrace{\sum}_{(q,s,l)\in\mathcal{G}(J)}^{(\mathds{Z}_{p^r})} a_{q,s,l} g_{(q,s,l)\rightarrow (p,r,m)}
\end{align}
for $a\in J$ where $g_{(q,s,l)\rightarrow (p,r,m)}=0$ if $p\ne q$, $g_{(q,s,l)\rightarrow (p,r,m)}$ is a uniform random variable over $\mathds{Z}_{p^r}$ if $p=q, r\le s$, and $g_{(q,s,l)\rightarrow (p,r,m)}$ is a uniform random variable over $p^{r-s}\mathds{Z}_{p^r}$ if $p=q, r\ge s$. The corresponding shifted group code is defined by
\begin{align}\label{eqn:Code}
\mathds{C}=\{\phi(a)+B|a\in J\}
\end{align}
where $B$ is a uniform random variable over $G^n$. The rate of this code is given by
\begin{align}\label{eqn:rate}
R&=\frac{1}{n}\log |J|=\frac{k}{n}\sum_{q\in\mathcal{P}(G)} \sum_{s=1}^{r_q} s w_{q,s} \log q
\end{align}

\begin{remark}
An alternate approach to constructing Abelian group codes is to consider kernels of homomorphisms (the kernel ensemble). To construct the ensemble of Abelian group codes in this manner, let $\phi$ be a homomorphism from $J$ into $G^n$ such that for $a\in G^n$,
\begin{align*}
\phi(a)=\bigoplus_{(q,s,l)\in\mathcal{G}(J)} \overbrace{\sum}_{(p,r,m)\in\mathcal{G}(G^n)}^{(\mathds{Z}_{q^s})} a_{p,r,m} g_{(p,r,m)\rightarrow (q,s,l)}
\end{align*}
where $g_{(p,r,m)\rightarrow (q,s,l)}=0$ if $q\ne p$, $g_{(p,r,m)\rightarrow (q,s,l)}$ is a uniform random variable over $\mathds{Z}_{q^s}$ if $q=p, s\le r$, and $g_{(p,r,m)\rightarrow (q,s,l)}$ is a uniform random variable over $p^{s-r}\mathds{Z}_{q^s}$ if $q=p, s\ge r$. The code is given by $\mathds{C}=\{a\in G^n|\phi(a)=c\}$ where $c$ is a uniform random variable over $J$.\\
In this paper, we use the image ensemble for both the channel and the source coding problem; however, similar results can be derived using the kernel ensemble as well.
\end{remark}

%%%%%%%%%%%%%%%%%%%%%%%%%%%%%%%%%%%%%%%%%%%%%%%%%%%%%%%%%%%%%%%%%%%%%%%%%%%%%%%%%%%%%%%%%%%%%%%%%%%%%%%%%

\section{Main Results}\label{section:main_results}
In this section, we provide  an upper bound on the optimal
rate-distortion function for a given source  and a lower bound on the
capacity of a given  channel using group codes when the underlying group
is an arbitrary Abelian group represented by Equation
\eqref{eqn:G}.
We start by defining seven objects and then state two theorems
using these objects, and  finally provide an interpretation of the results and these objects
with two examples.

\subsection{Definitions}

For $q\in\mathcal{P}(G)$, let
$\mathcal{S}_q(G)=\{1,2,\cdots,r_q\}$ where $r_q$ is defined as
\begin{align}\label{eqn:r_p2}
r_q=\max \mathcal{R}_q(G)
\end{align}
Define
\begin{align}\label{eqn:S_G}
\mathcal{S}(G)=\{(q,s)|q\in\mathcal{P}(G), s\in\mathcal{S}_{q}(G)\}%\bigotimes_{q\in\mathcal{P}(G)} \mathcal{S}_q(G)
\end{align}
\begin{align}\label{eqn:Q_G}
\mathcal{Q}(G)=\{(p,r)|p\in\mathcal{P}(G),r\in\mathcal{R}_p(G)\}
\end{align}
We denote vectors $\hat{\theta}$ and $w$ whose components are indexed by $(q,s)\in\mathcal{S}(G)$ by $(\hat{\theta}_{q,s})_{(q,s)\in\mathcal{S}(G)}$
and $(w_{q,s})_{(q,s)\in\mathcal{S}(G)}$ respectively and a vector $\theta$ whose components are indexed by $(p,r) \in\mathcal{Q}(G)$
by $(\theta_{p,r})_{(p,r)\in\mathcal{Q}(G)}$. For $\hat{\theta}=(\hat{\theta}_{q,s})_{(q,s)\in\mathcal{S}(G)}$, define
\begin{align*}
\pmb{\theta}(\hat{\theta})=\left(\min_{\substack{(q,s)\in\mathcal{S}(G)\\q=p\\w_{q,s}\ne 0}} |r-s|^+ +\hat{\theta}_{q,s}\right)_{(p,r)\in \mathcal{Q}(G)}
\end{align*}
and let
\begin{align}\label{eqn:Theta_S}
\Theta(w)=\left\{\pmb{\theta}(\hat{\theta}) \left|(\hat{\theta}_{q,s})_{(q,s)\in\mathcal{S}(G)}:0 \le \hat{\theta}_{q,s}\le s\right.\right\}
\end{align}
This set corresponds to a collection of subgroups of $G$ which appear in the rate-distortion function. In other words, depending on the weights $w$, certain
subgroups of the group become important in the rate-distortion function. This will be clarified in the proof of the theorem. For $\theta\in\Theta(w)$, define
 \begin{align}\label{eqn:omega_theta}
\omega_{\theta} &=\frac{\sum_{(q,s)\in\mathcal{S}} \displaystyle{\max_{\substack{(p,r)\in\mathcal{Q}(G)\\p=q}}} \left(\theta_{p,r}-|r-s|^+\right)^+ w_{q,s} \log q} {\sum_{(q,s)\in\mathcal{S}} s w_{q,s} \log q}
\end{align}
and let $H_{\theta}$ be a subgroup of $G$ defined as
\begin{align}\label{eqn:H_theta}
H_{\theta}=\bigoplus_{(p,r,m)\in\mathcal{G}(G)} p^{\theta_{p,r}} \mathds{Z}_{p^r}^{(m)}
\end{align}
Let $X$ and $U$ be jointly distributed random variables such that $U$ is uniform over $G$ and let $[U]_{\theta}=U+H_{\theta}$ be a random variable taking values from the cosets of $H_{\theta}$ in $G$. We define \emph{the source coding group mutual information} between $U$ and $X$ as
\begin{align}\label{eqn:Rate_Source}
I_{s.c.}^{G}(U;X)=\min_{\substack{w_{q,s}, (q,s)\in\mathcal{S}(G)\\\sum w_{q,s}=1}} \max_{\substack{\theta\in\Theta(w)\\\theta\ne \pmb{0}}} \frac{1}{\omega_{\theta}} I([U]_{\theta};X)
\end{align}
where $\pmb{0}$ is a vector whose components are indexed by $(p,r)\in\mathcal{Q}(G)$ and whose $(p,r)\textsuperscript{th}$ component is equal to $0$.

Let $X$ and $Y$ be jointly distributed random variables such that $X$
is uniform over $G$ and let $[X]_{\theta}=X+H_{\theta}$ be a random
variable taking values from the cosets of $H_{\theta}$ in $G$. We
define \emph{the channel coding group mutual information} between $X$
and $Y$ as
\begin{align}\label{eqn:Rate_Channel}
I_{c.c.}^{G}(X;Y)=\max_{\substack{w_{q,s}, (q,s)\in\mathcal{S}(G)\\\sum w_{q,s}=1}} \min_{\substack{\theta\in\Theta(w)\\\theta\ne \pmb{r}}} \frac{1}{1-\omega_{\theta}} I(X;Y|[X]_{\theta})
\end{align}
where $\pmb{r}$ is a vector whose components are indexed by $(p,r)\in\mathcal{Q}(G)$ and whose $(p,r)\textsuperscript{th}$ component is equal to $r$.

\subsection{Main Results}

The following theorem is the first main result of this paper.
\begin{theorem}
For a source $(\mathcal{X},\mathcal{U}=G,p_X,d)$ and a given
distortion level $D$, let $p_{XU}$ be a joint distribution over
$\mathcal{X}\times \mathcal{U}$ such that its first marginal is equal
to the source distribution $p_X$, its second marginal $p_U$ is uniform
over $\mathcal{U}=G$ and such that $\mathds{E}\{d(X,U)\}\le D$.
Then the rate-distortion pair $(R,D)$ is achievable where $R=I_{s.c.}^{G}(U;X)$.
\end{theorem}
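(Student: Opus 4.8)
The plan is to use the image ensemble of shifted Abelian group codes described in Section~\ref{section:Abelian}, with a joint-typicality encoder, and to show that the expected distortion and probability of excess distortion over the ensemble are well-controlled whenever the rate exceeds $I_{s.c.}^G(U;X)$. Fix a joint distribution $p_{XU}$ with the stated marginals and $\mathds{E}\{d(X,U)\}\le D$. Given a source realization $x^n$, the encoder searches the codebook $\mathds{C}=\{\phi(a)+B : a\in J\}$ for a codeword $u^n$ with $(x^n,u^n)\in A_\epsilon^n(X,U)$; if one exists it sends the corresponding index, otherwise it declares an error. By the strong typicality properties, conditioned on finding a jointly typical codeword the per-letter distortion is at most $D+O(\epsilon)$ with high probability, so the entire analysis reduces to showing that the \emph{covering} event — existence of at least one jointly typical codeword — holds with probability approaching $1$ when $\frac1n\log|J| \ge I_{s.c.}^G(U;X)+\epsilon$.

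The core is a second-moment (Chebyshev) argument on $N(x^n) = \sum_{a\in J}\mathds{1}\{(x^n,\phi(a)+B)\in A_\epsilon^n(X,U)\}$. First I would note that, because $B$ is uniform over $G^n$ and $U$ is uniform over $G$, each individual codeword $\phi(a)+B$ is uniform over $G^n$, so $\mathds{E}[N(x^n)] \doteq |J|\cdot 2^{-n I(X;U)}$ is exponentially large when $R > I(X;U)$. The difficulty — and the reason the rate expression is not simply $I(X;U)$ — is the variance: for a pair $a,a'\in J$, the joint distribution of $(\phi(a)+B,\phi(a')+B)$ is \emph{not} independent; it is uniform over a coset of the subgroup $\{(\phi(a)+B,\phi(a')+B)\}$, and the relevant correlation is governed by the difference $a-a'$, which lives in some subgroup of $J$. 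Here one uses Lemma~\ref{lemma:g}: the pairwise joint distribution of two codewords, as a function of the ``type'' of $a-a'$, is uniform on $G^n$ crossed with a coset of a subgroup $H_\theta$ of $G^n$ of a specific index, and the set of such subgroups that can arise is exactly $\Theta(w)$, with the fraction of message pairs $(a,a')$ sharing a given $\theta$-behavior controlled by the weights $w_{q,s}$ via the quantity $\omega_\theta$ in \eqref{eqn:omega_theta}. Summing the pairwise collision probabilities over all $\theta\in\Theta(w)$, $\theta\ne\pmb 0$, the variance-to-mean-squared ratio vanishes provided that for every such $\theta$,
\begin{align*}
R\,\omega_\theta \;\ge\; I([U]_\theta;X) + \epsilon' ,
\end{align*}
i.e. $R \ge \frac{1}{\omega_\theta} I([U]_\theta;X)$ for all $\theta\ne\pmb0$. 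The $\theta=\pmb0$ (full-group) term is precisely the mean term and only requires $R>I(X;U)$, which is dominated. Taking the best choice of weights $w$ gives the condition $R \ge \max_{\theta\ne\pmb0} \frac{1}{\omega_\theta}I([U]_\theta;X)$ minimized over $w$, which is exactly $I_{s.c.}^G(U;X)$.

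The main obstacle I anticipate is the bookkeeping in the variance computation: correctly identifying, via Lemma~\ref{lemma:g} and the structure \eqref{eqn:phi} of $\phi$, the joint law of a pair of codewords as a function of the component-wise orders of $a-a'$ in $J$, and then organizing the resulting exponential sum so that it collapses into the finite maximization over $\theta\in\Theta(w)$. In particular one must check that each attainable pairwise correlation subgroup is of the form $H_\theta$ for some $\theta\in\Theta(w)$ — this is where the $\pmb\theta(\hat\theta)$ map and the $|r-s|^+$ shifts enter — and that the number of message pairs realizing a given $\theta$ has exponent $2nR(1+\omega_\theta) \cdot \tfrac12$ in the appropriate sense, or more precisely that the ``bad'' contribution to $\mathds{E}[N^2]$ from pairs of type $\theta$ scales like $|J|^2 \, 2^{-n(I([U]_\theta;X) + (1-\omega_\theta)\,\text{rate-of-the-shared-part})}$. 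Once this combinatorial identification is in place, a standard Chebyshev bound plus the Markov bound on distortion, followed by expurgation to extract a single good codebook, completes the achievability argument; letting $\epsilon\downarrow0$ yields the theorem.
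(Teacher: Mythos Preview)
Your proposal is correct and follows essentially the same route as the paper: random coding over the image ensemble with a uniform dither, joint-typicality encoding, and a Chebyshev second-moment bound on the number of jointly typical codewords, with the variance decomposed according to the subgroup $H_\theta$ determined by the ``type'' of $a-a'$ (the paper carries this out via Lemmas~\ref{lemma:Joint_Prob}, \ref{lemma:Joint_Prob_T}, \ref{lemma:Theta_S}, and \ref{dinesh_lemma}). One small slip to fix when you write it out: the $\theta=\pmb 0$ term corresponds to $H_{\pmb 0}=G$ and is precisely the term that equals $\mathds{E}[N]^2$ and cancels in the variance; the symmetric-rate condition $R>I(U;X)$ arises instead at $\theta=\pmb r$, where $H_{\pmb r}=\{0\}$, $[U]_{\pmb r}=U$, and $\omega_{\pmb r}=1$.
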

\begin{proof}
The proof is provided in Section \ref{section:ErrAnalysis_General_Source}.
\end{proof}

When the underlying group is a $\mathds{Z}_{p^r}$ ring, this result can be simplified. We state this result in the form of a corollary:

\begin{cor}\label{cor1}
Let $X$, $U$ be jointly distributed random variables such that $U$ is uniform over $\mathcal{U}=G=\mathds{Z}_{p^r}$ for some prime $p$ and positive integer $r$. For $\theta=1,2,\cdots,r$, let $H_{\theta}$ be a subgroup of $\mathds{Z}_{p^r}$ defined by $H_{\theta}=p^{\theta}\mathds{Z}_{p^r}$ and let $[U]_{\theta}=U+H_{\theta}$. Then,
\begin{align*}
I_{s.c.}^{G}(U;X)=\max_{\theta=1}^r \frac{r}{\theta} I([U]_{\theta};X)
\end{align*}

%For a source $(\mathcal{X},\mathcal{U}=\mathds{Z}_{p^r},p_X,d)$ and a
%given distortion level $D$, let $p_{XU}$ be a joint distribution over
%$\mathcal{X}\times \mathcal{U}$ such that its first marginal is equal
%to the source distribution $p_X$, its second marginal $p_U$ is
%uniform over $\mathcal{U}=\mathds{Z}_{p^r}$ and such that
%$\mathds{E}\{d(X,U)\}\le D$. For $\theta=1,2,\cdots,r$, let
%$H_{\theta}$ be a subgroup of $G$ defined by
%$H_{\theta}=p^{\theta}\mathds{Z}_{p^r}$ and let
%$[U]_{\theta}=U+H_{\theta}$. The rate-distortion pair $(R,D)$ is
%achievable where
%\begin{align}\label{eqn:Rate_Source}
%R= \max_{\theta=1}^r \frac{1}{\theta} I([U]_{\theta};X)
%\end{align}
\end{cor}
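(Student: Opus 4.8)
The plan is to unwind the seven quantities appearing in \eqref{eqn:Rate_Source} in the special case $G=\mathds{Z}_{p^r}$, where they collapse drastically, and then prove the claimed identity by matching an upper and a lower bound on $I_{s.c.}^{G}(U;X)$. First I would record the specialization: here $\mathcal{P}(G)=\{p\}$, $\mathcal{R}_p(G)=\{r\}$ and $M_{p,r}=1$, so $\mathcal{Q}(G)=\{(p,r)\}$ is a \emph{single} pair while $\mathcal{S}(G)=\{(p,1),\dots,(p,r)\}$. Consequently the vector $\theta$ reduces to a single integer $\theta\in\{0,1,\dots,r\}$, the weight vector $w$ becomes an ordinary probability vector $(w_1,\dots,w_r)$ supported on $\{1,\dots,r\}$, the subgroup $H_\theta$ of \eqref{eqn:H_theta} equals $p^\theta\mathds{Z}_{p^r}$ as in the statement, and since $s\le r$ always we have $|r-s|^+=r-s$. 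In \eqref{eqn:omega_theta} the factor $\log p$ cancels and the inner maximum is over the one element of $\mathcal{Q}(G)$, leaving
\begin{align*}
\omega_\theta=\frac{\sum_{s=1}^{r}(\theta-r+s)^+w_s}{\sum_{s=1}^{r}s\,w_s},
\end{align*}
whose denominator is $\ge1$, so $\omega_\theta$ is well defined.

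Next I would identify $\Theta(w)$. Writing $s_{\max}=\max\{s:w_s\ne0\}$, as $\hat\theta_{p,s}$ ranges over $\{0,\dots,s\}$ the number $(r-s)+\hat\theta_{p,s}$ sweeps $\{r-s,\dots,r\}$, and taking the minimum over the active indices $s$ shows that $\Theta(w)$ is precisely the integer interval $\{r-s_{\max},\dots,r\}$; in particular $r\in\Theta(w)$, so the inner maximization in \eqref{eqn:Rate_Source} is over a nonempty set. This yields the upper bound at once: for the weight vector $w_r=1$ we get $s_{\max}=r$, $\Theta(w)=\{0,1,\dots,r\}$ and $\omega_\theta=\theta/r$, so the inner maximum in \eqref{eqn:Rate_Source} evaluated at this $w$ equals $\max_{\theta=1}^{r}\frac r\theta I([U]_\theta;X)$; since $I_{s.c.}^{G}(U;X)$ is the infimum over $w$, we conclude $I_{s.c.}^{G}(U;X)\le\max_{\theta=1}^{r}\frac r\theta I([U]_\theta;X)$.

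For the matching lower bound I would fix an arbitrary admissible $w$ and show the inner maximum is at least $\max_{\theta=1}^{r}\frac r\theta I([U]_\theta;X)$. Two elementary facts drive this. (i) For integers $1\le s,\theta\le r$ one has $r(\theta-r+s)^+\le\theta s$: this is trivial when $\theta+s\le r$, and when $\theta+s>r$ it rearranges to $(r-s)(\theta-r)\le0$, which holds since $s\le r$ and $\theta\le r$; summing these inequalities weighted by $w_s$ gives $\omega_\theta\le\theta/r$, hence $1/\omega_\theta\ge r/\theta$ whenever $\omega_\theta>0$. (ii) Since $p^{\theta'}\mathds{Z}_{p^r}\subseteq p^{\theta}\mathds{Z}_{p^r}$ for $\theta'\ge\theta$, the coset $[U]_\theta$ is a deterministic function of $[U]_{\theta'}$, so the data-processing inequality makes $\theta\mapsto I([U]_\theta;X)$ nondecreasing. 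Now let $\theta^*\in\{1,\dots,r\}$ attain $\max_{\theta=1}^{r}\frac r\theta I([U]_\theta;X)$; the claim is trivial if this maximum is $0$, so assume it is positive, whence $I([U]_{\theta^*};X)>0$. If $\theta^*\ge r-s_{\max}$, then $\theta^*\in\Theta(w)$ with $\theta^*\ne0$, and by (i) the $\theta=\theta^*$ term of the inner maximum is $\ge\frac r{\theta^*}I([U]_{\theta^*};X)$, as wanted. If instead $\theta^*<r-s_{\max}$, then $r-s_{\max}>\theta^*\ge1$ is an admissible index in $\Theta(w)$; there every active $s$ satisfies $s\le s_{\max}$, so each summand $(s-s_{\max})^+$ in the numerator of $\omega_{r-s_{\max}}$ vanishes, giving $\omega_{r-s_{\max}}=0$, while by (ii) $I([U]_{r-s_{\max}};X)\ge I([U]_{\theta^*};X)>0$, so the corresponding term of the inner maximum is $+\infty$ and dominates. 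Taking the infimum over $w$ completes the lower bound, and together with the upper bound this proves the corollary.

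I expect the only delicate point to be the handling of terms with $\omega_\theta=0$: because $\Theta(w)$ need not contain all of $\{1,\dots,r\}$, the two maxima cannot be compared index by index, and fact (ii) — together with reading $\frac1{\omega_\theta}I([U]_\theta;X)$ as $+\infty$ when $\omega_\theta=0<I([U]_\theta;X)$ — is exactly what bridges the gap. Everything else is routine bookkeeping with the definitions.
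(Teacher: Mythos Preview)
Your proof is correct and tracks the paper's closely: you both identify $\Theta(w)=\{r-s_{\max},\dots,r\}$ (the paper writes $\tilde r$ for your $s_{\max}$), both exploit the convention $1/0=\infty$ to discard weight vectors with $s_{\max}<r$, and both verify that the choice $w_r=1$ hits the target $\omega_\theta=\theta/r$. The organizational difference is that the paper first proves the sharper inequality $\omega_\theta\le(\theta+\tilde r-r)/\tilde r$ and uses it to reduce the outer minimization to the one-parameter family of point masses $w_{\tilde r}=1$, whereas you prove the weaker (but sufficient) bound $\omega_\theta\le\theta/r$ directly via the elementary inequality $r(\theta-r+s)^+\le\theta s$ and then close the argument with the data-processing monotonicity of $\theta\mapsto I([U]_\theta;X)$, which the paper does not invoke in this corollary. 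Your treatment of the edge case where the target maximum vanishes is also more careful than the paper's, which tacitly reads $\frac{\tilde r}{0}I([U]_{r-\tilde r};X)=\infty$ without checking that the mutual information is positive.
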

\begin{proof}
The proof is provided in Section \ref{section:simplification}.
\end{proof}

The following theorem is the second main result of this paper.
\begin{theorem}
For a channel $(\mathcal{X}=G,\mathcal{Y},W_{Y|X})$, the rate $R=I_{c.c.}^{G}(X;Y)$ is achievable using group codes over $G$.
\end{theorem}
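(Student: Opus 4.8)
The plan is to establish the achievability of $R = I_{c.c.}^{G}(X;Y)$ by a random coding argument over the image ensemble of Section \ref{Image_Ensemble}, using joint typicality encoding and decoding, in parallel with the source coding proof. Fix the capacity-achieving weight vector $w = (w_{q,s})$ attaining the outer maximum in \eqref{eqn:Rate_Channel}, and use it to define the parameters $k_{q,s} = k w_{q,s}$ of the group $J$ in \eqref{eqn:J2}; choose $k$ so that $\frac1n \log|J| \ge R - \epsilon$. Draw a random homomorphism $\phi : J \to G^n$ and a random shift $B \in G^n$ as in \eqref{eqn:phi}–\eqref{eqn:Code}, so the codebook is $\mathds{C} = \{\phi(a) + B : a\in J\}$; index the messages by $a \in J$. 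The encoder sends $x^n = \phi(a) + B$ for message $a$, and the decoder, on receiving $y^n$, looks for the unique $\hat a \in J$ with $(\phi(\hat a)+B, y^n) \in A_\epsilon^n(X,Y)$, declaring an error otherwise. Because $U = X$ is chosen uniform over $G$, the pairwise marginal distribution of any codeword $\phi(a)+B$ is uniform over $G^n$, so a fixed transmitted codeword and its output are jointly typical with high probability by the law of large numbers.

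The heart of the argument is bounding the probability that some incorrect $\hat a \ne a$ is jointly typical with $y^n$. The union bound over $\hat a$ must be organized by the subgroup structure: the difference $\phi(\hat a) - \phi(a) = \phi(\hat a - a)$ ranges over the subgroup $\phi(J)$, and competing codewords $\phi(\hat a)+B$ that share a given "coset type" relative to the subgroups $H_\theta$ of \eqref{eqn:H_theta} must be counted together. Concretely, I would partition the nonzero elements $\hat a - a \in J$ according to which of the subgroups $H_\theta$ (indexed by $\theta \in \Theta(w)$) the induced codeword difference falls into — i.e. by the largest such $H_\theta$ containing it. For each $\theta$, the number of competing codewords of that type is essentially $|J| \cdot \omega_\theta$ on the exponential scale (the quantity $\omega_\theta$ in \eqref{eqn:omega_theta} is exactly the normalized log-size of the relevant sub-ensemble), while the probability that such a codeword is jointly typical with $y^n$ — conditioned on the coset $[X]_\theta$ — decays like $2^{-n I(X;Y\,|\,[X]_\theta)}$. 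Hence the $\theta$-indexed contribution to the error probability is controlled provided
\begin{align*}
\omega_\theta R < I(X;Y \mid [X]_\theta) \qquad \text{for every } \theta \in \Theta(w),\ \theta \ne \pmb r,
\end{align*}
equivalently $R < \min_{\theta \ne \pmb r} \frac{1}{\,\text{(something)}\,}$; rearranging over all $\theta$ and then maximizing over $w$ yields exactly $R < I_{c.c.}^{G}(X;Y)$. The case $\theta = \pmb r$ corresponds to $H_\theta = G^n$, i.e. codeword differences that vanish in every coordinate's top ring — these are handled separately because they do not shift the joint type, and the random shift $B$ together with the freshness of the homomorphism's "new" generators absorbs them; this is why $\theta = \pmb r$ is excluded from the minimization.

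The main obstacle, and the step requiring the most care, is the second-moment / counting estimate for the sub-ensembles: because the homomorphism $\phi$ is built from the constrained random generators $g_{(q,s,l)\to(p,r,m)}$ of Lemma \ref{lemma:g} (uniform over $p^{r-s}\mathds{Z}_{p^r}$ rather than over all of $\mathds{Z}_{p^r}$), the images $\phi(\hat a - a)$ are \emph{not} uniform over $G^n$ and the events "$\phi(\hat a-a) \in H_\theta$'' are correlated across different $\hat a$. One must show that, after conditioning on the coset random variable $[X^n]_\theta$, the conditional distribution of a competing codeword is uniform over the relevant coset of $H_\theta^n$, so that the standard joint-typicality bound $2^{-nI(X;Y|[X]_\theta)}$ applies, and that the pairwise independence needed for the union/second-moment bound holds within each type class. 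This is the group-theoretic core of the argument; once it is in place, the information-theoretic bookkeeping — collecting the exponents, taking $n\to\infty$ and then $\epsilon\to 0$, and invoking a standard expurgation to pass from average to maximal error probability — is routine and mirrors the source coding proof. The details are carried out in Section \ref{section:proof_channel}.
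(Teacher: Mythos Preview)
Your overall architecture is right --- random coding over the image ensemble, typicality decoding, and a union bound stratified by the subgroup parameter $\theta \in \Theta(w)$ --- and this is exactly what the paper does. But the bookkeeping of the exponents is inverted, and this is not a cosmetic slip: it gives the wrong rate formula.

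Concretely, you claim that the number of competing messages of type $\theta$ is ``essentially $|J|\cdot\omega_\theta$ on the exponential scale'' and that $\omega_\theta$ is ``the normalized log-size of the relevant sub-ensemble.'' This is backwards. By Lemma~\ref{lemma:Joint_Prob_T},
\[
\log |T_\theta(a)| \ \le\ k\sum_{(q,s)}\Bigl(s - \max_{(p,r)}(\theta_{p,r}-|r-s|^+)^+\Bigr)w_{q,s}\log q \ =\ nR\,(1-\omega_\theta),
\]
so the normalized log-count is $1-\omega_\theta$, not $\omega_\theta$. Combined with the fact (Lemmas~\ref{lemma:Joint_Prob} and~\ref{dinesh_lemma}) that each such competitor is jointly typical with $y$ with probability at most $2^{nH(X|Y[X]_\theta)}/|H_\theta|^n$, the correct per-$\theta$ condition is
\[
(1-\omega_\theta)\,R \ <\ \log|H_\theta| - H(X|Y,[X]_\theta) \ =\ I(X;Y\mid[X]_\theta),
\]
i.e.\ $R < \frac{1}{1-\omega_\theta} I(X;Y\mid[X]_\theta)$, matching \eqref{eqn:Rate_Channel}. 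Your condition $\omega_\theta R < I(X;Y\mid[X]_\theta)$ would instead produce $\frac{1}{\omega_\theta}$ in the denominator and would not recover $I_{c.c.}^G$.

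Two smaller points. First, you have the $\theta=\pmb r$ case reversed: $H_{\pmb r}=\bigoplus p^{r}\mathds{Z}_{p^r}=\{0\}$, not $G^n$, and $T_{\pmb r}(a)=\{a\}$ is just the transmitted message itself, which is why it is excluded --- there is nothing to ``handle separately.'' Second, no second-moment or pairwise-independence argument is needed here; a plain union bound suffices for channel coding (the second-moment method is what drives the source-coding covering argument, not this one). What \emph{is} needed from Lemma~\ref{lemma:Joint_Prob} is that, conditioned on $\phi(a)+B=x$, the competing codeword $\phi(\tilde a)+B$ is uniform over the coset $x+H_\theta^n$; this, together with Lemma~\ref{dinesh_lemma}, gives the $2^{-nI(X;Y|[X]_\theta)}$ decay directly.
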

\begin{proof}
The proof is provided in Section \ref{section:ErrAnalysis_General_Channel}.
\end{proof}

When the underlying group is a $\mathds{Z}_{p^r}$ ring, this result can be simplified. We state this result in the form of a corollary:

\begin{cor}\label{cor2}
Let $X$, $Y$ be jointly distributed random variables such that $X$ is uniform over $\mathcal{X}=G=\mathds{Z}_{p^r}$ for some prime $p$ and a positive integer $r$. For $\theta=0,1,\cdots,r-1$, let $H_{\theta}$ be a subgroup of $\mathds{Z}_{p^r}$ defined by $H_{\theta}=p^{\theta}\mathds{Z}_{p^r}$ and let $[X]_{\theta}=X+H_{\theta}$. Then,
\begin{align*}
I_{c.c.}^{G}(X;Y)=\max_{\theta=0}^{r-1} \frac{r}{r-\theta} I(X;Y|[X]_{\theta})
\end{align*}
\end{cor}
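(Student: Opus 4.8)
The plan is to evaluate the two-stage optimization \eqref{eqn:Rate_Channel} by specializing each of the seven objects from the definitions to $G=\mathds{Z}_{p^r}$ and watching the structure collapse. First I would record the specialized combinatorial data: $\mathcal{P}(G)=\{p\}$, $\mathcal{R}_p(G)=\{r\}$, $M_{p,r}=1$, so $r_p=r$, $\mathcal{S}(G)=\{(p,1),\dots,(p,r)\}$, and---crucially---$\mathcal{Q}(G)=\{(p,r)\}$ is a singleton. Hence a vector $\theta$ indexed by $\mathcal{Q}(G)$ is a single integer $\theta\in\{0,\dots,r\}$; the vector $\pmb{r}$ is the scalar $r$, so the index excluded from the inner optimization is exactly $\theta=r$; a weight vector $w$ is a probability vector $(w_1,\dots,w_r)$ on $\{1,\dots,r\}$; and $H_\theta$ of \eqref{eqn:H_theta} is $p^\theta\mathds{Z}_{p^r}$, matching the corollary. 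Since $s\le r$ gives $|r-s|^+=r-s$, one gets $\pmb{\theta}(\hat\theta)=\min_{s\,:\,w_s\neq 0}(r-s+\hat\theta_s)$; writing $s_{\max}(w)=\max\{s:w_s\neq 0\}$ and letting each $\hat\theta_s$ range over $\{0,\dots,s\}$, a short check gives $\Theta(w)=\{\,j\in\mathds{Z}\;:\;r-s_{\max}(w)\le j\le r\,\}$. Likewise \eqref{eqn:omega_theta} reduces to $\omega_\theta=\big(\sum_s(\theta-r+s)^+w_s\big)\big/\big(\sum_s s\,w_s\big)$, and therefore $\tfrac{1}{1-\omega_\theta}=\big(\sum_s s\,w_s\big)\big/\big(\sum_s\min(s,\,r-\theta)\,w_s\big)$.

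With these reductions in hand, one half of the argument is the choice $w=\delta_r$ (all mass at $s=r$): then $s_{\max}(w)=r$, so $\Theta(\delta_r)\setminus\{r\}=\{0,1,\dots,r-1\}$, and $\omega_\theta=\theta/r$ for each such $\theta$, whence $\tfrac{1}{1-\omega_\theta}\,I(X;Y|[X]_\theta)=\tfrac{r}{r-\theta}\,I(X;Y|[X]_\theta)$; so at $w=\delta_r$ the inner optimization over $\theta$ runs over exactly the quantities $\tfrac{r}{r-\theta}I(X;Y|[X]_\theta)$ for $\theta=0,\dots,r-1$ that appear in the corollary. For the other half---that no weight vector does better---the key algebraic fact is the termwise inequality $r(\theta-r+s)^+\le\theta s$ for $s=1,\dots,r$ and $0\le\theta\le r-1$, which after rearrangement is just $(r-\theta)(s-r)\le 0$; summing against $w$ yields $\omega_\theta(w)\le\theta/r$ for every probability vector $w$ and every $\theta\in\{0,\dots,r-1\}$. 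I would combine this with the monotonicity $I(X;Y|[X]_{\theta+1})\le I(X;Y|[X]_\theta)$, which follows from the one-line chain-rule identity $I(X;Y|[X]_\theta)-I(X;Y|[X]_{\theta+1})=I([X]_{\theta+1};Y|[X]_\theta)\ge 0$, valid because $[X]_{\theta+1}$ refines $[X]_\theta$ (both are deterministic functions of $X$ and $H_{\theta+1}\subseteq H_\theta$).

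I expect the main obstacle to be this second half: showing that for an arbitrary $w$ the inner optimum over $\theta\in\Theta(w)\setminus\{r\}$ is controlled by what $\delta_r$ produces. The argument I have in mind fixes $w$ together with an arbitrary comparison index $\theta_0\in\{0,\dots,r-1\}$ and splits into two cases. If $\theta_0\in\Theta(w)$, the inequality $\omega_{\theta_0}(w)\le\theta_0/r$ gives $\tfrac{1}{1-\omega_{\theta_0}}I(X;Y|[X]_{\theta_0})\le\tfrac{r}{r-\theta_0}I(X;Y|[X]_{\theta_0})$. If $\theta_0\notin\Theta(w)$, then $\theta_0<r-s_{\max}(w)$, so the least element $\theta_1:=r-s_{\max}(w)$ of $\Theta(w)$ obeys $\theta_0<\theta_1\le r-1$ and $\omega_{\theta_1}(w)=0$ (since $(\theta_1-r+s)^+=(s-s_{\max}(w))^+$ vanishes on the support of $w$), whence $\tfrac{1}{1-\omega_{\theta_1}}I(X;Y|[X]_{\theta_1})=I(X;Y|[X]_{\theta_1})\le I(X;Y|[X]_{\theta_0})\le\tfrac{r}{r-\theta_0}I(X;Y|[X]_{\theta_0})$. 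In either case the inner optimum at $w$ is dominated by the $\theta_0$-term of the corollary's expression; letting $\theta_0$ range over $\{0,\dots,r-1\}$ and then taking the outer optimum over $w$ finishes the evaluation of \eqref{eqn:Rate_Channel}. The points still to nail down are routine: that $[X]_0$ is the trivial partition so the $\theta=0$ term is $I(X;Y)$, that $\hat\theta$ may be taken integer-valued in \eqref{eqn:Theta_S}, and the bookkeeping around the excluded index $\theta=r$.
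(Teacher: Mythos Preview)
Your proposal is correct and follows the same overall strategy as the paper: specialize the data to $G=\mathds{Z}_{p^r}$ so that $\mathcal{Q}(G)$ is a singleton and $\theta$ becomes a scalar, identify $w=\delta_r$ as the optimizer, bound $\omega_\theta$ from above, and invoke the monotonicity $I(X;Y|[X]_{\theta'})\le I(X;Y|[X]_\theta)$ for $\theta\le\theta'$ (your chain-rule identity is exactly the paper's Lemma~\ref{lemma:IXtheta_Xthetaprime}).

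The one organizational difference is that the paper proceeds in two stages: it first shows, via the sharper bound $1-\omega_\theta\ge (r-\theta)/\tilde r$ with $\tilde r=s_{\max}(w)$, that among all $w$ with a given $s_{\max}$ the delta measure $\delta_{\tilde r}$ is best, obtaining the intermediate form $R_2=\max_{\tilde r}\min_{\theta=r-\tilde r}^{r-1}\frac{\tilde r}{r-\theta}I(X;Y|[X]_\theta)$; it then compares the $\delta_{\tilde r}$'s to each other using the monotonicity lemma. You instead compare every $w$ directly to $\delta_r$ in one shot, using the weaker but globally valid bound $\omega_\theta\le\theta/r$ in your Case~A and handling $w$'s with $s_{\max}(w)<r$ through the $\omega_{\theta_1}=0$ observation in Case~B. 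Your route avoids the intermediate $R_2$ and is slightly more economical; the paper's route isolates the extra structure that for each fixed $s_{\max}$ the optimal $w$ is already a point mass.
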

\begin{proof}
The proof is provided in Section \ref{section:simplification_channel}.
\end{proof}

When dealing with group codes for the purpose of channel coding, an important case is when the channel exhibits some sort of symmetry. The capacity of group codes for channels with some notion of symmetry is found in \cite{fagnani_abelian}. The next corollary states that the result of this paper simplifies to the result of \cite{fagnani_abelian} when the channel is symmetric in the sense defined in \cite{fagnani_abelian}.

\begin{cor}\label{cor:symmetric}
When the channel $(\mathcal{X}=G,\mathcal{Y},W_{Y|X})$ is $G$-symmetric in the sense defined in \cite{fagnani_abelian}, i.e. if
\begin{enumerate}
\item $G$ acts simply transitively on $\mathcal{X}$ (trivially holds for this case)
\item $G$ acts isometrically on $\mathcal{Y}$
\item For all $x,g\in G, y\in\mathcal{Y}$, $W(y|x)=W(g\cdot y|g+x)$
\end{enumerate}
then $I_{c.c}^G(X;Y)$ is equal to the rate provided in \cite[Equation (33)]{fagnani_abelian}.
\end{cor}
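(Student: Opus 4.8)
The plan is to use the $G$-symmetry in two separate ways: first, to collapse every conditional mutual information $I(X;Y\mid[X]_\theta)$ occurring in \eqref{eqn:Rate_Channel} into the mutual information of a single ``sub-channel'' obtained by restricting the input to one coset of $H_\theta$; and second, to carry out the arithmetic of \eqref{eqn:omega_theta}--\eqref{eqn:H_theta} explicitly so that the outer maximization over the weights $w$ together with the inner minimization over $\theta\in\Theta(w)$ becomes a plain minimization over the subgroups $H_\theta$ of $G$, with coefficient $\log|G|/\log|H_\theta|$ — which is exactly the form of \cite[Equation (33)]{fagnani_abelian}. The cyclic case $G=\mathds{Z}_{p^r}$ of this statement is already contained in Corollary~\ref{cor2}, so the general case should follow by first reducing to the Sylow subgroups via \eqref{eqn:G_decomposition} and then to the individual $\mathds{Z}_{p^r}$ factors.

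First I would establish the \emph{coset uniformity} of a $G$-symmetric channel. For any subgroup $H\le G$ and any two cosets $c,c'$ of $H$ pick $g\in G$ with $c'=g+c$; by symmetry condition~3 the map $y\mapsto g\cdot y$ is a measure-preserving bijection of $\mathcal{Y}$ carrying $W(\cdot\mid x)$ for $x\in c$ onto $W(\cdot\mid x+g)$ for $x+g\in c'$. Since $X$ is uniform over $G$, the pairs $(X,Y)$ conditioned on $\{[X]_H=c\}$ and on $\{[X]_H=c'\}$ have the same joint law up to this isometric relabelling of $\mathcal{Y}$, hence $I(X;Y\mid[X]_H=c)$ is independent of $c$. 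Therefore $I(X;Y\mid[X]_H)=\tilde I_H$, where $\tilde I_H$ is the mutual information of the sub-channel $W$ with input uniform over a single fixed coset of $H$; this is precisely the per-subgroup quantity used in \cite{fagnani_abelian}. I would apply this with $H=H_\theta$ as in \eqref{eqn:H_theta}, for every $\theta\in\Theta(w)$ with $\theta\ne\pmb{r}$.

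Next I would make the coefficient $1/(1-\omega_\theta)$ explicit. From \eqref{eqn:H_theta}, $\log|G:H_\theta|=\sum_{(p,r)}M_{p,r}\,\theta_{p,r}\log p$, while the denominator of $\omega_\theta$ in \eqref{eqn:omega_theta}, namely $\sum_{(q,s)\in\mathcal{S}(G)}s\,w_{q,s}\log q$, is (up to the factor $k/n$) the rate \eqref{eqn:rate}. Using the definition of $\Theta(w)$ in \eqref{eqn:Theta_S} I would show that: (i) for each nontrivial $H_\theta$ one can choose $w$ (product-form across the primes in $\mathcal{P}(G)$) so that $\theta\in\Theta(w)$ and $1/(1-\omega_\theta)=\log|G|/\log|H_\theta|$; and (ii) for every admissible $w$ and every $\theta\in\Theta(w)$ with $\theta\ne\pmb{r}$ one has $1/(1-\omega_\theta)\ge\log|G|/\log|H_\theta|$, with $H_\theta$ ranging exactly over the subgroups that enter \cite[Equation (33)]{fagnani_abelian}. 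Combined with the previous paragraph, the inner $\min$ over $\theta\in\Theta(w)$ is then at least $\min_H \frac{\log|G|}{\log|H|}\tilde I_H$ for every $w$, and equality in the outer $\max$ is attained by the choice in (i); the excluded value $\theta=\pmb{r}$ corresponds to $H=\{0\}$, and $\theta=\pmb{0}$ gives $H=G$, coefficient $1$, and the symmetric capacity $I(X;Y)$ — matching the field case.

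The main obstacle I anticipate is exactly this last bookkeeping step: $\Theta(w)$ and $\omega_\theta$ are defined through a mixture of per-prime, per-level minima and maxima, with the $|r-s|^+$ terms encoding how the message group $J$ in \eqref{eqn:J} sits over $G$ as a module, whereas \cite{fagnani_abelian} indexes its bound directly by subgroups of $G$. Verifying that the $w$-dependent family $\{(H_\theta,\omega_\theta):\theta\in\Theta(w)\}$, after optimization over $w$, realizes precisely the subgroup-indexed bound — neither dropping a binding subgroup nor introducing a spurious tighter one — is where the real effort lies; I would organize it by reducing to one Sylow subgroup at a time (using that $G$-symmetry restricts to each Sylow factor and that the optimizing $w$ can be taken product-form), and then invoking the $\mathds{Z}_{p^r}$ computation behind Corollary~\ref{cor2}, where $\log|G|/\log|H_\theta|=r/(r-\theta)$ makes the identification transparent.
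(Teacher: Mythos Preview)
Your first step---showing that $I(X;Y\mid[X]_H=g+H)$ does not depend on the coset representative $g$---is exactly what the paper does in Section~\ref{section:simplification_symmetric}, and it is correct. In fact the paper's own proof stops essentially there: after coset uniformity it simply asserts that the rest ``can be shown'' and remarks that the only remaining difference between the two expressions is whether the minimization is indexed by subgroups of the input group $J$ or by the corresponding subgroups $H_\theta$ of $G$.

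Where you go further than the paper, in claims (i)--(ii), there is a genuine error. The inequality in~(ii) is in the wrong direction: the $\mathds{Z}_{p^r}$ computation in Section~\ref{section:simplification_channel} gives $1-\omega_\theta\ge(r-\theta)/\tilde r\ge(r-\theta)/r$, so $1/(1-\omega_\theta)\le r/(r-\theta)=\log|G|/\log|H_\theta|$, not $\ge$. More importantly, the two claims as you have structured them do not combine to yield equality. With~(ii) as stated you would get $\min_{\theta\in\Theta(w)}f(w,\theta)\ge\min_H\frac{\log|G|}{\log|H|}\tilde I_H$ for every $w$, and~(i) exhibits a $w^*$ where this minimum equals the target; together these only show $\max_w\min_\theta f\ge\text{target}$, never $\le$. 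Flipping the sign in~(ii) does not repair this either: from $f(w,\theta)\le g(H_\theta)$ one obtains $\min_{\theta\in\Theta(w)}f\le\min_{\theta\in\Theta(w)}g(H_\theta)$, but the right-hand side is a minimum over the restricted family $\{H_\theta:\theta\in\Theta(w)\}$, which can strictly exceed $\min_H g(H)$ when the minimizing subgroup is absent from $\Theta(w)$---this occurs in $\mathds{Z}_{p^r}$ whenever $\tilde r<r-\theta^*$. The paper closes that gap in the cyclic case via the monotonicity $I(X;Y\mid[X]_\theta)\le I(X;Y\mid[X]_{\tilde\theta})$ for $\tilde\theta\le\theta$ (Lemma~\ref{lemma:IXtheta_Xthetaprime}) together with a case split on the position of $\theta^*$ relative to $\tilde r$. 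Your plan to reduce to the Sylow factors and then invoke the cyclic analysis is reasonable, but you must carry over the full argument of Section~\ref{section:simplification_channel}, not the pointwise bound~(ii).
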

\begin{proof}
The proof is provided in Section \ref{section:simplification_symmetric}.
\end{proof}

\subsection{Interpretation of the Result}

In this section, we try to give some intuition about the result and
the quantities defined above using several examples.
At a high level, $w_{q,s}$ denotes the normalized weight given to the
$\mathds{Z}_{q^s}$ component
of the input group $J$ in constructing the homomorphism from $J$ to
$G^n$, and $\theta$ indexes a subgroup $H_{\theta}$ of $G$ that comes from a
collection $\Theta(w)$ governed by the choice of
$w_{q,s}s$. $\frac{1}{\omega{\theta}} I([U]_{\theta};X)$ in source
coding and $\frac{1}{(1-\omega{\theta})} I(X;Y|[X]_{\theta})$ in
channel coding  denote the
rate constraints imposed by the subgroup $H_{\theta}$. Due to the
algebraic structure of the code, in the ensemble  two random
codewords corresponding to two distinct indexes
are statistically dependent, unless $G$ is a finite field.
For the source coding problem, when the code is chosen randomly,
consider the event that all components
of their difference belong  to a proper subgroup
$H_{\theta}$ of $G$.  Then if one of them is a poor representation of
a given source sequence, so is the other with a probability that is
higher than the case when no algebraic structure on the code is enforced.
This means that the code size has to be larger so that with high probability one can find
a good representation of the source.
For the channel coding problem, when a random codeword corresponding
to a given message index is transmitted over the channel, consider the event
that all components of the difference between the codeword transmitted
and a random codeword corresponding to another message index belong to
a proper subgroup $H_{\theta}$ of $G$. Then  the probability
that the latter is decoded instead of the former  is higher than the
case when no algebraic structure on the code is enforced.

\noindent \textbf{Example:} We start with the
simple example where $G=\mathds{Z}_8$. In this case, we have $\mathcal{P}(G)=\{2\}$, $r_2=3$, $\mathcal{S}_2=\{1,2,3\}$, $\mathcal{S}=\{(2,1),(2,2),(2,3)\}$,
and $\mathcal{Q}(G)=\{(2,3)\}$. For vectors $w$, $\hat{\theta}$ and $\theta$ defined as above, we have $w=(w_{2,1},w_{2,2},w_{2,3})$,
$\hat{\theta}=(\hat{\theta}_{2,1},\hat{\theta}_{2,2},\hat{\theta}_{2,3})$ and $\theta=\theta_{2,3}$. Recall that the ensemble of Abelian group codes used
in the random coding argument consists of the set of all homomorphisms from some
$J=\mathds{Z}_{2}^{kw_{2,1}}\oplus  \mathds{Z}_{4}^{kw_{2,2}}
\oplus \mathds{Z}_{8}^{kw_{2,3}}$ and hence the vector of weights $w$ determines the input
group of the homomorphism. Depending on the values of the weights, the structure of the input group can be different; for example, if $w_{2,1}=0$, $w_{2,2}\ne 0$
and $w_{2,3}\ne 0$, the input group will only have $\mathds{Z}_4$ and $\mathds{Z}_8$ components. Any vector
$\hat{\theta}=(\hat{\theta}_{2,1},\hat{\theta}_{2,2},\hat{\theta}_{2,3})$ with $0\le \hat{\theta}_{2,1}\le 1$, $0\le \hat{\theta}_{2,2}\le 2$
and $0\le \hat{\theta}_{2,3}\le 3$ corresponds to a subgroup $K_{\hat{\theta}}$ of the input group $J$ given by
\begin{align*}
K_{\hat{\theta}}=2^{\hat{\theta}_{2,1}}\mathds{Z}_{2}^{kw_{2,1}}\oplus 2^{\hat{\theta}_{2,2}} \mathds{Z}_{4}^{kw_{2,2}}
\oplus 2^{\hat{\theta}_{2,3}}\mathds{Z}_{8}^{kw_{2,3}}
\end{align*}
Similarly, any
$\theta=\theta_{2,3}$ with $0\le \theta_{2,3}\le 3$ corresponds to a subgroup $H_{\theta}$ of the group space $G^n$ given by
\begin{align*}
H_{{\theta}}=2^{{\theta}_{2,3}}\mathds{Z}_{8}^n
\end{align*}
Let us assume $w_{2,1}=0$, $w_{2,2}\ne 0$ and $w_{2,3}\ne 0$ so that $K_{\hat{\theta}}=2^{\hat{\theta}_{2,2}} \mathds{Z}_{4}^{kw_{2,2}}
\oplus 2^{\hat{\theta}_{2,3}}\mathds{Z}_{8}^{kw_{2,3}}$. It turns out that if
\begin{align}
\theta=\pmb{\theta}(\hat{\theta})=\min\left(1+\hat{\theta}_{2,2},\hat{\theta}_{2,3}\right)
\end{align}
then for any random homomorphism $\phi$ from $J$ into $G^n$, and for any $a=(\alpha,\beta)\in J$ with
$\alpha\in 2^{\hat{\theta}_{2,2}} \mathds{Z}_{4}^{kw_{2,2}}\backslash  2^{\hat{\theta}_{2,2}+1} \mathds{Z}_{4}^{kw_{2,2}}$ and
$\beta\in 2^{\hat{\theta}_{2,3}} \mathds{Z}_{8}^{kw_{2,3}}\backslash  2^{\hat{\theta}_{2,3}+1} \mathds{Z}_{8}^{kw_{2,3}}$, $\phi(a)$
is uniformly distributed over $H_{{\theta}}^n$.
The set $\Theta(w)$ consists of all vectors $\theta$ for which there exists at least one such $a$. Note that this set corresponds to a collection of subgroups of $G^n$. The quantity $1-\omega_{\theta}$
is a measure of the number of elements $a$ of $J$ for which $\phi(a)$ is uniform over $H_{\theta}$. It turns out that for this example,
$\Theta(w)=\{0,1,2,3\}$ and $\omega_0=0$, $\omega_1=\frac{w_{2,3}}{2w_{2,2}+3w_{2,3}}$, $\omega_2=\frac{w_{2,2}+2w_{2,3}}{2w_{2,2}+3w_{2,3}}$
and $\omega_3=1$.

\noindent \textbf{Example:} Next, we consider the case where
$G=\mathds{Z}_4\oplus \mathds{Z}_3$. In this case, we have
$\mathcal{P}(G)=\{2,3\}$, $r_2=2$, $r_3=1$,
$\mathcal{S}_2=\{1,2\}$, $\mathcal{S}_3=\{1\}$, $\mathcal{S}=\{(2,1),(2,2),(3,1)\}$,
and $\mathcal{Q}(G)=\{(2,2),(3,1)\}$. For vectors $w$, $\hat{\theta}$ and $\theta$ defined as before, we have $w=(w_{2,1},w_{2,2},w_{3,1})$,
$\hat{\theta}=(\hat{\theta}_{2,1},\hat{\theta}_{2,2},\hat{\theta}_{3,1})$ and $\theta=(\theta_{2,2},\theta_{3,1})$.
The ensemble of Abelian group codes consists of the set of all homomorphisms from some
$J=\mathds{Z}_{2}^{kw_{2,1}}\oplus  \mathds{Z}_{4}^{kw_{2,2}}
\oplus \mathds{Z}_{3}^{kw_{3,1}}$. Any vector
$\hat{\theta}=(\hat{\theta}_{2,1},\hat{\theta}_{2,2},\hat{\theta}_{3,1})$ with $0\le \hat{\theta}_{2,1}\le 1$, $0\le \hat{\theta}_{2,2}\le 2$
and $0\le \hat{\theta}_{3,1}\le 1$ corresponds to a subgroup $K_{\hat{\theta}}$ of the input group $J$ given by
\begin{align*}
K_{\hat{\theta}}=2^{\hat{\theta}_{2,1}}\mathds{Z}_{2}^{kw_{2,1}}\oplus 2^{\hat{\theta}_{2,2}} \mathds{Z}_{4}^{kw_{2,2}}
\oplus 3^{\hat{\theta}_{3,1}}\mathds{Z}_{8}^{kw_{3,1}}
\end{align*}
Similarly, any
$\theta=(\theta_{2,2},\theta_{3,1})$ with $0\le \theta_{2,2}\le 2$ and $0\le \theta_{3,1}\le 1$ corresponds to a subgroup $H_{\theta}$
of the group space $G^n$ given by
\begin{align*}
H_{{\theta}}=2^{{\theta}_{2,2}}\mathds{Z}_{4}^n\oplus 3^{{\theta}_{3,1}}\mathds{Z}_{3}^n
\end{align*}
Let us assume $w_{2,1},w_{2,2},w_{2,3}$ are all non-zero. It turns out that if
\begin{align}
&\theta_{2,2}=\min\left(1+\hat{\theta}_{2,1},\hat{\theta}_{2,2}\right)\\
&\theta_{3,1}=\hat{\theta}_{3,1}
\end{align}
then for any random homomorphism $\phi$ from $J$ into $G^n$, and for any $a=(\alpha,\beta,\gamma)\in J$ with
$\alpha\in 2^{\hat{\theta}_{2,1}} \mathds{Z}_{2}^{kw_{2,1}}\backslash  2^{\hat{\theta}_{2,1}+1} \mathds{Z}_{2}^{kw_{2,1}}$,
$\beta\in 2^{\hat{\theta}_{2,2}} \mathds{Z}_{4}^{kw_{2,2}}\backslash  2^{\hat{\theta}_{2,2}+1} \mathds{Z}_{4}^{kw_{2,2}}$ and
$\gamma\in 3^{\hat{\theta}_{3,1}} \mathds{Z}_{3}^{kw_{3,1}}\backslash  3^{\hat{\theta}_{3,1}+1} \mathds{Z}_{3}^{kw_{3,1}}$, $\phi(a)$
is uniformly distributed over $H_{{\theta}}^n$.
Moreover, for this example we have
\begin{align*}
\Theta(w)=\{(0,0),(1,0),(2,0),(0,1),(1,1),(2,1)\}
\end{align*}

\section{Proof of Source Coding}\label{section:proof_source}

\subsection{Encoding and Decoding}
Following the analysis of Section \ref{Image_Ensemble}, we construct the ensemble of group codes of length $n$ over $G$ as the image of all homomorphisms $\phi$ from some Abelian group $J$ into $G^n$ where $J$ and $G^n$ are as in Equations \eqref{eqn:J2} and \eqref{eqn:Gn} respectively. The random homomorphism $\phi$ is described in Equation \eqref{eqn:phi}.

To find an achievable rate for a distortion level $D$, we use a random coding argument in which the random encoder is characterized by the random homomorphism $\phi$, a random vector $B$ uniformly distributed over $G^n$ and a joint distribution $p_{XU}$ over $\mathcal{X}\times \mathcal{U}$ such that its first marginal is equal to the source distribution $p_X$, its second marginal $p_U$ is uniform over $\mathcal{U}=G$ and such that $\mathds{E}\{d(X,U)\}\le D$. The code is defined as in \eqref{eqn:Code} and its rate is given by \eqref{eqn:rate}.

Given the source output sequence $x\in\mathcal{X}^n$, the random encoder looks for a codeword $u\in \mathds{C}$ such that $u$ is jointly typical with $x$ with respect to $p_{XU}$. If it finds at least one such $u$, it encodes $x$ to $u$ (if it finds more than one such $u$ it picks one of them at random). Otherwise, it declares error. The decoder outputs $u$ as the source reconstruction.
%by random matrices $G_s$ for $s=1,\cdots,r$ which are uniformly distributed over their domains and $B$ is a random element in $\mathds{Z}_p^n$.\\
%Given the codeword $u$, the decoder output .\\

\subsection{Error Analysis}\label{section:ErrAnalysis_General_Source}
Let $x=(x_1,\cdots,x_n)$ and $u=(u_1,\cdots,u_n)$ be the source output and the encoder/decoder output respectively. Note that if the encoder declares no error then since $x$ and $u$ are jointly typical, $(d(x_i,u_i))_{i=1,\cdots,n}$ is typical with respect to the distribution of $d(X,U)$. Therefore for large $n$, $\frac{1}{n}d(x,u)=\frac{1}{n}\sum_{i=1}^n d(x_i,u_i)\approx \mathds{E}\{d(X,U)\}\le D$. It remains to show that the rate can be as small as $I_{s.c.}^{G}(X;U)$ while keeping the probability of encoding error small.\\

Given the source output $x\in\mathcal{X}^n$, define
\begin{align*}
\alpha(x) =\sum_{u\in A_{\epsilon}^n(U|x)} \mathds{1}_{\{u\in\mathds{C}\}} =\sum_{u\in A_{\epsilon}^n(U|x)}\sum_{a\in J} \mathds{1}_{\{\phi(a)+B=u\}}
\end{align*}
An encoding error occurs if and only if $\alpha(x)=0$. We use the following Chebyshev's inequality to show that under certain conditions the probability of error can be made arbitrarily small:
\begin{align*}
P\left(\alpha(x)=0\right)\le \frac{\mbox{\small var}\{\alpha(x)\}}{\mathds{E}\{\alpha(x)\}^2}
\end{align*}
We need the following lemmas to proceed:

\begin{lemma}\label{lemma:Joint_Prob}
For $a,\tilde{a}\in J$, $u,\tilde{u}\in G^n$ and for $(q,s,l)\in\mathcal{G}(J)$, let $\hat{\theta}_{q,s,l}\in\{0,1,\cdots,s\}$ be such that
\begin{align*}
\tilde{a}_{q,s,l} - a_{q,s,l}\in q^{\hat{\theta}_{q,s,l}}\mathds{Z}_{q^s}\backslash q^{\hat{\theta}_{q,s,l}+1}\mathds{Z}_{q^s}
\end{align*}
%For $q\in\mathcal{P}(J)$ and $s=1,\cdots,r_q$, define
%\begin{align*}
%\hat{\theta}_{q,s}=\min{\substack{(q',s',l')\in\mathcal{G}(J)\\q'=q\\s'=s}} \hat{\theta}_{q',s',l'}
%\end{align*}
For $(p,r)\in\mathcal{Q}(G)$ define
\begin{align*}
\pmb{\theta}_{p,r}(a,\tilde{a})&=\min_{\substack{(q,s,l)\in\mathcal{G}(J)\\q=p}} \left|r-s\right|^+ +\hat{\theta}_{q,s,l}
%&=\min_{\substack{(q,s,l)\in\mathcal{G}(J)\\q=p}} \left|r-s\right|^+\hat{\theta}_{q,s}
\end{align*}
and let $\theta_{p,r}=\pmb{\theta}_{p,r}(a,\tilde{a})$. Define the subgroup $H_{\theta}$ of $G$ as
\begin{align*}
H_{\theta}=\bigoplus_{(p,r,m)\in\mathcal{G}(G)} p^{\theta_{p,r}} \mathds{Z}_{p^r}^{(m)}
\end{align*}
Then,
\begin{align*}
P\left(\phi(a)+B=u, \phi(\tilde{a})+B=\tilde{u}\right)= \left\{\begin{array}{ll}
\frac{1}{|G|^n}\frac{1}{|H_{\theta}|^n}& \mbox{If $\tilde{u}-u\in H_{\theta}^n$}\\
0&\mbox{Otherwise}
\end{array}\right.
\end{align*}
\end{lemma}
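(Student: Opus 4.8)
The plan is to compute the joint probability $P(\phi(a)+B=u,\ \phi(\tilde a)+B=\tilde u)$ by a change of variables that decouples the shift $B$ from the increment $\phi(\tilde a)-\phi(a)$. Writing the event as $\phi(a)+B=u$ together with $\phi(\tilde a)-\phi(a)=\tilde u-u$, and noting $B$ is uniform over $G^n$ and independent of $\phi$, I would first condition on $\phi$: for any fixed realization of $\phi$, the event $\phi(a)+B=u$ has probability exactly $|G|^{-n}$, and it is compatible with $\phi(\tilde a)+B=\tilde u$ if and only if $\phi(\tilde a)-\phi(a)=\tilde u-u$. Hence
\begin{align*}
P\left(\phi(a)+B=u,\ \phi(\tilde a)+B=\tilde u\right)=\frac{1}{|G|^n}\,P\!\left(\phi(\tilde a)-\phi(a)=\tilde u-u\right).
\end{align*}
So the whole statement reduces to showing that $\phi(\tilde a)-\phi(a)=\phi(\tilde a - a)$ is uniformly distributed over $H_\theta^{\,n}$, where $\theta$ is the vector defined from the componentwise ``depths'' $\hat\theta_{q,s,l}$ as in the lemma.

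Next I would analyze the distribution of $\phi(d)$ for a fixed nonzero $d=\tilde a-a\in J$, using the explicit form \eqref{eqn:phi}. Because $\phi$ acts independently and identically across the $n$ block-coordinates (each of the $nM_{p,r}$ rings $\mathds{Z}_{p^r}^{(m)}$ of $G^n$ gets its own independent family of $g_{(q,s,l)\to(p,r,m)}$'s), it suffices to show that within a single copy of $G$ the image of $d$ is uniform over $H_\theta=\bigoplus_{(p,r,m)} p^{\theta_{p,r}}\mathds{Z}_{p^r}^{(m)}$; the $n$-fold product statement then follows by independence. Fix a target ring $(p,r,m)$ of $G$. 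By Lemma \ref{lemma:g}, only the $(q,s,l)$ with $q=p$ contribute, and for those $g_{(p,s,l)\to(p,r,m)}$ is uniform over $p^{|r-s|^+}\mathds{Z}_{p^r}$ (uniform over all of $\mathds{Z}_{p^r}$ when $s\ge r$, uniform over $p^{r-s}\mathds{Z}_{p^r}$ when $s<r$). The $(p,r,m)$-component of $\phi(d)$ is $\sum_{(p,s,l)} d_{p,s,l}\,g_{(p,s,l)\to(p,r,m)} \pmod{p^r}$. For each fixed $(p,s,l)$ with $d_{p,s,l}\in p^{\hat\theta_{p,s,l}}\mathds{Z}_{q^s}\setminus p^{\hat\theta_{p,s,l}+1}\mathds{Z}_{q^s}$, the product $d_{p,s,l}\,g_{(p,s,l)\to(p,r,m)}$ ranges — as $g$ is uniform over $p^{|r-s|^+}\mathds{Z}_{p^r}$ — uniformly over the subgroup $p^{\,|r-s|^+ + \hat\theta_{p,s,l}}\mathds{Z}_{p^r}$ of $\mathds{Z}_{p^r}$. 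This is the key arithmetic fact: multiplying a uniform element of $p^{a}\mathds{Z}_{p^r}$ by a fixed element of $p^{b}$-valuation yields a uniform element of $p^{a+b}\mathds{Z}_{p^r}$ (with the convention that everything is reduced mod $p^r$ and the valuation is capped at $r$). Summing independent uniform elements over nested subgroups $p^{c_i}\mathds{Z}_{p^r}$ gives a uniform element over $p^{\min_i c_i}\mathds{Z}_{p^r}$, and $\min_i (|r-s_i|^+ + \hat\theta_{p,s_i,l_i}) = \theta_{p,r}$ by definition of $\pmb\theta_{p,r}(a,\tilde a)$. Hence the $(p,r,m)$-component of $\phi(d)$ is uniform over $p^{\theta_{p,r}}\mathds{Z}_{p^r}$.

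Finally I would check the joint law across the target rings $(p,r,m)$ of a single $G$: the families $\{g_{(q,s,l)\to(p,r,m)}\}$ for distinct $(p,r,m)$ are mutually independent by construction in \eqref{eqn:phi}, so the components of $\phi(d)$ are independent, each uniform over its $p^{\theta_{p,r}}\mathds{Z}_{p^r}$, making $\phi(d)$ uniform over $H_\theta$. Reintroducing the $n$ coordinates gives $\phi(d)$ uniform over $H_\theta^{\,n}$, and therefore $P(\phi(\tilde a - a)=\tilde u-u)$ equals $|H_\theta|^{-n}$ when $\tilde u - u \in H_\theta^{\,n}$ and $0$ otherwise; combined with the factor $|G|^{-n}$ from the $B$-averaging, this is exactly the claimed formula. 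The main obstacle I anticipate is the bookkeeping in the arithmetic step — correctly handling the two regimes $s\ge r$ versus $s<r$ through the $|r-s|^+$ notation, and verifying that ``fixed element of valuation $\hat\theta$ times uniform over $p^{|r-s|^+}\mathds{Z}_{p^r}$ is uniform over $p^{|r-s|^+ + \hat\theta}\mathds{Z}_{p^r}$'' holds with all valuations capped at $r$; once that elementary $p$-adic fact is pinned down, the rest is assembling independence across rings and across the $n$ coordinates.
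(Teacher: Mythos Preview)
Your argument is correct and a bit more streamlined than the paper's. Both proofs first eliminate $B$ to reduce to the distribution of $\phi(\tilde a-a)$, and both work one target ring $(p,r,m)$ at a time. The difference is in how the distribution on a single $\mathds{Z}_{p^r}$ is identified. You argue probabilistically: each summand $d_{q,s,l}\,g_{(q,s,l)\to(p,r,m)}$ is uniform on the subgroup $p^{|r-s|^++\hat\theta_{q,s,l}}\mathds{Z}_{p^r}$, and a sum of independent uniforms on a chain of subgroups of $\mathds{Z}_{p^r}$ is uniform on the largest one, namely $p^{\theta_{p,r}}\mathds{Z}_{p^r}$. The paper instead does an explicit solution count: it selects a pivot index $(q^*,s^*,l^*)$ achieving the minimum in $\theta_{p,r}$, lets all other $g$'s be arbitrary, and invokes the auxiliary Lemma~\ref{lemma:axb} to count the solutions for $g_{(q^*,s^*,l^*)\to(p,r,m)}$ of $(\tilde a_{q^*,s^*,l^*}-a_{q^*,s^*,l^*})\,g=\text{(fixed element of }p^{\theta_{p,r}}\mathds{Z}_{p^r})$, obtaining $p^{\hat\theta_{q^*,s^*,l^*}}$ solutions; the ratio of counts then collapses to $|H_\theta|^{-n}$. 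Your route replaces the pivot-and-count step and Lemma~\ref{lemma:axb} by the two elementary facts you flag (uniform on $p^a\mathds{Z}_{p^r}$ times a fixed element of valuation $b$ is uniform on $p^{a+b}\mathds{Z}_{p^r}$, with the cap at $r$; and sums of independent uniforms on nested subgroups are uniform on the largest), which makes the independence structure across rings and across the $n$ coordinates more transparent. The paper's route, on the other hand, gives the exact enumeration without appealing to convolution of uniforms. Your anticipated obstacle is real but benign: in the regime $s>r$ and $\hat\theta_{q,s,l}\ge r$ the summand is identically $0$, consistent with $p^{k}\mathds{Z}_{p^r}=\{0\}$ for $k\ge r$, so the ``cap at $r$'' is automatic at the subgroup level.
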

\begin{proof}
The proof is provided in Appendix \ref{section:Proof_Joint_Prob}
\end{proof}

\begin{lemma}\label{lemma:Joint_Prob_T}
For $a\in J$ and $\theta=(\theta_{p,r})_{(p,r)\in\mathcal{Q}(G)}$, let
\begin{align*}
T_{\theta}(a)=\{\tilde{a}\in J|\forall (p,r)\in\mathcal{Q}(G), \pmb{\theta}_{p,r}(a,\tilde{a})=\theta_{p,r}\}
\end{align*}
where $\pmb{\theta}_{p,r}(a,\tilde{a})$ is defined as in the previous lemma. Then we have
\begin{align*}
\left|T_{\theta}(a)\right|&\le \prod_{(q,s,l)\in\mathcal{G}(P)} q^{s- \displaystyle{\max_{\substack{(p,r)\in\mathcal{Q}(G)\\p=q}}}   \left(\theta_{p,r}-|r-s|^+\right)^+}\\
&=\prod_{(q,s)\in\mathcal{S}(G)} q^{\left(s-  \displaystyle{\max_{\substack{(p,r)\in\mathcal{Q}(G)\\p=q}}}  \left(\theta_{p,r}-|r-s|^+\right)^+\right)k w_{q,s}}
\end{align*}
\end{lemma}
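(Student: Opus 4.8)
The plan is to bound $|T_\theta(a)|$ coordinate by coordinate, using only the ``$\ge$'' halves of the $|\mathcal{Q}(G)|$ defining equalities. First I would note that $\hat\theta_{q,s,l}$ depends on $(a,\tilde a)$ only through the coordinate $\tilde a_{q,s,l}-a_{q,s,l}$ of the difference $\tilde a-a$ in $J$, so $\pmb{\theta}_{p,r}(a,\tilde a)=\pmb{\theta}_{p,r}(0,\tilde a-a)$ and the translation $\tilde a\mapsto\tilde a-a$ is a bijection of $J$ carrying $T_\theta(a)$ onto $T_\theta(0)$; hence $|T_\theta(a)|=|T_\theta(0)|$ and it suffices to count $b=(b_{q,s,l})_{(q,s,l)\in\mathcal{G}(J)}\in J$. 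Letting $\hat\theta_{q,s,l}\in\{0,\dots,s\}$ be the $q$-adic valuation of $b_{q,s,l}$ in $\mathds{Z}_{q^s}$ (with the convention $\hat\theta_{q,s,l}=s$ when $b_{q,s,l}=0$), membership $b\in T_\theta(0)$ means precisely that $\min_{(q,s,l)\in\mathcal{G}(J),\,q=p}\bigl(|r-s|^+ +\hat\theta_{q,s,l}\bigr)=\theta_{p,r}$ for every $(p,r)\in\mathcal{Q}(G)$.

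Next I would keep only the necessary part of these conditions, discarding the requirement that each minimum be \emph{attained}. If $b\in T_\theta(0)$ then every term of each minimum is $\ge$ the value of that minimum; thus for each $(q,s,l)\in\mathcal{G}(J)$ and each $r$ with $(q,r)\in\mathcal{Q}(G)$ we get $|r-s|^+ +\hat\theta_{q,s,l}\ge\theta_{q,r}$, i.e.\ $\hat\theta_{q,s,l}\ge(\theta_{q,r}-|r-s|^+)^+$. The set of such $r$ is nonempty because $q\in\mathcal{P}(J)\subseteq\mathcal{P}(G)$, so maximizing over it gives the decoupled constraint $\hat\theta_{q,s,l}\ge\mu_{q,s}$, where $\mu_{q,s}:=\max_{(p,r)\in\mathcal{Q}(G),\,p=q}(\theta_{p,r}-|r-s|^+)^+$; equivalently $b_{q,s,l}\in q^{\mu_{q,s}}\mathds{Z}_{q^s}$. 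The number of elements of $\mathds{Z}_{q^s}$ lying in $q^{\mu_{q,s}}\mathds{Z}_{q^s}$ equals $q^{\,s-\mu_{q,s}}$ when $0\le\mu_{q,s}\le s$ and equals $0$ when $\mu_{q,s}>s$ (no element has valuation exceeding $s$); in either case it is at most $q^{\,s-\mu_{q,s}}$, a positive real. Multiplying these per-coordinate counts over $(q,s,l)\in\mathcal{G}(J)$ gives
\[
|T_\theta(a)|=|T_\theta(0)|\le\prod_{(q,s,l)\in\mathcal{G}(J)}q^{\,s-\mu_{q,s}},
\]
which is the first displayed inequality (the product index $\mathcal{G}(P)$ in the statement being $\mathcal{G}(J)$). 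For the second line, $s-\mu_{q,s}$ does not depend on $l$ and there are $k_{q,s}=kw_{q,s}$ indices $l$ for each $(q,s)\in\mathcal{S}(G)$, so the product collapses to $\prod_{(q,s)\in\mathcal{S}(G)}q^{(s-\mu_{q,s})kw_{q,s}}$.

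The computation is essentially mechanical; the only care needed is the bookkeeping around two degenerate cases — the convention for $\hat\theta_{q,s,l}$ when a coordinate of $b$ vanishes, and the case $\mu_{q,s}>s$, in which $T_\theta(a)$ is in fact empty and the inequality is vacuous. The one substantive observation, and the main thing to get right, is that although the defining conditions of $T_\theta(a)$ couple the coordinates of $J$ through the minima, their necessary ``$\ge$'' consequences decouple into the single divisibility condition $\hat\theta_{q,s,l}\ge\mu_{q,s}$ on each coordinate separately, which is all that an \emph{upper} bound requires.
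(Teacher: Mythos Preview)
Your proposal is correct and follows essentially the same approach as the paper: relax the defining equalities $\pmb{\theta}_{p,r}(a,\tilde a)=\theta_{p,r}$ to the inequalities $|r-s|^+ +\hat\theta_{q,s,l}\ge\theta_{p,r}$, which decouple into the per-coordinate constraints $\hat\theta_{q,s,l}\ge\mu_{q,s}$, and then count. The paper's proof is in fact slightly terser---it works directly with the coset $a_{q,s,l}+q^{\mu_{q,s}}\mathds{Z}_{q^s}$ without first translating to $a=0$ and without explicitly discussing the degenerate cases---but the substance is identical.
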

\begin{proof}
The proof is provided in Appendix \ref{section:Proof_Joint_Prob_T}
\end{proof}

\begin{lemma}
For $a\in J$ and $u\in G^n$, we have
\begin{align*}
P\left(\phi(a)+B=u\right)=\frac{1}{|G|^n}
\end{align*}
\end{lemma}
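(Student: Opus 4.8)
The plan is to compute the probability directly from the definition of the code in \eqref{eqn:Code} and the structure of the random homomorphism in \eqref{eqn:phi}, exploiting the independence of $B$ from $\phi$. Writing the event $\{\phi(a)+B=u\}$ as $\{B = u - \phi(a)\}$, I would condition on the realization of $\phi$. For each fixed homomorphism $\phi$, the quantity $\phi(a)$ is a determined element of $G^n$, so $u-\phi(a)$ is a fixed element of $G^n$; since $B$ is uniform over $G^n$ and independent of $\phi$, we get $P(B = u-\phi(a) \mid \phi) = 1/|G|^n$ regardless of the value of $\phi(a)$. Averaging over $\phi$ then gives $P(\phi(a)+B=u) = 1/|G|^n$.

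The one genuine subtlety — and the step I expect to require the most care — is the degenerate case $a = 0$ (the identity of $J$), for which $\phi(a) = 0$ deterministically; but even there the argument above goes through unchanged, since the randomness of $B$ alone suffices and we never needed $\phi(a)$ to be random. So there is really no obstacle: the lemma is a consequence of the shift vector $B$ being uniform and independent, which is exactly the role $B$ plays in making the shifted group code ensemble ``smooth.'' One could also phrase this as the special case $\tilde a = a$, $\tilde u = u$ of Lemma \ref{lemma:Joint_Prob}: there $\hat\theta_{q,s,l}$ is taken maximal (namely $s$, since $\tilde a_{q,s,l}-a_{q,s,l}=0 \in q^s\mathbb{Z}_{q^s}$), forcing $\theta_{p,r} = \min_{q=p}(|r-s|^+ + s) = r$ for every $(p,r)$, whence $H_\theta = \bigoplus p^r \mathbb{Z}_{p^r}^{(m)} = \{0\}$; the condition $\tilde u - u = 0 \in H_\theta^n$ holds trivially and the joint probability collapses to $\frac{1}{|G|^n}\cdot\frac{1}{|H_\theta|^n} = \frac{1}{|G|^n}$. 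I would present the short self-contained argument via conditioning on $\phi$, and remark that it also follows from Lemma \ref{lemma:Joint_Prob} with $\tilde a = a$.
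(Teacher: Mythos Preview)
Your proposal is correct. The paper's own proof is literally the one-liner ``Immediate from Lemma~\ref{lemma:Joint_Prob}'', which is precisely your second approach of specializing to $\tilde a = a$, $\tilde u = u$; your computation there is right (one small remark: in general $|r-s|^+ + s = \max(r,s) \ge r$, so $\theta_{p,r}\ge r$ rather than necessarily $=r$, but either way $p^{\theta_{p,r}}\mathds{Z}_{p^r}=\{0\}$ and $H_\theta=\{0\}$, so the conclusion is unchanged).

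Your first argument --- conditioning on $\phi$ and using only that $B$ is uniform on $G^n$ and independent of $\phi$ --- is a genuinely simpler, self-contained route that bypasses Lemma~\ref{lemma:Joint_Prob} entirely. It isolates exactly the mechanism at work (the random shift smooths out the marginal regardless of the structure of $\phi$), and it handles the degenerate case $a=0$ without any special pleading. The paper's route has the virtue of presenting this lemma as a corollary of the joint distribution already computed, so no new ideas are introduced; your direct route has the virtue of making clear that the marginal statement needs nothing about the structure of $J$ or the homomorphism at all. Either is fine; presenting the direct argument and then remarking on the specialization, as you propose, is a good choice.
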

\begin{proof}
Immediate from Lemma \ref{lemma:Joint_Prob}.
\end{proof}

\begin{lemma}\label{lemma:Theta_S}
For fixed $w=(w_{q,s})_{(q,s)\in\mathcal{S}(G)}$ and for any $a\in J=\bigoplus_{(q,s)\in\mathcal{S}(G)}\bigoplus_{l=1}^{kw_{q,s}} \mathds{Z}_{q^s}^{(l)}$,
\begin{align*}
\{\theta=(\theta_{p,r})_{(p,r)\in\mathcal{Q}(G)}||T_{\theta}(a)|\ne 0\}=\Theta(w)
\end{align*}
where $\Theta(w)$ is defined in Equation \eqref{eqn:Theta_S}.
\end{lemma}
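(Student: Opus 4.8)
The plan is to observe that, by definition, $|T_\theta(a)|\ne 0$ is exactly the assertion that the vector $\theta$ is realized as $\bigl(\pmb\theta_{p,r}(a,\tilde a)\bigr)_{(p,r)\in\mathcal Q(G)}$ for some $\tilde a\in J$. So the lemma is the computation of the image of the map $\tilde a\mapsto\bigl(\pmb\theta_{p,r}(a,\tilde a)\bigr)_{(p,r)}$, and it suffices to show this image equals $\Theta(w)$. The first step is the reduction: $\pmb\theta_{p,r}(a,\tilde a)$ depends on $\tilde a$ only through the integers $\hat\theta_{q,s,l}\in\{0,1,\dots,s\}$ defined in Lemma \ref{lemma:Joint_Prob} by $\tilde a_{q,s,l}-a_{q,s,l}\in q^{\hat\theta_{q,s,l}}\mathds Z_{q^s}\backslash q^{\hat\theta_{q,s,l}+1}\mathds Z_{q^s}$, with the convention $\hat\theta_{q,s,l}=s$ when the difference is $0$; and since the differences $\tilde a_{q,s,l}-a_{q,s,l}$ can be chosen independently and arbitrarily in $\mathds Z_{q^s}$, the tuple $(\hat\theta_{q,s,l})_{(q,s,l)\in\mathcal G(J)}$ ranges over \emph{all} tuples with $0\le\hat\theta_{q,s,l}\le s$. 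In particular the image does not depend on $a$, consistent with the statement being for any $a\in J$.

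For the inclusion $\Theta(w)\subseteq\{\theta:|T_\theta(a)|\ne0\}$, I would fix $\hat\theta=(\hat\theta_{q,s})_{(q,s)\in\mathcal S(G)}$ with $0\le\hat\theta_{q,s}\le s$ and construct $\tilde a\in J$ by putting $\tilde a_{q,s,l}=a_{q,s,l}+q^{\hat\theta_{q,s}}$ in $\mathds Z_{q^s}$ for every $(q,s,l)\in\mathcal G(J)$. This forces $\hat\theta_{q,s,l}=\hat\theta_{q,s}$ for all such $l$ (using the zero-difference convention when $\hat\theta_{q,s}=s$). Since for each $(q,s)\in\mathcal S(G)$ with $w_{q,s}\ne0$ there is at least one index $l$ and for $w_{q,s}=0$ there is none, the minimum over $\{(q,s,l)\in\mathcal G(J):q=p\}$ defining $\pmb\theta_{p,r}(a,\tilde a)$ collapses to $\min\{|r-s|^++\hat\theta_{q,s}:(q,s)\in\mathcal S(G),\,q=p,\,w_{q,s}\ne0\}=\pmb\theta_{p,r}(\hat\theta)$, so $\tilde a\in T_{\pmb\theta(\hat\theta)}(a)$.

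For the reverse inclusion, given $\tilde a\in T_\theta(a)$ with associated valuations $\hat\theta_{q,s,l}$, I would set $\hat\theta_{q,s}=\min_{l:(q,s,l)\in\mathcal G(J)}\hat\theta_{q,s,l}$ when $w_{q,s}\ne0$ and $\hat\theta_{q,s}=0$ otherwise; in both cases $0\le\hat\theta_{q,s}\le s$. Grouping the minimum defining $\theta_{p,r}=\pmb\theta_{p,r}(a,\tilde a)$ according to the pair $(q,s)$ and using $\min_l\bigl(|r-s|^++\hat\theta_{q,s,l}\bigr)=|r-s|^++\hat\theta_{q,s}$ yields $\theta_{p,r}=\pmb\theta_{p,r}(\hat\theta)$ for every $(p,r)\in\mathcal Q(G)$, hence $\theta=\pmb\theta(\hat\theta)\in\Theta(w)$.

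The work is essentially bookkeeping rather than a deep argument, so the main place to be careful is exactly that bookkeeping: correctly passing between the per-coordinate valuations $\hat\theta_{q,s,l}$ and the per-block values $\hat\theta_{q,s}$ — which works because every $l$ in a block sits in the same ring $\mathds Z_{q^s}$ and the defining minimum commutes with this grouping — and handling blocks with $w_{q,s}=0$ identically on both sides, since they contribute neither a coordinate in $\mathcal G(J)$ nor a term in $\pmb\theta(\hat\theta)$. I would also note, for completeness, the degenerate case in which some prime $p$ carries no weight at all: both the realizable $\theta_{p,r}$ and the corresponding entry of $\pmb\theta(\hat\theta)$ are then forced to $+\infty$ (equivalently, a trivial $p$-component of $H_\theta$) under the usual empty-minimum convention, so no genuine difficulty appears.
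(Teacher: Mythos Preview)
Your argument is correct and follows essentially the same route as the paper: the paper shows the inclusion $\{\theta:|T_\theta(a)|\ne 0\}\subseteq\Theta(w)$ by exactly your grouping trick---setting $\hat\theta_{q,s}=\min_{l}\hat\theta_{q,s,l}$ and rewriting the iterated minimum---and then says the converse is similar and omits it. Your proposal is in fact more complete, since you spell out both inclusions and handle the degenerate cases explicitly.
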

\begin{proof}
Provided in the Appendix \ref{section:Theta_S}.
\end{proof}

We have
\begin{align*}
\mathds{E}\{\alpha(x)\}&=\sum_{u\in A_{\epsilon}^n(U|x)} \sum_{a\in J} P\left(\phi(a)+B=u\right)\\
&=\frac{|A_{\epsilon}^n(U|x)|\cdot |J|}{|G|^n}
\end{align*}
and
\begin{align*}
\mathds{E}\{\alpha(x)^2\}&=\mathds{E}\left\{\sum_{u,\tilde{u}\in A_{\epsilon}^n(U|x)} \sum_{a,\tilde{a}\in J} \mathds{1}_{\{\phi(a)+B=u,\phi(\tilde{a})+B=\tilde{u}\}} \right\}\\
&=\sum_{u,\tilde{u}\in A_{\epsilon}^n(U|x)} \sum_{a,\tilde{a}\in J} P\left(\{\phi(a)+B=u,\phi(\tilde{a})+B=\tilde{u}\}\right) \\
&=\sum_{\theta\in\Theta(w)} \sum_{a\in J} \sum_{u\in A_{\epsilon}^n(U|x)} \sum_{\tilde{a}\in T_{\theta}(a)} \sum_{\substack{\tilde{u}\in A_{\epsilon}^n(U|x)\\\tilde{u}-u\in H_{\theta}^n}} \frac{1}{|G|^n}\cdot\frac{1}{|H_{\theta}|^n}
\end{align*}
Note that the term corresponding to $\theta=\pmb{0}$ is upper bounded by $\mathds{E}\{\alpha(x)\}^2$. Using Lemma \ref{dinesh_lemma}, we have
\begin{align*}
\left|A_{\epsilon}^n(U|x)\cap \left(u+H_{\theta}^n \right) \right|\le  2^{n[H(U|[U]_{\theta} X)+O(\epsilon)]}
\end{align*}
Therefore,
\begin{align*}
\mbox{\small var}\{\alpha\}&= \mathds{E}\{\alpha(x)^2\}-\mathds{E}\{\alpha(x)\}^2\\
&\le \sum_{\substack{\theta\in\Theta(w)\\\theta\ne \pmb{0}}} |J|\cdot |A_{\epsilon}^n(U|x)|
\prod_{(q,s)\in\mathcal{S}(G)} q^{\left(s- \displaystyle{\max_{\substack{(p,r)\in\mathcal{Q}(G)\\p=q}}}
\left(\theta_{p,r}-|r-s|^+\right)^+\right)k w_{q,s}}  \frac{2^{n[H(U|[U]_{\theta} X)+O(\epsilon)]}}{|G|^n\cdot |H_{\theta}|^n}
\end{align*}
Therefore,
\begin{align*}
P\left(\alpha(x)=0\right)&\le \frac{\mbox{\small var}\{\alpha(x)\}}{\mathds{E}\{\alpha(x)\}^2}\\
&\le \sum_{\substack{\theta\in\Theta(w)\\\theta\ne \pmb{0}}}
\prod_{(q,s)\in\mathcal{S}(G)} \!\!\!\!\! q^{\left( \! s- \!\!\!\!\!\! \displaystyle{\max_{\substack{(p,r)\in\mathcal{Q}(G)\\p=q}}} \!\!\!
\left(\theta_{p,r}-|r-s|^+\right)\!^+\!\right) k w_{q,s}} \frac{ 2^{-n[ H(U|X)-H(U|[U]_{\theta} X)-O(\epsilon)]}|G|^n }{|J|\cdot |H_{\theta}|^n}
\end{align*}
Note that $H(U|X)-H(U|[U]_{\theta} X)=H([U]_{\theta}|X)$ and
\begin{align*}
&|J|=\prod_{(q,s)\in\mathcal{S}(G)} q^{ksw_{q,s}}
\end{align*}
Therefore,
\begin{align*}
P&\left(\alpha(x)=0\right)\le\\
 &\sum_{\substack{\theta\in\Theta(w)\\\theta\ne \pmb{0}}} \!\!\!\!\!\exp_2 \!\! \left\{ \!\! -n \!\! \left[\! H([U]_{\theta}|X) \!-\! \log|G:H_{\theta}|
+ \frac{k}{n} \!\!\! \sum_{(q,s)\in\mathcal{S}(G)} \!\!\!\!\!\! w_{q,s} \log q  \displaystyle{\max_{\substack{(p,r)\in\mathcal{Q}(G)\\p=q}}}
\left(\theta_{p,r}-|r-s|^+\right)^+-O(\epsilon)\right]\right\}
\end{align*}
In order for the probability of error to go to zero as $n$ increases, we require the exponent of all the terms to be negative; or equivalently, for $\theta\in\Theta(w)$ and $\theta\ne \pmb{0}$,
\begin{align*}
R\frac{\sum_{(q,s)\in\mathcal{S}(G)}  \displaystyle{\max_{\substack{(p,r)\in\mathcal{Q}(G)\\p=q}}}  \left(\theta_{p,r}-|r-s|^+\right)^+ w_{q,s} \log q} {\sum_{(q,s)\in\mathcal{S}(G)} s w_{q,s} \log q} > \log |G:H_{\theta}|- H([U]_{\theta}|X)
\end{align*}
Therefore, the achievability condition is
\begin{align*}
R>\frac{1}{\omega_{\theta}} \left( \log |G:H_{\theta}|-H([U]_{\theta}|X)\right)
\end{align*}
with the convention $\frac{1}{0}=\infty$ and where
\begin{align*}
&\omega_{\theta}= \frac{\sum_{(q,s)\in\mathcal{S}}  \displaystyle{\max_{\substack{(p,r)\in\mathcal{Q}(G)\\p=q}}}  \left(\theta_{p,r}-|r-s|^+\right)^+   w_{q,s} \log q} {\sum_{(q,s)\in\mathcal{S}} s w_{q,s} \log q}
\end{align*}

Therefore, the achievable rate is equal to
\begin{align*}
R=\min_{\substack{w_{q,s}, (q,s)\in\mathcal{S}(G)\\\sum w_{q,s}=1}} \max_{\substack{\theta\in\Theta(w)\\\theta\ne \pmb{0}}} \frac{1}{\omega_{\theta}} I([U]_{\theta};X)
\end{align*}
%where for the first minimization $S=\bigotimes_{q\in\mathcal{P}(G)} \mathcal{S}_q$ for some $\mathcal{S}_q\subseteq \{1,\cdots,r_q\}$ and the second minimization is over weights $w_{q,s}$ for $q\in\mathcal{P}(G),s\in\mathcal{S}_q$ adding up to one.

%%%%%%%%%%%%%%%%%%%%%%%%%%%%%%%%%%%%%%%%%%%%%%%%%%%%%%%%%%%%%%%%%%%%%%%%%%%%%%%%%%%%%%%%%

\subsection{Simplification of the Rate for the $\mathds{Z}_{p^r}$ Case}\label{section:simplification}
In this section, we provide a proof of Corollary \ref{cor1} by showing that when $G=\mathds{Z}_{p^r}$ for some prime $p$ and positive integer $r$, then $I_{s.c.}^G(U;X)=R_1$ where
\begin{align}\label{eqn:Rate1_Zpr_Source}
R_1=\max_{\theta=1}^{r} \frac{r}{\theta}I([U]_{\theta};X)
\end{align}
% First, we show that the achievable rate is equivalent to
% \begin{align}\label{eqn:Rate2_Zpr_Source}
% R_2=\min_{\tilde{r}=1}^{r} \max_{\theta=1}^{\tilde{r}} \frac{\tilde{r}}{\theta+\tilde{r}-r}I([U]_{\theta};X)
% \end{align}
When $G=\mathds{Z}_{p^r}$ for some prime $p$ and positive integer $r$, we have $\mathcal{S}(G)=\mathcal{S}_p(G)=\{1,2,\cdots,r\}$. For fixed weights $w_s,s\in\mathcal{S}(G)$ adding up to one, define $\tilde{r}=\max \{s\in\mathcal{S}(G)|w_s\ne 0\}$. We have
\begin{align*}
J=\bigoplus_{s=1}^{\tilde{r}}\bigoplus_{l=1}^{kw_s} \mathds{Z}_{p^s}^{(l)}
\end{align*}
For $a\in J$ and for $\theta=1,\cdots, r$, let $T_{\theta}(a)$ be defined as in Lemma \ref{lemma:Joint_Prob}; i.e.
\begin{align*}
T_{\theta}(a)&=\{\tilde{a}\in J| \min_{\substack{s=1\in\{1,\cdots,\tilde{r}\}\\l=1,\cdots, kw_s}} r-s+\pmb{\hat{\theta}}_{s,l}(a,\tilde{a})=\theta \}\
\end{align*}
where for $a,\tilde{a}\in J$ and for $s=1,\cdots,\tilde{r}$, $l=1\cdots,kw_s$, $\pmb{\hat{\theta}}_{s,l}(a,\tilde{a})=\min\{0\le \hat{\theta}_{s,l}\le s|\tilde{a}_{s,l}-a_{s,l}\in p^{\hat{\theta}_{s,l}}\mathds{Z}_{p^s}\}$. Note that for $\theta=r$, we have $T_{\theta}(a)=\{a\}$ and for $0\le \theta<r$,
\begin{align*}
T_{\theta}(a)=\{\tilde{a}\in J|&\forall s=1,\cdots,\tilde{r},l=1,\cdots,kw_s: \tilde{a}_{s,l}-a_{s,l}\in p^{|\theta+s-r|^+}\mathds{Z}_{p^s} \\
 &\exists s,l: \tilde{a}_{s,l}-a_{s,l}\in p^{|\theta+s-r|^+}\mathds{Z}_{p^s} \backslash p^{|\theta+1+s-r|^+}\mathds{Z}_{p^s}\}
\end{align*}
Note that $\left|p^{|\theta+s-r|^+}\mathds{Z}_{p^s}\right|=\min(p^{r-\theta},p^s)$ and $\left|p^{|\theta+1+s-r|^+}\mathds{Z}_{p^s}\right|=\min(p^{r-\theta-1},p^s)$. Therefore,
\begin{align*}%&=\left(\prod_{s,l} \min(p^{r-\theta},p^s)\right)-\left(\prod_{s,l} \min(p^{(r-\theta-1)^+},p^s)\right)\\
|T_{\theta}(a)|
&= \left(\prod_{s=1}^{{r}} \prod_{l=1}^{k w_s} \min(p^{r-\theta},p^s)\right)-\left(\prod_{s=1}^{{r}} \prod_{l=1}^{k w_s} \min(p^{r-\theta-1},p^s)\right)\\
&= \left(\prod_{s=1}^{r-\theta-1} p^{s k w_s}\right) \cdot \left(\prod_{s=r-\theta}^{r} p^{(r-\theta) k w_s}\right) - \left(\prod_{s=1}^{r-\theta-1} p^{s k w_s}\right) \cdot \left(\prod_{s=r-\theta}^{r} p^{(r-\theta-1) k w_s}\right)\\
&=\left(\prod_{s=1}^{r-\theta-1} p^{s k w_s}\right) \cdot \left(\prod_{s=r-\theta}^{r} p^{(r-\theta) k w_s}\right) \cdot \left[1-\frac{1}{p^{k\sum_{s=r-\theta}^r w_s}}\right]
\end{align*}
This means for $\theta< r-\tilde{r}$, $|T_{\theta}|=0$ and for $r-\tilde{r}\le \theta\le r$ and $|T_{\theta}|\ne0$. Therefore $\Theta(w)= \{r-\tilde{r},\cdots,r\}$. The achievable rate is given by
\begin{align}\label{eqn:Rate_Source_Zpr}
R= \min_{\substack{w_1,\cdots,w_r\\w_1+\cdots+w_r=1}} \max_{\substack{\theta\in\Theta(w)\\\theta\ne 0}} \frac{1}{1-\omega_{\theta}} I([U]_{\theta};X)
\end{align}
where for $\theta\in\Theta(w)$,
\begin{align}\label{eqn:omega_Zpr}
1-\omega_{\theta} &= 1- \frac{\sum_{s=1}^r \left(\theta+s-r\right)^+ w_{s}} {\sum_{s=1}^r s w_{s}}
=\frac{\sum_{s=1}^{r-\theta} sw_s+\sum_{s=r-\theta+1}^r (r-\theta)w_s}{\sum_{s=1}^r sw_s}
\end{align}

Note that for $\theta\ge r-\tilde{r}$, we have
\begin{align*}
\frac{r-\theta}{\tilde{r}} \sum_{s=1}^{\tilde{r}} s w_s &=\frac{r-\theta}{\tilde{r}} \sum_{s=1}^{r-\theta} s w_s+ \sum_{s=r-\theta+1}^{\tilde{r}} \frac{s}{\tilde{r}} (r-\theta)w_s\\
&\le \sum_{s=1}^{r-\theta} s w_s+ \sum_{s=r-\theta+1}^{\tilde{r}} (r-\theta)w_s\\
&=(1-\omega_{\theta})\sum_{s=1}^{\tilde{r}} s w_s
\end{align*}
Therefore, it follows that $1-\omega_{\theta}\ge \frac{r-\theta}{\tilde{r}}$ or equivalently, $\omega_{\theta}\le \frac{\theta+\tilde{r}-r}{\tilde{r}}$. Let $\tilde{w}_1=\cdots=\tilde{w}_{\tilde{r}-1}=\tilde{w}_{\tilde{r}+1}=\tilde{w}_r=0$ and let $\tilde{w}_{\tilde{r}}=1$. Define $\tilde{\omega}_{\theta}$ using Equation \eqref{eqn:omega_Zpr} replacing $w$'s with $\tilde{w}$'s to get $\tilde{\omega}_{\theta}=\frac{\theta+\tilde{r}-r}{\tilde{r}}$. It follows that we always have $\omega_{\theta}\le \tilde{\omega}_{\theta}$ and therefore, it is always optimal to put all the weight on $w_{\tilde{r}}$ if we are confined to have $w_{\tilde{r}+1}=\cdots=w_r=0$. It follows that the achievable rate is equivalent to
\begin{align}\label{eqn:Rate_Source_Zpr2}
R=\min_{\tilde{r}\in\{1,\cdots,r\}} \max_{\substack{\theta\in\{r-\tilde{r},\cdots,r\}\\\theta\ne 0}} \frac{\tilde{r}}{\theta+\tilde{r}-r} I([U]_{\theta};X)
\end{align}
For $\tilde{r}< r$, since by convention $\frac{1}{0}=\infty$, the corresponding term is infinity. It implies that $\tilde{r}=r$ achieves the optimal rate. Hence,
\begin{align}\label{eqn:Rate_Source_Zpr2}
I_{s.c.}^G=\max_{\substack{\theta\in\{1,\cdots,r\}}} \frac{r}{\theta} I([U]_{\theta};X)
\end{align}

%%%%%%%%%%%%%%%%%%%%%%%%%%%%%%%%%%%%%%%%%%%%%%%%%%%

\section{Proof of Channel Coding}\label{section:proof_channel}

\subsection{Encoding and Decoding}

Following the analysis of Section \ref{Image_Ensemble}, we construct the ensemble of group codes of length $n$ over $G$ as the image of all homomorphisms $\phi$ from some Abelian group $J$ into $G^n$ where $J$ and $G^n$ are as in Equations \eqref{eqn:J2} and \eqref{eqn:Gn} respectively. The random homomorphism $\phi$ is described in Equation \eqref{eqn:phi}.

To find an achievable rate, we use a random coding argument in which the random encoder is characterized by the random homomorphism $\phi$ and a random vector $B$ uniformly distributed over $G^n$. Given a message $u\in J$, the encoder maps it to $x=\phi(u)+B$ and $x$ is then fed to the channel. At the receiver, after receiving the channel output $y\in\mathcal{Y}^n$, the decoder looks for a unique $\tilde{u}\in J$ such that $\phi(\tilde{u})+B$ is jointly typical with $y$ with respect to the distribution $p_X W_{Y|X}$ where $p_X$ is uniform over $G$. If the decoder does not find such $\tilde{u}$ or if such $\tilde{u}$ is not unique, it declares error.\\

\subsection{Error Analysis}\label{section:ErrAnalysis_General_Channel}
Let $u$, $x$ and $y$ be the message, the channel input and the channel output respectively. The error event can be characterized by the union of two events: $E(u)=E_1(u)\cup E_2(u)$ where $E_1(u)$ is the event that $\phi(u)+B$ is not jointly typical with $y$ and $E_2(u)$ is the event that there exists a $\tilde{u}\ne u$ such that $\phi(\tilde{u})+B$ is jointly typical with $y$. We can provide an upper bound on the probability of the error event as $P(E(u))\le P(E_1(u))+P(E_2(u)\cap (E_1(u))^c)$. Using the standard approach, one can show that $P\left(E_1(u)\right)\rightarrow 0$ as $n\rightarrow \infty$. The probability of the error event $E_2(u)\cap (E_1(u))^c$ averaged over all messages can be written as
\begin{align*}
P_{avg}(E_2(u)\cap (E_1(u))^c)=\sum_{u\in J} \frac{1}{|J|}
\sum_{x\in G^n}\mathds{1}_{\{\phi(u)+B=x\}}
\sum_{y\in A_{\epsilon}^n(Y|x)} W^n_{Y|X}(y|x)
\mathds{1}_{\{\exists \tilde{u}\in J: \tilde{u}\ne u, \phi(\tilde{u})+B\in A_{\epsilon}^n(X|y) \}}
\end{align*}
The expected value of this probability over the ensemble is given by $\mathds{E}\{P_{avg}(E_2(u)\cap (E_1(u))^c)\}=P_{err}$ where
\begin{align*}
P_{err}=\sum_{u\in J} \frac{1}{|J|}
\sum_{x\in G^n} \sum_{y\in A_{\epsilon}^n(Y|x)} W^n_{Y|X}(y|x)
P\left(\phi(u)+B=x, \exists \tilde{u}\in J: \tilde{u}\ne u, \phi(\tilde{u})+B\in A_{\epsilon}^n(X|y) \right)
\end{align*}
Using the union bound, we have
\begin{align*}
P_{err}\le \sum_{u\in J} \frac{1}{|J|}
\sum_{x\in G^n} \sum_{y\in A_{\epsilon}^n(Y|x)}
\sum_{\substack{\tilde{u}\in J\\ \tilde{u}\ne u}}
\sum_{\tilde{x}\in A_{\epsilon}^n(X|y)}
W^n_{Y|X}(y|x)
P\left(\phi(u)+B=x, \phi(\tilde{u})+B=\tilde{x} \right)
\end{align*}
Define $\Theta(w)$ as in Equation \eqref{eqn:Theta_S} and for $\theta\in\Theta(w)$ and $u\in J$, define $T_{\theta}(u)$ as in Lemma \ref{lemma:Joint_Prob_T}. It follows that
\begin{align*}
P_{err} &\le \sum_{u\in J} \frac{1}{|J|}
\sum_{x\in G^n} \sum_{y\in A_{\epsilon}^n(Y|x)}
\sum_{\substack{\theta\in\Theta(w)\\\theta\ne \pmb{r}}}
\sum_{\tilde{u}\in T_{\theta}(u)}
\sum_{\substack{\tilde{x}\in A_{\epsilon}^n(X|y)}}
W^n_{Y|X}(y|x)
P\left(\phi(u)+B=x, \phi(\tilde{u})+B=\tilde{x} \right)
\end{align*}
Using Lemmas \ref{lemma:Joint_Prob}, \ref{dinesh_lemma} and \ref{lemma:Joint_Prob_T}, we have
\begin{align*}
P_{err}&\le \sum_{\substack{\theta\in \Theta(w)\\\theta\ne \pmb{r}}}
\sum_{u\in J} \frac{1}{|J|}
\sum_{x\in G^n} \sum_{y\in A_{\epsilon}^n(Y|x)}
\sum_{\tilde{u}\in T_{\theta}(u)}
\sum_{\substack{\tilde{x}\in A_{\epsilon}^n(X|y)\\ \tilde{x}\in x+H_{\theta^n}}}
W^n_{Y|X}(y|x)
\frac{1}{|G|^n}\frac{1}{|H_{\theta}|^n}\\
&\le \sum_{\substack{\theta\in \Theta(w)\\\theta\ne \pmb{r}}}
\sum_{u\in J} \frac{1}{|J|}
\sum_{x\in G^n} \sum_{y\in A_{\epsilon}^n(Y|x)}
\sum_{\tilde{u}\in T_{\theta}(u)} W^n_{Y|X}(y|x)
2^{n[H(X|Y[X]_{\theta})+O(\epsilon)]}
\frac{1}{|G|^n}\frac{1}{|H_{\theta}|^n}\\
&\le \sum_{\substack{\theta\in \Theta(w)\\\theta\ne \pmb{r}}}
\sum_{u\in J} \frac{1}{|J|}
\prod_{(q,s)\in\mathcal{S}(G)} q^{\left(s- \displaystyle{\max_{\substack{(p,r)\in\mathcal{Q}(G)\\p=q}}}
\left(\theta_{p,r}-|r-s|^+\right)^+\right)k w_{q,s}}
2^{n[H(X|Y[X]_{\theta})+O(\epsilon)]}
\frac{1}{|H_{\theta}|^n}
\end{align*}
Equivalently, this can be written as
\begin{align*}
&P_{err}\le\\
&\sum_{\substack{\theta\in \Theta(w)\\\theta\ne \pmb{r}}} \!\!\!\!\!
\exp_2 \!\! \left\{ \!\! -n \!\! \left[ \! \frac{-k}{n} \!\!\!\! \sum_{(q,s)\in\mathcal{S}(G)} \!\!\! \left(\!\!s \!-\!\!\!\!\!
\displaystyle{\max_{\substack{(p,r)\in\mathcal{Q}(G)\\p=q}}}  \!\!
\left(\theta_{p,r}-|r-s|^+\right)^+ \!\!\right) \!\! w_{q,s} \log q \!-\! H(X|Y[X]_{\theta}) \!+\! \log|H_{\theta}| \!-\! O(\!\epsilon)\!\right]\!\!\right\}
\end{align*}
Recall that $R=\frac{k}{n}\sum_{(q,s)\in\mathcal{S}(G)}s w_{q,s} \log q$. In order for the probability of error to go to zero, we require the exponent of all the terms to be negative; or equivalently, for $\theta\in \Theta(w)$ and $\theta\ne \pmb{r}$,
\begin{align*}
R\frac{\sum_{(q,s)\in\mathcal{S}(G)} \left(s- \displaystyle{\max_{\substack{(p,r)\in\mathcal{Q}(G)\\p=q}}}   \left(\theta_{p,r}-|r-s|^+\right)^+\right) w_{q,s} \log q}{\sum_{q,s} s w_{q,s} \log q} &< \log |H_{\theta}|-H(X|Y[X]_{\theta})
\end{align*}
Therefore, the achievability conditions are
\begin{align*}
R\le \frac{1}{1-\omega_{\theta}} I(X;Y|[X]_{\theta})
\end{align*}
for all $\theta\in \Theta(w)$ such that $\theta\ne \pmb{r}$ where $\omega_{\theta}$ is defined in \eqref{eqn:omega_theta}. This means that the following rate is achievable
\begin{align*}
R=\min_{\substack{\theta\in \Theta(w)\\\theta\ne \pmb{r}}} \frac{1}{1-\omega_{\theta}} I(X;Y|[X]_{\theta})
\end{align*}
If we maximize over the choice of $w$, we can conclude that the rate $R=I_{c.c}^G(X;Y)$ is achievable.

%%%%%%%%%%%%%%%%%%%%%%%%%%%%%%%%%%%%%%%%%%%%%%%%%%%%%%%%%%%%%%%%%%%%%%%%%%%%%

\subsection{Simplification of the Rate for the $\mathds{Z}_{p^r}$ Case}\label{section:simplification_channel}
In this section, we provide a proof of Corollary \ref{cor2} by showing that if $G=\mathds{Z}_{p^r}$ for some prime $p$ and a positive integer $r$, then $I_{c.c.}^G(X;Y)=R_1$ where
\begin{align}\label{eqn:Rate1_Zpr_Channel}
R_1=\min_{\theta=0}^{r-1}\frac{r}{r-\theta} I(X;Y|[X]_{\theta})
\end{align}
First, we show that the achievable rate is equivalent to
\begin{align}\label{eqn:Rate2_Zpr_Channel}
R_2=\max_{\tilde{r}=1}^{r} \min_{\theta=r-\tilde{r}}^{r-1} \frac{\tilde{r}}{r-\theta} I(X;Y|[X]_{\theta})
\end{align}

When $G=\mathds{Z}_{p^r}$ for some prime $p$ and positive integer $r$, we have $\mathcal{S}(G)=\mathcal{S}_p(G)=\{1,2,\cdots,r\}$.
For fixed weights $w_s,s\in\mathcal{S}(G)$ adding up to one, define $\tilde{r}=\max \{s\in\mathcal{S}(G)|w_s\ne 0\}$.
Similarly to the source coding case, we can show that for $\theta\in\Theta(w)=\{r-\tilde{r},\cdots,r\}$, we have $1-w_{\theta}\ge \frac{r-\theta}{\tilde{r}}$
and it is always optimal to put all the weight on $w_{\tilde{r}}$ if we are confined to have $w_{\tilde{r}+1}=\cdots=w_r=0$. It follows that the achievable rate provided in Equation \eqref{eqn:Rate2_Zpr_Channel} is equal to $I_{c.c.}^G(X;Y)$. Next, we show that the rate in Equation \eqref{eqn:Rate2_Zpr_Channel} is equal to the rate in Equation \eqref{eqn:Rate1_Zpr_Channel}. We need the following lemma:

\begin{lemma}\label{lemma:IXtheta_Xthetaprime}
Let $\theta$ and $\tilde{\theta}$ be such that $0\le \tilde{\theta}\le \theta\le r$. Then
\begin{align*}
I(X;Y|[X]_{\theta})\le I(X;Y|[X]_{\tilde{\theta}})
\end{align*}
\end{lemma}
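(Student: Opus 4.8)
The plan is to prove the monotonicity of the conditional mutual information $I(X;Y\mid [X]_\theta)$ in the parameter $\theta$ by showing that $[X]_{\tilde\theta}$ is a function of $[X]_\theta$ whenever $\tilde\theta\le\theta$, so that the smaller-$\theta$ coset variable is a coarser (less informative) quantization of $X$, and then invoking a data-processing-type argument. Concretely, with $G=\mathds{Z}_{p^r}$ and $H_\theta=p^\theta\mathds{Z}_{p^r}$, we have $H_{\tilde\theta}\supseteq H_\theta$ when $\tilde\theta\le\theta$ (since $p^\theta\mathds{Z}_{p^r}\subseteq p^{\tilde\theta}\mathds{Z}_{p^r}$). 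Hence the partition of $G$ into cosets of $H_{\tilde\theta}$ is a coarsening of the partition into cosets of $H_\theta$, and the coset $[X]_{\tilde\theta}=X+H_{\tilde\theta}$ is determined by $[X]_\theta=X+H_\theta$ via the canonical projection $G/H_\theta\twoheadrightarrow G/H_{\tilde\theta}$. Thus there is a deterministic map $f$ with $[X]_{\tilde\theta}=f([X]_\theta)$.

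First I would make the above divisibility/coset statement precise and record that $[X]_{\tilde\theta}=f([X]_\theta)$ for this deterministic $f$. Next I would expand both sides using the chain rule for mutual information. The key identity is
\begin{align*}
I(X;Y\mid [X]_{\tilde\theta}) &= I(X,[X]_\theta;Y\mid [X]_{\tilde\theta})\\
&= I([X]_\theta;Y\mid [X]_{\tilde\theta}) + I(X;Y\mid [X]_\theta,[X]_{\tilde\theta})\\
&= I([X]_\theta;Y\mid [X]_{\tilde\theta}) + I(X;Y\mid [X]_\theta),
\end{align*}
where the first equality uses that $[X]_\theta$ is a function of $X$, and the last uses that $[X]_{\tilde\theta}$ is a function of $[X]_\theta$ so conditioning on it in addition is redundant. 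Since $I([X]_\theta;Y\mid [X]_{\tilde\theta})\ge 0$, this immediately gives $I(X;Y\mid [X]_{\tilde\theta})\ge I(X;Y\mid [X]_\theta)$, which is the claim.

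The only genuinely substantive point — and the step I would be most careful with — is the structural claim that $[X]_{\tilde\theta}$ is a deterministic function of $[X]_\theta$, i.e. that $H_\theta\subseteq H_{\tilde\theta}$ for $\tilde\theta\le\theta$; everything after that is a routine application of the chain rule and nonnegativity of (conditional) mutual information. I would also remark that the same argument works verbatim for a general Abelian group $G$ provided the subgroups are nested, $H_\theta\supseteq H_{\theta'}$ whenever $\theta\le\theta'$ componentwise (which holds for $H_\theta=\bigoplus_{(p,r,m)} p^{\theta_{p,r}}\mathds{Z}_{p^r}^{(m)}$), so the lemma is really a statement about monotonicity along a chain of quotients; but for the purposes of Corollary \ref{cor2} only the $\mathds{Z}_{p^r}$ case is needed.
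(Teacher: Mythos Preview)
Your proof is correct and rests on the same key structural observation as the paper: since $\tilde\theta\le\theta$ implies $H_\theta=p^\theta\mathds{Z}_{p^r}\subseteq p^{\tilde\theta}\mathds{Z}_{p^r}=H_{\tilde\theta}$, the variable $[X]_{\tilde\theta}$ is a deterministic function of $[X]_\theta$. The only cosmetic difference is in how the information-theoretic step is packaged: the paper writes $I(X;Y\mid[X]_\theta)=I(X;Y)-I([X]_\theta;Y)$ (valid because $[X]_\theta$ is a function of $X$) and then invokes the data-processing inequality on the Markov chain $[X]_{\tilde\theta}\leftrightarrow[X]_\theta\leftrightarrow Y$ to conclude $I([X]_\theta;Y)\ge I([X]_{\tilde\theta};Y)$, whereas you use the chain rule directly to write the difference as the nonnegative quantity $I([X]_\theta;Y\mid[X]_{\tilde\theta})$. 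These are two phrasings of the same identity, so the approaches are essentially equivalent.
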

\begin{proof}
Note that $[X]_{\theta}$ and $[X]_{\tilde{\theta}}$ are both functions of $X$ and therefore
\begin{align*}
&I(X;Y|[X]_{\theta})=I(X;Y)-I([X]_{\theta};Y)\\
&I(X;Y|[X]_{\tilde{\theta}})=I(X;Y)-I([X]_{\tilde{\theta}};Y)
\end{align*}
Furthermore, note that since $\tilde{\theta}\le \theta$ the Markov chain $[X]_{\tilde{\theta}}\leftrightarrow [X]_{\theta}\leftrightarrow Y$ holds and therefore, $I([X]_{\theta};Y)\ge I([X]_{\tilde{\theta}};Y)$. This proves the lemma.
\end{proof}
Let $\theta^*$ be the minimizer in Equation \eqref{eqn:Rate1_Zpr_Channel}. For $r-\theta^*\le \tilde{r}< r$ we have:
\begin{align*}
\min_{\theta=r-\tilde{r}}^{r-1} \frac{\tilde{r}}{r-\theta} I(X;Y|[X]_{\theta}) &\le \left[ \frac{\tilde{r}}{r-\theta} I(X;Y|[X]_{\theta})\right]_{\theta=\theta^*}\\
%&=\frac{\tilde{r}}{r-\theta^*} I(X;Y|[X]_{\theta^*})\\
&< \frac{r}{r-\theta^*} I(X;Y|[X]_{\theta^*})\\
&= R_1
\end{align*}
For $\tilde{r}< r-\theta^*$ we have:
\begin{align*}
\min_{\theta=r-\tilde{r}}^{r-1} \frac{\tilde{r}}{r-\theta} I(X;Y|[X]_{\theta}) &\le \left[ \frac{\tilde{r}}{r-\theta} I(X;Y|[X]_{\theta})\right]_{\theta=r-\tilde{r}}\\
&= I(X;Y|[X]_{r-\tilde{r}})\\
&\le I(X;Y|[X]_{\theta^*})\\
&\le R_1
\end{align*}
Therefore, it follows that the rate $R_1$ is equivalent to the rate $R_2$ and hence $I_{c.c.}^G(X;Y)=R_1$.

\subsection{$G$-Symmetric Channels}\label{section:simplification_symmetric}
In this section, we provide a proof of corollary \ref{cor:symmetric}. Note that since we take $\mathcal{X}=G$, we can take the action of $G$ on $\mathcal{X}$ to be the group operation. We need to show that for all subgroups $H$ of $G$, $I(X;Y|[X])=C_H$ where $X=X+H$ and $C_H$ is the mutual information between the channel input and the channel output when the input is uniformly distributed over $H$; in other words, $C_H=I(X;Y|[X]=H)$. This in turn follows by showing that for all $g\in G$
\begin{align*}
I(X;Y|[X]=g+H)=I(X;Y|[X]=H)
\end{align*}
This can be shown as follows:
\begin{align*}
I(X;Y|[X]=g+H)&=\sum_{x\in g+H}\sum_{y\in\mathcal{Y}} \frac{1}{|H|} W(y|x)\log \frac{W(y|x)}{P(y)}\\
&=\sum_{\tilde{x}\in H}\sum_{y\in\mathcal{Y}} \frac{1}{|H|} W(y|\tilde{x}+g)\log \frac{W(y|\tilde{x}+g)}{P(y)}\\
&\stackrel{(a)}{=}\sum_{\tilde{x}\in H}\sum_{y\in\mathcal{Y}} \frac{1}{|H|} W(g\cdot y|\tilde{x}+g)\log \frac{W(g\cdot y|\tilde{x}+g)}{P(y)}\\
&\stackrel{(b)}{=}\sum_{\tilde{x}\in H}\sum_{y\in\mathcal{Y}} \frac{1}{|H|} W(y|\tilde{x})\log \frac{W(y|\tilde{x})}{P(y)}\\
&=I(X;Y|[X]=H)
\end{align*}
where $(a)$ follows since the action of $g$ on $\mathcal{Y}$ is a bijection of $\mathcal{Y}$ and $(b)$ follows from the symmetric property of the channel. Using this result, it can be shown that the rate provided in \cite[Equation (33)]{fagnani_abelian} is equal to $I_{c.c.}^G(X;Y)$. The difference in the appearance of the two expressions is due to the fact that in \cite[Equation (33)]{fagnani_abelian} the minimization is carried out over the subgroups of the input group whereas in the expression for $I_{c.c.}^G(X;Y)$ the minimization is carried out over the resulting subgroups of the output group.

\section{Examples}\label{section:Examples}
In this section, we provide a few examples for both the source coding problem as well as the channel coding problem. We show that when the underlying group is a field, the source coding group mutual information and the channel coding group mutual information are both equal to the Shannon mutual information. We also provide several non-field examples for both problems.
\subsection{Examples for Source Coding}
In this section, we find the rate-distortion region for a few examples. First, we consider the case where the underlying group is a field i.e. when $G=\mathds{Z}_p^m$ for some prime $p$ and positive integer $m$.
In this case, we have $\mathcal{P}(G)=\{p\}$, $\mathcal{R}_p(G)=\{1\}$, $M_{p,1}=m$ and $\mathcal{S}=\mathcal{S}_p(G)=\{1\}$. Since the set $\mathcal{S}$ is a singleton, the only choice for the weights is $w=w_{p,1}=1$ for which
\begin{align*}
\Theta(w)=\left\{0,1\right\}
\end{align*}
and for $\theta=1$, we have $w_{\theta}=0$ and $[U]_{\theta}=U$. Hence
\begin{align*}
I_{s.c.}^G=I(U;X)
\end{align*}
This means when the underlying group is a field, the rate is equal to the regular mutual information between $U$ and $X$ when $U$ is a uniform random variable.

Next, we consider the case where the reconstruction alphabet is $\mathds{Z}_{4}$. In this case, we have $p=2$ and $r=2$. Therefore,
\begin{align*}
R&=\max_{\theta=1}^{2}\frac{2}{\theta} I([U]_{\theta};X)\\
&=\max(2I([U]_1;X),I(U;X))
\end{align*}
where $U$ is uniform over $\mathds{Z}_{4}$, $X$ is the source output and $[U]_1=U+2^1\mathds{Z}_4=X+\{0,2\}$ and the joint distribution is such that $\mathds{E}\{d(U,X)\}\le D$. Therefore,
\begin{align*}
2I([U]_1;X)=I(U+\{0,2\};X)+I(U+\{1,3\};X)
\end{align*}
Hence,
\begin{align*}
R&=\max\left(I(U;X),I(U+\{0,2\};X)+I(U+\{1,3\};X)\right)
\end{align*}

Next,  we consider the case where the reconstruction alphabet is $\mathds{Z}_{8}$. For this source, we have $p=2$ and $r=3$. Following a similar argument as above we have:
\begin{align*}
R=\max &\left(I(U;X), \frac{3}{2} I([U]_2;X), 3 I([U]_1;X)\right)
%R=\min &\left(I(X;Y), \frac{3}{8}\left(I(X;Y|X\in\{0,4\})+I(X;Y|X\in\{1,5\})+I(X;Y|X\in\{2,6\})+I(X;Y|X\in\{3,7\})\right), \right.\\ &\left.\frac{3}{2}\left(I(X;Y|X\in\{0,2,4,6\})+I(X;Y|X\in\{1,3,5,7\})\right)\right)
\end{align*}
where $U$ is uniform over $\mathds{Z}_{8}$, $X$ is the source output, $[U]_1=U+\{0,2,4,6\}$ and $[U]_2=U+\{0,4\}$.\\
Similarly, for channels with input $\mathds{Z}_9$, we have $p=3$, $r=2$ and
\begin{align*}
R&=\max\left(I(U;X),2 I([U]_1;X)\right)
%R&=\min\left(I(X;Y),\frac{2}{3}\left(I(X;Y|X\in\{0,3,6\})+I(X;Y|X\in\{1,4,7\})+I(X;Y|X\in\{2,5,8\})\right)\right)
\end{align*}
where $U$ is uniform over $\mathds{Z}_{9}$, $X$ is the source output and $[U]_1=U+\{0,3,6\}$.

Finally, we consider $G=\mathds{Z}_2\times \mathds{Z}_4$. In this case, $\mathcal{P}(G)=\{2\}$, $\mathcal{R}_2(G)=\{1,2\}$, $\mathcal{S}(G)=\mathcal{S}_2(G)=\{1,2\}$, $\pmb{0}=(0,0)$ and $w=(w_{1},w_{2})$ such that $w_{1}+w_{2}=1$. We have three cases for $\Theta(w)$:

\textbf{(1)} If $w_{2}=0$ (and $w_1=1$), we have $\Theta(w)=\{(0,1),(1,2)\}$. For $\theta=(0,1)$ we have $\omega_{\theta}=1$. Since by convention $\frac{1}{0}=\infty$, this implies that this case cannot be optimal.

\textbf{(2)} If $w_{1}=0$ (and $w_2=1$), we have $\Theta(w)=\{(0,0),(1,1),(1,2)\}$. For $\theta=(1,1)$ we have $\omega_{\theta}=\frac{1}{2}$ and for $\theta=(1,2)$ we have $\omega_{\theta}=0$ therefore,
\begin{align*}
R_2&=\max\left(2 I([U]_{\theta=(1,1)};X),I([U]_{\theta=(1,2)};X)\right)\\
&=\max\left(2 I([U]_{\theta=(1,1)};X),I(X;Y)\right)
\end{align*}
since $[U]_{\theta=(1,2)}=U$.

Finally, \textbf{(3)} If $0<w_1<1$ (and $w_2=1-w_1$), we have $\Theta(w)=\{(0,0),(0,1),(1,1),(1,2)\}$. For $\theta=(0,1)$ we have $\omega_{\theta}=\frac{w_1+w_2}{w_1+2 w_2}=\frac{1}{1+ w_2}$, for $\theta=(1,1)$ we have $\omega_{\theta}=\frac{w_2}{1+ w_2}$, and for $\theta=(1,2)$ we have $\omega_{\theta}=0$; therefore,
\begin{align*}
R_3&=\min_{w_1,w_2} \max\left((1+ w_2) I([U]_{\theta=(0,1)};X), \frac{1+ w_2}{w_2} I([U]_{\theta=(1,1)};X),  I([U]_{\theta=(1,2)};X)\right)\\
&=\min_{w_1,w_2} \max\left((1+ w_2) I([U]_{\theta=(0,1)};X), \frac{1+ w_2}{w_2} I([U]_{\theta=(1,1)};X),  I(U;X)\right)
\end{align*}
The maximum of $R_3$ is achieved when
\begin{align*}
(1+ w_2) I([U]_{\theta=(0,1)};X)=\frac{1+ w_2}{w_2} I([U]_{\theta=(1,1)};X)
\end{align*}
or equivalently
\begin{align*}
w_2=\frac{I([U]_{\theta=(1,1)};X)}{I([U]_{\theta=(0,1)};X)}
\end{align*}
Therefore,
\begin{align*}
R_3=\max\left(I([U]_{\theta=(1,1)};X)+I([U]_{\theta=(0,1)};X), I(X;Y)\right)
\end{align*}
Note that similarly to the proof of Lemma \ref{lemma:IXtheta_Xthetaprime} and by noting that $[U]_{\theta=(0,1)}\leftrightarrow [U]_{\theta=(1,1)}\leftrightarrow U \leftrightarrow X$ forms a Markov chain, we can show that
\begin{align*}
I([U]_{\theta=(0,1)};X)\le I([U]_{\theta=(1,1)};X) \le I(X;Y)
\end{align*}
This implies that $R_1\ge R_3$ and $R_2\ge R_3$. Therefore,
\begin{align*}
R=R_3=\max\left(I([U]_{\theta=(1,1)};X)+I([U]_{\theta=(0,1)};X), I(X;Y)\right)
\end{align*}
%A GENERAL EXAMPLE TO BE ADDED LATER.???
%In this section, we find the rate-distortion region for the case where the reconstruction alphabet is $\mathds{Z}_{2}\times \mathds{Z}_{4}$.

\subsection{Examples for Channel Coding}
In this section, we find the achievable rate for a few examples: First, we consider the case where the underlying group is a field i.e. when $G=\mathds{Z}_p^m$ for some prime $p$ and positive integer $m$. As in the source coding case, the only choice for the weights is $w=w_{p,1}=1$ for which $\Theta(w)=\left\{0,1\right\}$. For $\theta=0$, we have $w_{\theta}=1$ and $[U]_{\theta}$ is a trivial random variable. Hence
\begin{align*}
I_{s.c.}^G=I(U;X)
\end{align*}
This means when the underlying group is a field, the rate is equal to the regular mutual information between $U$ and $X$ when $U$ is a uniform random variable.

Next, we consider the case where the channel input alphabet is $\mathds{Z}_{4}$. In this case, we have $p=2$ and $r=2$. Therefore,
\begin{align*}
R&=\min_{\theta=0}^{1}\frac{2}{2-\theta} I(X;Y|[X]_{\theta})\\
&=\min(I(X;Y),2I(X;Y|[X]_1))
\end{align*}
where the channel input $X$ is uniform over $\mathds{Z}_{4}$, $Y$ is the channel output and $[X]_1=X+2^1\mathds{Z}_4=X+\{0,2\}$. Therefore,
\begin{align*}
2I(X;Y|[X]_1)=I(X;Y|X\in\{0,2\})+I(X;Y|X\in\{1,3\})
\end{align*}
Hence,
\begin{align*}
R&=\min\left(I(X;Y),I(X;Y|X\in\{0,2\})+I(X;Y|X\in\{1,3\})\right)
\end{align*}
Next, we consider a channel of input alphabet $\mathds{Z}_8$. For this channel we have $p=2$ and $r=3$. Following a similar argument as above we have:
\begin{align*}
R=\min &\left(I(X;Y), \frac{3}{2} I(X;Y|[X]_1), 3 I(X;Y|[X]_2)\right)
%R=\min &\left(I(X;Y), \frac{3}{8}\left(I(X;Y|X\in\{0,4\})+I(X;Y|X\in\{1,5\})+I(X;Y|X\in\{2,6\})+I(X;Y|X\in\{3,7\})\right), \right.\\ &\left.\frac{3}{2}\left(I(X;Y|X\in\{0,2,4,6\})+I(X;Y|X\in\{1,3,5,7\})\right)\right)
\end{align*}
where the channel input $X$ is uniform over $\mathds{Z}_{8}$, $Y$ is the channel output, $[X]_1=X+\{0,2,4,6\}$ and $[X]_2=X+\{0,4\}$.\\
Similarly, for channels with input $\mathds{Z}_9$, we have $p=3$, $r=2$ and
\begin{align*}
R&=\min\left(I(X;Y),2 I(X;Y|[X]_1)\right)
%R&=\min\left(I(X;Y),\frac{2}{3}\left(I(X;Y|X\in\{0,3,6\})+I(X;Y|X\in\{1,4,7\})+I(X;Y|X\in\{2,5,8\})\right)\right)
\end{align*}
where the channel input $X$ is uniform over $\mathds{Z}_{9}$, $Y$ is the channel output and $[X]_1=X+\{0,3,6\}$.

Finally, we consider $G=\mathds{Z}_2\times \mathds{Z}_4$. In this case, $\mathcal{P}(G)=\{2\}$, $\mathcal{R}_2(G)=\{1,2\}$, $\mathcal{S}(G)=\mathcal{S}_2(G)=\{1,2\}$, $\pmb{r}=(1,2)$ and $w=(w_{1},w_{2})$ such that $w_{1}+w_{2}=1$. We have three cases for $\Theta(w)$:

\textbf{(1)} If $w_{2}=0$ (and $w_1=1$), we have $\Theta(w)=\{(0,1),(1,2)\}$. For $\theta=(0,1)$ we have $\omega_{\theta}=1$; therefore,
\begin{align*}
R_1=\frac{1}{\omega_{\theta}} I(X;Y|[X]_{\theta})=I(X;Y|[X]_{\theta=(0,1)})
\end{align*}

\textbf{(2)} If $w_{1}=0$ (and $w_2=1$), we have $\Theta(w)=\{(0,0),(1,1),(1,2)\}$. For $\theta=(1,1)$ we have $\omega_{\theta}=\frac{1}{2}$ and for $\theta=(0,0)$ we have $\omega_{\theta}=1$; therefore,
\begin{align*}
R_2&=\min\left(2 I(X;Y|[X]_{\theta=(1,1)}),I(X;Y|[X]_{\theta=(0,0)})\right)\\
&=\min\left(2 I(X;Y|[X]_{\theta=(1,1)}),I(X;Y)\right)
\end{align*}
since $[X]_{\theta=(0,0)}$ is a trivial random variable.

Finally, \textbf{(3)} If $0<w_1<1$ (and $w_2=1-w_1$), we have $\Theta(w)=\{(0,0),(0,1),(1,1),(1,2)\}$. For $\theta=(0,0)$ we have $\omega_{\theta}=1$, for $\theta=(0,1)$ we have $\omega_{\theta}=\frac{w_1+w_2}{w_1+2 w_2}=\frac{1}{1+ w_2}$ and for $\theta=(1,1)$ we have $\omega_{\theta}=\frac{w_2}{1+ w_2}$ therefore,
\begin{align*}
R_3&=\max_{w_1,w_2} \min\left(\frac{1+ w_2}{ w_2} I(X;Y|[X]_{\theta=(1,1)}), (1+ w_2) I(X;Y|[X]_{\theta=(0,1)}),  I(X;Y|[X]_{\theta=(0,0)})\right)\\
&=\max_{w_1,w_2} \min\left(\frac{1+ w_2}{ w_2} I(X;Y|[X]_{\theta=(1,1)}), (1+ w_2) I(X;Y|[X]_{\theta=(0,1)}),  I(X;Y)\right)
\end{align*}
The maximum of $R_3$ is achieved when
\begin{align*}
\frac{1+ w_2}{ w_2} I(X;Y|[X]_{\theta=(1,1)})= (1+ w_2) I(X;Y|[X]_{\theta=(0,1)}
\end{align*}
or equivalently
\begin{align*}
w_2=\frac{I(X;Y|[X]_{\theta=(1,1)})}{I(X;Y|[X]_{\theta=(0,1)})}
\end{align*}
Therefore,
\begin{align*}
R_3=\min\left(I(X;Y|[X]_{\theta=(1,1)})+ I(X;Y|[X]_{\theta=(0,1)}), I(X;Y)\right)
\end{align*}
Note that similarly to the proof of Lemma \ref{lemma:IXtheta_Xthetaprime} and by noting that $[X]_{\theta=(0,1)}\leftrightarrow [X]_{\theta=(1,1)}\leftrightarrow X \leftrightarrow Y$ forms a Markov chain, we can show that
\begin{align*}
I(X;Y|[X]_{\theta=(1,1)}) \le I(X;Y|[X]_{\theta=(0,1)}) \le I(X;Y)
\end{align*}
This implies that $R_1\le R_3$ and $R_2\le R_3$. Therefore,
\begin{align*}
R=R_3=\min\left(I(X;Y|[X]_{\theta=(1,1)})+ I(X;Y|[X]_{\theta=(0,1)}), I(X;Y)\right)
\end{align*}

\section{Conclusion}\label{section:Conclusion}
We derived the achievable rate-distortion function using Abelian group codes for arbitrary discrete
memoryless sources. We showed that when the underlying group is a
field, these group codes are linear
codes, and this function is equivalent to the symmetric rate-distortion function i.e. the Shannon
rate-distortion function with the additional constraint that the reconstruction random variable is
uniformly distributed. We showed that when the underlying group is not a field, due to the algebraic
structure of the code, certain subgroups of the group appear in the rate-distortion function and cause
a larger rate for a given distortion level. We derived a similar result for the channel
coding problem; i.e. an  achievable rate using Abelian group codes for arbitrary discrete
memoryless channels. We showed that in the case of linear codes, it simplifies to the symmetric capacity
of the channel i.e. the Shannon capacity with the additional constraint that the channel input
distribution is uniformly distributed. For the case where the underlying group is not a field,
as in the source coding case, we observe that several subgroups of the group appear in the
achievable rate and this causes the rate to be smaller than the
symmetric capacity of the channel in general.

\section{Appendix}

\subsection{Proof of Lemma \ref{lemma:g}}\label{section:Proof_Image_Ensemble}
We first prove that for a homomorphism $\phi$, $g_{(q,s,l)\rightarrow (p,r,m)}$ satisfies the above conditions. First assume $p\ne q$. Note that the only nonzero component of $\mathbb{I}_{J:q,s,l}$ takes values from $\mathds{Z}_{q^s}$ and therefore
\begin{align*}
q^s \mathbb{I}_{J:q,s,l}=\overbrace{\sum}_{i=1,\cdots,q^s}^{(J)} \mathbb{I}_{J:q,s,l}=0
\end{align*}
Note that since $\phi$ is a homomorphism, we have $\phi(q^s \mathbb{I}_{J:q,s,l})=0$. On the other hand,
\begin{align*}
\phi(q^s \mathbb{I}_{J:q,s,l})&=\phi(\overbrace{\sum}_{i=1,\cdots,q^s}^{(J)}  \mathbb{I}_{J:q,s,l})\\
&=\overbrace{\sum}_{i=1,\cdots,q^s}^{(\tilde{G})} \phi(\mathbb{I}_{J:q,s,l})\\
&=\bigoplus_{(p,r,m)\in\mathcal{G}(\tilde{G})} \left[\overbrace{\sum}_{i=1,\cdots,q^s}^{(\tilde{G})} \phi(\mathbb{I}_{J:q,s,l})\right]_{p,r,m}\\
&=\bigoplus_{(p,r,m)\in\mathcal{G}(\tilde{G})} \overbrace{\sum}_{i=1,\cdots,q^s}^{(\mathds{Z}_{p^r})} \left[\phi(\mathbb{I}_{J:q,s,l})\right]_{p,r,m}\\
&=\bigoplus_{(p,r,m)\in\mathcal{G}(\tilde{G})} q^s \left[\phi(\mathbb{I}_{J:q,s,l})\right]_{p,r,m}\\
&=\bigoplus_{(p,r,m)\in\mathcal{G}(\tilde{G})} q^s g_{(q,s,l)\rightarrow (p,r,m)}
\end{align*}
%and this implies,
%\begin{align*}
%\left[\phi(q^s \mathds{I}_{J:q,s,l})\right]_{p,r,m}&=\left[q^s \phi(\mathds{I}_{J:q,s,l})\right]_{p,r,m}\\
%&=q^s g_{(q,s,l)\rightarrow (p,r,m)}
%\end{align*}
Therefore, we have $q^s g_{(q,s,l)\rightarrow (p,r,m)}=0 \pmod{p^r}$ or equivalently $q^s g_{(q,s,l)\rightarrow (p,r,m)}=C p^r$ for some integer $C$. Since $p\ne q$, this implies $p^r|g_{(q,s,l)\rightarrow (p,r,m)}$ and since $g_{(q,s,l)\rightarrow (p,r,m)}$ takes value from $\mathds{Z}_{p^r}$, we have $g_{(q,s,l)\rightarrow (p,r,m)}=0$.\\
%Note that for the case $p=q, r\le s$ the condition above is satisfied by definition of $g_{(q,s,l)\rightarrow (p,r,m)}$.\\
Next, assume $p=q$ and $r\ge s$. Note that same as above, we have $\phi(q^s \mathbb{I}_{J:q,s,l})=0$ and
\begin{align*}
\phi(q^s \mathbb{I}_{J:q,s,l})&=\bigoplus_{(p,r,m)\in\mathcal{G}(\tilde{G})} q^s g_{(q,s,l)\rightarrow (p,r,m)}
\end{align*}
and therefore, $q^s g_{(q,s,l)\rightarrow (p,r,m)}=0 \pmod{p^r}$. Since $g_{(q,s,l)\rightarrow (p,r,m)}$ takes values from $\mathds{Z}_{p^r}$ and $p=q$, this implies $p^{r-s}|g_{(q,s,l)\rightarrow (p,r,m)}$ or equivalently $g_{(q,s,l)\rightarrow (p,r,m)}\in p^{r-s}\mathds{Z}_{p^r}$.\\

Next we show that any mapping described by \eqref{eqn:hom_general_form1} satisfying the conditions of the lemma is a homomorphism. For two elements $a,b\in J$ and for $(p,r,m)\in\mathcal{G}(\tilde{G})$ we have
\begin{align}\label{eqn:ItIsHom1}
\nonumber \left[\phi(a+b)\right]_{p,r,m} &=\left[\phi\left( \bigoplus_{(q,s,l)\in\mathcal{G}(J)} (a_{q,s,l}+_{q^s} b_{q,s,l})\right)\right]_{p,r,m}\\
\nonumber &=\left[\phi\left( \overbrace{\sum}_{(q,s,l)\in\mathcal{G}(J)}^{(J)} (a_{q,s,l}+_{q^s} b_{q,s,l}) \mathbb{I}_{J:q,s,l}\right)\right]_{p,r,m}\\
\nonumber &=\left[\phi\left( \overbrace{\sum}_{(q,s,l)\in\mathcal{G}(J)}^{(J)} \overbrace{\sum}_{i=1,\cdots,a_{q,s,l}+_{q^s} b_{q,s,l}}^{(J)} \mathbb{I}_{J:q,s,l}\right)\right]_{p,r,m}\\
\nonumber &=\left[\overbrace{\sum}_{(q,s,l)\in\mathcal{G}(J)}^{(\tilde{G})} \overbrace{\sum}_{i=1,\cdots,a_{q,s,l}+_{q^s} b_{q,s,l}}^{(\tilde{G})} \phi\left( \mathbb{I}_{J:q,s,l}\right)\right]_{p,r,m}\\
\nonumber &=\overbrace{\sum}_{(q,s,l)\in\mathcal{G}(J)}^{(\mathds{Z}_{p^r})} \overbrace{\sum}_{i=1,\cdots,a_{q,s,l}+_{q^s} b_{q,s,l}}^{(\mathds{Z}_{p^r})} \left[\phi\left( \mathbb{I}_{J:q,s,l}\right)\right]_{p,r,m}\\
&=\overbrace{\sum}_{(q,s,l)\in\mathcal{G}(J)}^{(\mathds{Z}_{p^r})} \overbrace{\sum}_{i=1,\cdots,a_{q,s,l}+_{q^s} b_{q,s,l}}^{(\mathds{Z}_{p^r})} g_{(q,s,l)\rightarrow (p,r,m)}
%&=\overbrace{\sum}_{\substack{(q,s,l)\in\mathcal{G}(J)\\q=p}}^{(\mathds{Z}_{p^r})} \overbrace{\sum}_{i=1,\cdots,a_{q,s,l}+_{q^s} b_{q,s,l}}^{(\mathds{Z}_{p^r})} g_{(q,s,l)\rightarrow (p,r,m)}\\
%
%&=\sum_{\substack{(q,s,l)\in\mathcal{G}(J)\\q=p}} \left(a_{q,s,l}g_{(q,s,l)\rightarrow (p,r,m)}+_{q^s} b_{q,s,l} g_{(q,s,l)\rightarrow (p,r,m)}\right)
\end{align}
%\begin{align*}
%\left[\phi(a+b)\right]_{p,r,m} &=\sum_{(q,s,l)\in\mathcal{G}(J)} (a_{q,s,l}+_{q^s} b_{q,s,l}) g_{(q,s,l)\rightarrow (p,r,m)}\\
%&=\sum_{\substack{(q,s,l)\in\mathcal{G}(J)\\q=p}} \left(a_{q,s,l}g_{(q,s,l)\rightarrow (p,r,m)}+_{q^s} b_{q,s,l} g_{(q,s,l)\rightarrow (p,r,m)}\right)
%\end{align*}
On the other hand, we have
\begin{align}\label{eqn:ItIsHom2}
\nonumber \left[\phi(a)+\phi(b)\right]_{p,r,m}&=\left[\phi(a)\right]_{p,r,m}+_{p^r} \left[\phi(b)\right]_{p,r,m}\\
\nonumber &=\left(\overbrace{\sum}_{(q,s,l)\in\mathcal{G}(J)}^{(\mathds{Z}_{p^r})} a_{q,s,l} g_{(q,s,l)\rightarrow (p,r,m)}\right)+_{p^r} \left(\overbrace{\sum}_{(q,s,l)\in\mathcal{G}(J)}^{(\mathds{Z}_{p^r})} b_{q,s,l} g_{(q,s,l)\rightarrow (p,r,m)}\right)\\
\nonumber &=\left(\overbrace{\sum}_{(q,s,l)\in\mathcal{G}(J)}^{(\mathds{Z}_{p^r})} \overbrace{\sum}_{i=1,\cdots,a_{q,s,l}}^{(\mathds{Z}_{p^r})} g_{(q,s,l)\rightarrow (p,r,m)}\right)+_{p^r} \left(\overbrace{\sum}_{(q,s,l)\in\mathcal{G}(J)}^{(\mathds{Z}_{p^r})} \overbrace{\sum}_{i=1,\cdots,b_{q,s,l}}^{(\mathds{Z}_{p^r})} g_{(q,s,l)\rightarrow (p,r,m)}\right)\\
&=\overbrace{\sum}_{(q,s,l)\in\mathcal{G}(J)}^{(\mathds{Z}_{p^r})} \overbrace{\sum}_{i=1,\cdots,a_{q,s,l}+b_{q,s,l}}^{(\mathds{Z}_{p^r})} g_{(q,s,l)\rightarrow (p,r,m)}
\end{align}
where the addition in $a_{q,s,l}+b_{q,s,l}$ is the integer addition.\\
In order to show that $\phi$ is a homomorphism, it suffices to show that under the conditions of the lemma, Equations \eqref{eqn:ItIsHom1} and \eqref{eqn:ItIsHom2} are equivalent. We show that for a fixed $(q,s,l)\in\mathcal{G}(J)$, if the conditions of the lemma are satisfied, then
\begin{align}\label{eqn:ItIsHom3}
\overbrace{\sum}_{i=1,\cdots,a_{q,s,l}+b_{q,s,l}}^{(\mathds{Z}_{p^r})} g_{(q,s,l)\rightarrow (p,r,m)}= \overbrace{\sum}_{i=1,\cdots,a_{q,s,l}+_{q^s} b_{q,s,l}}^{(\mathds{Z}_{p^r})} g_{(q,s,l)\rightarrow (p,r,m)}
\end{align}
Note that if $p\ne q$, then both summations are zero. Note that we have
\begin{align*}
\overbrace{\sum}_{i=1,\cdots,a_{q,s,l}+b_{q,s,l}}^{(\mathds{Z}_{p^r})} g_{(q,s,l)\rightarrow (p,r,m)}= \overbrace{\sum}_{i=1,\cdots,\left(a_{q,s,l}+b_{q,s,l}\right)\pmod{p^r}}^{(\mathds{Z}_{p^r})} g_{(q,s,l)\rightarrow (p,r,m)}
\end{align*}
and
\begin{align*}
\overbrace{\sum}_{i=1,\cdots,a_{q,s,l}+_{q^s}b_{q,s,l}}^{(\mathds{Z}_{p^r})} g_{(q,s,l)\rightarrow (p,r,m)}= \overbrace{\sum}_{i=1,\cdots,\left(a_{q,s,l}+_{q^s}b_{q,s,l}\right)\pmod{p^r}}^{(\mathds{Z}_{p^r})} g_{(q,s,l)\rightarrow (p,r,m)}
\end{align*}
If $p=q$ and $r\le s$, then we have $\left(a_{q,s,l}+_{q^s} b_{q,s,l}\right) \pmod{p^r}=\left(a_{q,s,l}+ b_{q,s,l}\right) \pmod{p^r}$ and hence it follows that Equation \eqref{eqn:ItIsHom3} is satisfied. If $p=q$ and $r\ge s$, since $g_{(q,s,l)\rightarrow (p,r,m)}\in p^{r-s} \mathds{Z}_{p^r}$ we have
\begin{align*}
\overbrace{\sum}_{i=1,\cdots,a_{q,s,l}+b_{q,s,l}}^{(\mathds{Z}_{p^r})} g_{(q,s,l)\rightarrow (p,r,m)}= \overbrace{\sum}_{i=1,\cdots,\left(a_{q,s,l}+b_{q,s,l}\right)\pmod{p^s}}^{(\mathds{Z}_{p^r})} g_{(q,s,l)\rightarrow (p,r,m)}
\end{align*}
and hence it follows that Equation \eqref{eqn:ItIsHom3} is satisfied.
%Note that if the conditions of the lemma are satisfied then we have
%\begin{align*}
%a_{q,s,l}g_{(q,s,l)\rightarrow (p,r,m)}+_{q^s} b_{q,s,l} g_{(q,s,l)\rightarrow (p,r,m)}&=a_{q,s,l}g_{(q,s,l)\rightarrow (p,r,m)}+_{p^r} b_{q,s,l} g_{(q,s,l)\rightarrow (p,r,m)}
%\end{align*}
%Therefore,
%\begin{align*}
%\left[\phi(a+b)\right]_{p,r,m} &=\sum_{\substack{(q,s,l)\in\mathcal{G}(J)\\q=p}} \left(a_{q,s,l}g_{(q,s,l)\rightarrow (p,r,m)}+_{p^r} b_{q,s,l} g_{(q,s,l)\rightarrow (p,r,m)}\right)\\
%&=\left[\phi(a)\right]_{p,r,m}+\left[\phi(b)\right]_{p,r,m}
%\end{align*}
%Hence, $\phi(a+b)=\phi(a)+\phi(b)$.

\subsection{Proof of Lemma \ref{lemma:Joint_Prob}}\label{section:Proof_Joint_Prob}
Note that since $g_{(q,s,l)\rightarrow (p,r,m)}$'s and $B$ are uniformly distributed, in order to find the desired joint probability, we need to count the number of choices for $g_{(q,s,l)\rightarrow (p,r,m)}$'s and  $B$ such that for $(p,r,m)\in\mathcal{G}(G^n)$,
\begin{align*}
%&\left(\sum_{s=1}^r u_s G_s\right)+B=x\\
%&\left(\sum_{s=1}^r \tilde{u}_s G_s\right)+B=\tilde{x}
&\left(\overbrace{\sum}_{(q,s,l)\in\mathcal{G}(J)}^{(\mathds{Z}_{p^r})} a_{q,s,l} g_{(q,s,l)\rightarrow (p,r,m)}\right)+_{p^r} B_{p,r,m}=u_{p,r,m}\\
&\left(\overbrace{\sum}_{(q,s,l)\in\mathcal{G}(J)}^{(\mathds{Z}_{p^r})} \tilde{a}_{q,s,l} g_{(q,s,l)\rightarrow (p,r,m)}\right)+_{p^r} B_{p,r,m}=\tilde{u}_{p,r,m}
\end{align*}
and divide this number by the total number of choices which is equal to
\begin{align*}
%\prod_{p\in\mathcal{P}(G)} \left(p^r\cdot \prod_{\substack{(q,s,l)\\q=p}} p^s\right)^n=
%\prod_{p\in\mathcal{P}(G)} \prod_{s=1}^{r_p} p^{s(k w_{p,s} n)}\cdot |G|^n
|G|^n\cdot \prod_{(p,r,m)\in\mathcal{G}(G^n)} \prod_{\substack{(q,s,l)\in\mathcal{G}(J)\\q=p}} p^{\min(r,s)} &=|G|^n\cdot \left[\prod_{(p,r,m)\in\mathcal{G}(G)} \prod_{\substack{(q,s,l)\in\mathcal{G}(J)\\q=p}} p^{\min(r,s)}\right]^n\\
\end{align*}
where the term $p^{\min(r,s)}$ appears since the number of choices for $g_{(q,s,l)\rightarrow (p,r,m)}$ is $p^r$ if $p=q, r\le s$ and is equal to $p^s$ if $p=q, r\ge s$.
Since $B$ can take values arbitrarily from $G^n$, the number of choices for the above set of conditions is equal to the number of choices for $g_{(q,s,l)\rightarrow (p,r,m)}$'s such that,
\begin{align*}
&\left(\overbrace{\sum}_{(q,s,l)\in\mathcal{G}(J)}^{(\mathds{Z}_{p^r})} (\tilde{a}_{q,s,l} - a_{q,s,l}) g_{(q,s,l)\rightarrow (p,r,m)}\right)=\tilde{u}_{p,r,m} -u_{p,r,m}
\end{align*}
Note that for all $(q,s,l)\in\mathcal{G}(J)$, $(\tilde{a}_{q,s,l} - a_{q,s,l})g_{(q,s,l)\rightarrow (p,r,m)}\in p^{\theta_{p,r}}\mathds{Z}_{p^r}$. Therefore we require $\tilde{u}_{p,r,m} - u_{p,r,m}\in p^{\theta_{p,r}}\mathds{Z}_{p^r}$ and therefore we require $\tilde{u}-u\in H_{\theta}^n$ or otherwise the probability would be zero.\\
For fixed $p\in\mathcal{P}(G)$ and $r\in\mathcal{R}_p(G)$, let $(q^*,s^*,l^*)\in\mathcal{G}(J)$ be such that $q^*=p$ and
\begin{align*}
\theta_{p,r}&=\left|r-s^*\right|^+ +\hat{\theta}_{q^*,s^*,l^*}
\end{align*}
For fixed $(p,r,m)\in\mathcal{G}(G^n)$, and for $(q,s,l)\ne (q^*,s^*,l^*)$, choose $g_{(q,s,l)\rightarrow (p,r,m)}$ arbitrarily from it's domain. The number of choices for this is equal to
\begin{align*}
\left[\prod_{(p,r,m)\in\mathcal{G}(G)} \prod_{\substack{(q,s,l)\in\mathcal{G}(J)\\q=p\\(q,s,l)\ne (q^*,s^*,l^*)}} p^{\min(r,s)}\right]^n\\
\end{align*}
For each $(p,r,m)\in\mathcal{G}(G^n)$, we need to have
\begin{align*}
&(\tilde{a}_{q^*,s^*,l^*}-a_{q^*,s^*,l^*}) g_{(q^*,s^*,l^*)\rightarrow (p,r,m)}=\tilde{u}_{p,r,m} -u_{p,r,m}-\left(\overbrace{\sum}_{\substack{(q,s,l)\in\mathcal{G}(J)\\(q,s,l)\ne (q^*,s^*,l^*)}}^{(\mathds{Z}_{p^r})} (\tilde{a}_{q,s,l}-a_{q,s,l}) g_{(q,s,l)\rightarrow (p,r,m)}\right)
\end{align*}
Note that the right hand side is included in $p^{\theta_{p,r}}\mathds{Z}_{p^r}$ and $(\tilde{a}_{q^*,s^*,l^*}-a_{q^*,s^*,l^*})$ is included in $p^{\hat{\theta}_{q^*,s^*,l^*}}\mathds{Z}_{(q^*)^{(s^*)}}$. We need to count the number of solutions for $g_{(q^*,s^*,l^*)\rightarrow (p,r,m)}$ in $p^{|r-s^*|^+} \mathds{Z}_{p^r}$. Using Lemma \ref{lemma:axb}, we can show that % the set of solutions is equal to
%\begin{align*}
%\{p^{\hat{\theta}-\hat{\theta}_{q^*,s^*,l^*}}\}\cdots????
%\end{align*}
%Note that this set is contained in $p^{|r-s^*|^+} Z_{p^r}$. Therefore,
the number of solutions is equal to $p^{\hat{\theta}_{q^*,s^*,l^*}}$. The total number of solutions for $\phi$ is equal to
\begin{align*}
\left[\left(\prod_{(p,r,m)\in\mathcal{G}} \prod_{\substack{(q,s,l)\in\mathcal{G}(J)\\q=p\\(q,s,l)\ne (q^*,s^*,l^*)}} p^{\min(r,s)}\right)\cdot p^{\hat{\theta}_{q^*,s^*,l^*}}\right]^n
\end{align*}
Hence we have
\begin{align*}
%&\prod_{(p,r,m)\in\mathcal{G}(G^n)} P\left((u_{q^*,s^*,l^*} -\tilde{u}_{q^*,s^*,l^*}) g_{(q^*,s^*,l^*)\rightarrow (p,r,m)}=\tilde{x}_{p,r,m} -x_{p,r,m}-\left(\overbrace{\sum}_{(q,s,l)\in\mathcal{G}(J)}^{(\mathds{Z}_{p^r})} (u_{q,s,l} -\tilde{u}_{q,s,l}) g_{(q,s,l)\rightarrow (p,r,m)}\right) \right)\\
P\left(\phi(a)+B=u,\phi(\tilde{a})+B=\tilde{u}\right)
& =\frac{\left[\prod_{(p,r,m)\in\mathcal{G}(G)} \left(p^{\hat{\theta}_{q^*,s^*,l^*}} \cdot \prod_{\substack{(q,s,l)\in\mathcal{G}(J)\\q=p\\(q,s,l)\ne (q^*,s^*,l^*)}} p^{\min(r,s)}\right)\right]^n} {\left[\prod_{(p,r,m)\in\mathcal{G}(G)} \prod_{\substack{(q,s,l)\in\mathcal{G}(J)\\q=p}} p^{\min(r,s)}\right]^n}\\
& =\left[\prod_{(p,r,m)\in\mathcal{G}(G)} \prod_{\substack{(q,s,l)\in\mathcal{G}(J)\\q=p\\(q,s,l)= (q^*,s^*,l^*)}} \frac{p^{\hat{\theta}_{q^*,s^*,l^*}}}{p^{\min(r,s)}}\right]^n
\end{align*}
Note that for $(q,s,l)= (q^*,s^*,l^*)$ we have
\begin{align*}
\min(r,s)&= \min(r,s^*)= r-|r-s^*|^+= r-\left(\theta_{p,r}-\hat{\theta}_{q^*,s^*,l^*}\right)
\end{align*}
Therefore, the above probability is equal to
\begin{align*}
\left[\prod_{(p,r,m)\in\mathcal{G}} \prod_{\substack{(q,s,l)\in\mathcal{G}(J)\\q=p\\(q,s,l)= (q^*,s^*,l^*)}} \frac{p^{\hat{\theta}_{q^*,s^*,l^*}}}{p^{ r-\left(\theta_{p,r}-\hat{\theta}_{q^*,s^*,l^*}\right)}}\right]^n
 &=\left[\prod_{(p,r,m)\in\mathcal{G}} \prod_{\substack{(q,s,l)\in\mathcal{G}(J)\\q=p\\(q,s,l)= (q^*,s^*,l^*)}} \frac{1} {p^{ r-\theta_{p,r}}}\right]^n\\
& =\left[\prod_{(p,r,m)\in\mathcal{G}} \frac{p^{\theta_{p,r}}} {p^r}\right]^n
 = \frac{1}{|H_{\theta}|^n}
\end{align*}
%Note that
%\begin{align*}
%|H_{\theta}|=\prod_{(p,r,m)\in \mathcal{G}(G)} p^{r-\theta_{p,r}}
%\end{align*}
%Therefore we have
%\begin{align*}
%&\prod_{(p,r,m)\in\mathcal{G}(G^n)} P\left(\overbrace{\sum}_{(q,s,l)\in\mathcal{G}(J)}^{(\mathds{Z}_{p^r})} (u_{q,s,l} -\tilde{u}_{q,s,l}) g_{(q,s,l)\rightarrow (p,r,m)}=\tilde{x}_{p,r,m} -x_{p,r,m}\right) =\frac{1}{|H_{\theta}|^n}
%\end{align*}
Since the dither $B$ is uniform, we conclude that
\begin{align*}
P\left(\begin{array}{l} \phi(u)+B=x\\\phi(\tilde{u})+B=\tilde{x}\end{array}\right)
%&= \frac{1}{|G|^n} \prod_{(p,r,m)\in\mathcal{G}(G^n)} P\left(\overbrace{\sum}_{(q,s,l)\in\mathcal{G}(J)}^{(\mathds{Z}_{p^r})} (u_{q,s,l} -\tilde{u}_{q,s,l}) g_{(q,s,l)\rightarrow (p,r,m)}=\tilde{x}_{p,r,m} -x_{p,r,m}\right)\\
&=\frac{1}{|G|^n}\frac{1}{|H_{\theta}|^n}
\end{align*}

\subsection{Proof of Lemma \ref{lemma:Joint_Prob_T}}\label{section:Proof_Joint_Prob_T}
Let $\tilde{a}\in T_{\theta}(a)$ be such that for $(q,s,l)\in\mathcal{G}(J)$,
\begin{align*}
\tilde{a}_{q,s,l}-a_{q,s,l}\in q^{\hat{\theta}_{q,s,l}}\mathds{Z}_{q^s}\backslash q^{\hat{\theta}_{q,s,l}+1}\mathds{Z}_{q^s}
\end{align*}
for some $0\le \hat{\theta}_{q,s,l}\le s$. Since for all $\tilde{a}\in T_{\theta}(a)$ and all $(p,r)\in\mathcal{Q}(G)$ \begin{align*}
\min_{\substack{(q,s,l)\in\mathcal{G}(J)\\q=p}} |r-s|^++\hat{\theta}_{q,s,l}
\end{align*}
we require $|r-s|^++\hat{\theta}_{q,s,l}\ge \theta_{p,r}$ or equivalently $\hat{\theta}_{q,s,l}\ge \max_{(p,r)\in\mathcal{Q}(G)} \left(\theta_{p,r}-|r-s|^+\right)^+$ for all $(q,s,l)\in \mathcal{G}(J)$. This means for $(q,s,l)\in\mathcal{G}(J)$, $\tilde{a}_{q,s,l}$ can only take values from
\begin{align*}
a_{q,s,l}+q^{\max_{(p,r)\in\mathcal{Q}(G)}\left(\theta_{p,r}-|r-s|^+\right)^+} \mathds{Z}_{q^s}
\end{align*}
The cardinality of this set is equal to
\begin{align*}
q^{s-\max_{(p,r)\in\mathcal{Q}(G)}\left(\theta_{p,r}-|r-s|^+\right)^+}
\end{align*}
Therefore,
\begin{align*}
|T_{\theta}(a)|\le \prod_{(q,s,l)\in \mathcal{G}(J)} q^{s-\left(\theta_{p,r}-|r-s|^+\right)^+}
\end{align*}

\subsection{Proof of Lemma \ref{section:Theta_S}}\label{section:Theta_S}
For $\theta=(\theta_{p,r})_{(p,r)\in\mathcal{Q}(G)}$, if $|T_{\theta}(a)|\ne 0$, let $\tilde{a}\in T_{\theta}(a)$ such that for $(q,s,l)\in\mathcal{G}(J)$,
\begin{align*}
\tilde{a}_{q,s,l}-a_{q,s,l}\in q^{\hat{\theta}_{q,s,l}}\mathds{Z}_{q^s}\backslash q^{\hat{\theta}_{q,s,l}+1}\mathds{Z}_{q^s}
\end{align*}
for some $0\le \hat{\theta}_{q,s,l}\le s$. For all $(p,r)\in\mathcal{Q}(G)$, $\tilde{a}\in T_{\theta}(a)$ implies
\begin{align*}
\theta_{p,r}=\min_{\substack{(q,s,l)\in\mathcal{G}(J)\\q=p}} |r-s|^++\hat{\theta}_{q,s,l}
\end{align*}
Equivalently since
\begin{align*}
\theta_{p,r}&=\min_{\substack{(q,s)\in\mathcal{S}(G),l=1,\cdots,kw_{q,s}\\q=p}} |r-s|^++\hat{\theta}_{q,s,l}\\
&=\min_{\substack{(q,s)\in\mathcal{S}\\q=p\\w_{q,s}\ne 0}} |r-s|^++\min_{l=1,\cdots,kw_{q,s}}\hat{\theta}_{q,s,l}
\end{align*}
This implies $\theta\in \Theta(w)$. The converse part of the proof is similar and is omitted.

\subsection{Useful Lemmas}

\begin{lemma}\label{lemma:axb}
Let $p$ be a prime and $s,r$ a positive integer such that $s\le r$. For $a\in\mathds{Z}_{p^s}$ and $b\in\mathds{Z}_{p^r}$, let $0\le\hat{\theta}\le s$ and $\hat{\theta}\le \theta\le r$ be such that
\begin{align*}
&a\in p^{\hat{\theta}}\mathds{Z}_{p^s}\backslash p^{\hat{\theta}+1}\mathds{Z}_{p^s}\\
&b\in p^{\theta}\mathds{Z}_{p^r}%\backslash p^{\theta_2+1}\mathds{Z}_{p^r}
\end{align*}
Write $a=p^{\hat{\theta}}\alpha$ for some invertible element $\alpha\in\mathds{Z}_{p^r}$ and $b=p^{\theta}\beta$ for some $\beta\in\beta\in\{0,1,
\cdots,p^{r-\theta}-1\}$. Then, the set of solutions to the equation ${ax}\pmod{p^r} =b$ is
\begin{align*}
\left\{p^{\theta-\hat{\theta}}\alpha^{-1}\beta+i\alpha^{-1}p^{r-\hat{\theta}}|i=0,1,\cdots,p^{\hat{\theta}}-1\right\}
\end{align*}
\end{lemma}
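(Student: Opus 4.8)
The plan is to solve the linear congruence $ax\equiv b\pmod{p^r}$ directly, by first cancelling the prime power dividing $a$, then cancelling the unit $\alpha$, and finally rewriting the resulting residue class in the cosmetic form displayed in the statement.

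First I would record the structural facts. Since $a\in p^{\hat{\theta}}\mathds{Z}_{p^s}\backslash p^{\hat{\theta}+1}\mathds{Z}_{p^s}$ and $s\le r$, the element $a$ viewed in $\mathds{Z}_{p^r}$ has $p$-adic valuation exactly $\hat{\theta}$, so in the factorization $a=p^{\hat{\theta}}\alpha$ the cofactor $\alpha$ is coprime to $p$ and is therefore a unit both of $\mathds{Z}_{p^r}$ and of $\mathds{Z}_{p^{r-\hat{\theta}}}$; likewise $b=p^{\theta}\beta$ with $\hat{\theta}\le\theta\le r$ and $0\le\beta\le p^{r-\theta}-1$.

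Next comes the chain of equivalences. One has $ax\equiv b\pmod{p^r}$ if and only if $p^{\hat{\theta}}\bigl(\alpha x-p^{\theta-\hat{\theta}}\beta\bigr)\equiv 0\pmod{p^r}$, if and only if $\alpha x\equiv p^{\theta-\hat{\theta}}\beta\pmod{p^{r-\hat{\theta}}}$, if and only if $x\equiv \alpha^{-1}p^{\theta-\hat{\theta}}\beta\pmod{p^{r-\hat{\theta}}}$, where the middle step is the division of the congruence by $p^{\hat{\theta}}$ (legitimate since $0\le\hat{\theta}\le r$ and $\theta\ge\hat{\theta}$) and the last step uses invertibility of $\alpha$ modulo $p^{r-\hat{\theta}}$. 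Lifting this single residue class back to $\mathds{Z}_{p^r}$ yields exactly the $p^{\hat{\theta}}$ solutions $\alpha^{-1}p^{\theta-\hat{\theta}}\beta+kp^{r-\hat{\theta}}$ for $k=0,1,\dots,p^{\hat{\theta}}-1$.

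It then remains to match the displayed form. Because $\alpha^{-1}$ is a unit, the map $k\mapsto k\alpha^{-1}\bmod p^{\hat{\theta}}$ permutes $\{0,\dots,p^{\hat{\theta}}-1\}$, and since $p^{r-\hat{\theta}}z$ depends only on $z\bmod p^{\hat{\theta}}$, we get $\{\,kp^{r-\hat{\theta}}:k\,\}=\{\,i\alpha^{-1}p^{r-\hat{\theta}}:i\,\}$ as subsets of $\mathds{Z}_{p^r}$; substituting gives precisely $\{\,p^{\theta-\hat{\theta}}\alpha^{-1}\beta+i\alpha^{-1}p^{r-\hat{\theta}}:i=0,\dots,p^{\hat{\theta}}-1\,\}$. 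As an independent check one can instead verify directly that each such $x$ gives $ax=p^{\theta}\beta+ip^r\equiv b$, that these $p^{\hat{\theta}}$ values are pairwise distinct modulo $p^r$, and that the annihilator $\{z\in\mathds{Z}_{p^r}:az\equiv 0\}=p^{r-\hat{\theta}}\mathds{Z}_{p^r}$ has order $p^{\hat{\theta}}$, so there can be no further solutions. I do not expect a genuine obstacle here; the only points needing care are the bookkeeping of moduli when $p^{\hat{\theta}}$ is cancelled (the modulus drops from $p^r$ to $p^{r-\hat{\theta}}$) and the degenerate cases $\hat{\theta}=0$ (where $a$ is already a unit and the solution is unique) and $\theta=r$ (where $b=0$ and the solution set is the annihilator).
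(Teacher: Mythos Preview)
Your argument is correct and follows essentially the same route as the paper: reduce $ax\equiv b\pmod{p^r}$ via the factorization $a=p^{\hat\theta}\alpha$ to a congruence modulo $p^{r-\hat\theta}$ with a unit coefficient, and then lift the unique class back to $p^{\hat\theta}$ solutions in $\mathds{Z}_{p^r}$. The only difference is emphasis: the paper spends most of its proof verifying that the displayed solution set is independent of the non-unique choices of $\alpha$ and $\beta$, whereas your equivalence chain produces the solution set directly as a residue class (hence intrinsically well-defined) and instead invests the extra step in matching the cosmetic $i\alpha^{-1}p^{r-\hat\theta}$ form via the permutation $k\mapsto k\alpha^{-1}$.
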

\begin{proof}
Note that the representation of $b$ as $b=p^{\theta}\beta$ is not unique and for any $\tilde{\beta}$ of the form $\tilde{\beta}=\beta+ip^{r-\theta}$ for $i=0,1,\cdots,p^{\theta}-1$, $b$ can be written as $p^{\theta}\tilde{\beta}$. Also, the representation of $a$ as $a=p^{\hat{\theta}}\alpha$ is not unique and for any $\tilde{\alpha}=\alpha+i p^{r-\hat{\theta}}$ for $i=0,1,\cdots,p^{\hat{\theta}}-1$, we have $a=p^{\hat{\theta}} \tilde{\alpha}$. The set of solutions to $ax=b$ is identical to the set of solutions to $p^{\hat{\theta}}x=p^{\theta}\alpha^{-1}\beta$. The set of solutions to the latter is
\begin{align*}
\left\{p^{\theta-\hat{\theta}}\alpha^{-1}\beta+i\alpha^{-1}p^{r-\hat{\theta}}|i=0,1,\cdots,p^{\hat{\theta}}-1\right\}
\end{align*}
It remains to show that this set of solutions is independent of the choice of $\alpha$ and $\beta$. First, we show that the set of solutions is independent of the choice of $\beta$. For $\tilde{\beta}=\beta+jp^{r-\theta}$ for some $j\in\{0,1,\cdots,p^{\theta_2}-1\}$, we have
\begin{align*}
&\left\{p^{\theta-\hat{\theta}}\alpha^{-1}\tilde{\beta}+i\alpha^{-1}p^{r-\hat{\theta}}|i=0,1,\cdots,p^{\hat{\theta}}-1\right\}\\
&\qquad\qquad\qquad\qquad\qquad=\left\{p^{\theta-\hat{\theta}}\alpha^{-1} \left(\beta+jp^{r-\theta}\right)  +i\alpha^{-1}p^{r-\hat{\theta}}|i=0,1,\cdots,p^{\hat{\theta}}-1\right\}\\
&\qquad\qquad\qquad\qquad\qquad=\left\{p^{\theta-\hat{\theta}}\alpha^{-1} \beta +\left(i+j\right)\alpha^{-1}p^{r-\hat{\theta}}|i=0,1,\cdots,p^{\hat{\theta}}-1\right\}\\
&\qquad\qquad\qquad\qquad\qquad\stackrel{(a)}{=}\left\{p^{\theta-\hat{\theta}}\alpha^{-1} \beta +i\alpha^{-1}p^{r-\hat{\theta}}|i=0,1,\cdots,p^{\hat{\theta}}-1\right\}
\end{align*}
where $(a)$ follows since the set $p^{r-\hat{\theta}}\{0,1,\cdots,p^{\hat{\theta}}-1\}$ is a subgroup of $\mathds{Z}_{p^r}$ and $j p^{r-\hat{\theta}}$ lies in this set.

Next, we show that the set of solutions is independent of the choice of $\alpha$. For $\tilde{\alpha}=\alpha+jp^{r-\hat{\theta}}$ for some $j\in\{0,1,\cdots,p^{\hat{\theta}}-1\}$, we have
\begin{align*}
\tilde{\alpha}\left(\alpha^{-1}-\alpha^{-1}jp^{r-\hat{\theta}}\tilde{\alpha}^{-1}\right)=1
\end{align*}

Therefore, it follows that the unique inverse of $\tilde{\alpha}$ satisfies $\alpha^{-1}-\tilde{\alpha}^{-1}\in \alpha^{-1}p^{r-\hat{\theta}}\mathds{Z}_{p^r}$. Assume $\tilde{\alpha}^{-1}=\alpha^{-1}+k \alpha^{-1}p^{r-\hat{\theta}}$. We have,
\begin{align*}
&\left\{p^{\theta-\hat{\theta}}\tilde{\alpha}^{-1}{\beta}+i\tilde{\alpha}^{-1}p^{r-\hat{\theta}} |i=0,1,\cdots,p^{\hat{\theta}}-1\right\}\\
&\qquad\qquad\qquad\qquad=\left\{p^{\theta-\hat{\theta}}\left(\alpha^{-1}+k\alpha^{-1}p^{r-\hat{\theta}}\right)\beta+i\left(\alpha^{-1}+k\alpha^{-1}p^{r-\hat{\theta}}\right)p^{r-\hat{\theta}} |i=0,1,\cdots,p^{\hat{\theta}}-1\right\}\\
&\qquad\qquad\qquad\qquad=\left\{p^{\theta-\hat{\theta}}\alpha^{-1} \beta +\left(i+ikp^{r-\hat{\theta}}+k\beta p^{\theta-\hat{\theta}}\right)\alpha^{-1}p^{r-\hat{\theta}}|i=0,1,\cdots,p^{\hat{\theta}}-1\right\}\\
&\qquad\qquad\qquad\qquad\stackrel{(a)}{=}\left\{p^{\theta-\hat{\theta}}\alpha^{-1} \beta +i\alpha^{-1}p^{r-\hat{\theta}}|i=0,1,\cdots,p^{\hat{\theta}}-1\right\}
\end{align*}
where same as above, $(a)$ follows since the set $p^{r-\hat{\theta}}\{0,1,\cdots,p^{\hat{\theta}}-1\}$ is a subgroup of $\mathds{Z}_{p^r}$ and $(ikp^{r-\hat{\theta}}+k\beta p^{\theta-\hat{\theta}}) p^{r-\hat{\theta}}$ lies in this set.
%Similarly, we can show that the set of solutions is independent of the choice of $\alpha$.
\end{proof}

\begin{lemma}\label{dinesh_lemma}
Let $X$ be a random variable taking values from the group $G$ and for a subgroup $H$ of $G$, define $[X]=X+H$. For $y\in A_{\epsilon}^n(Y)$ and $x\in A_{\epsilon}^n(X|y)$, let $z=[x]=x+H^n$. Then we have
\begin{align*}
\left(x+H^n\right)\cap A_{\epsilon}^n(X|y)= A_{\epsilon}^n(X|zy)
\end{align*}
and
\begin{align*}
(1-\epsilon)2^{n[H(X|Y[X])-O(\epsilon)]}\le \left|\left(x+H^n\right)\cap A_{\epsilon}^n(X|y)\right|\le 2^{n[H(X|Y[X])+O(\epsilon)]}
\end{align*}
\end{lemma}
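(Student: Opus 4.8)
The plan is to reduce both claims to standard facts about conditionally typical sequences, using crucially that $[\,\cdot\,]$ is a deterministic coordinatewise function.

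For the set identity, I would first note that a sequence $x'$ lies in $x+H^n$ exactly when $x'_i-x_i\in H$ for every $i$, i.e., when $[x']=[x]=z$ coordinatewise, so that $z_i$ is then a function of $x'_i$. Consequently the joint type of $(x',z,y)$ is completely determined by that of $(x',y)$: one has $N(a,c,b\,|\,x',z,y)=N(a,b\,|\,x',y)\,\mathds{1}_{\{c=a+H\}}$, and likewise $p_{X,[X],Y}(a,c,b)=p_{X,Y}(a,b)\,\mathds{1}_{\{c=a+H\}}$ since $[X]$ is a function of $X$. Hence, among the $x'$ with $[x']=z$, the defining inequalities of $A_\epsilon^n(X,[X],Y)$ for $(x',z,y)$ reduce term by term to those of $A_\epsilon^n(X,Y)$ for $(x',y)$, which gives $(x+H^n)\cap A_\epsilon^n(X|y)=A_\epsilon^n(X|zy)$. (The two typicality definitions carry different normalising constants, $|\mathcal{X}||\mathcal{Y}|$ versus $|\mathcal{X}||G/H||\mathcal{Y}|$; this only rescales $\epsilon$ by a constant and is absorbed into the $O(\epsilon)$ terms below.)

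For the cardinality bounds, I would observe first that applying the above with $x'=x$ shows $(z,y)\in A_\epsilon^n([X],Y)$, so $(z,y)$ is a bona fide typical conditioning pair. Let $q(a|c,b)=p_{X|[X]Y}(a|c,b)$ and consider the conditional product distribution $q^n(x'|z,y)=\prod_{i=1}^n q(x'_i|z_i,y_i)$. For any $x'\in A_\epsilon^n(X|zy)$, writing $-\tfrac1n\log q^n(x'|z,y)=-\sum_{a,c,b}\tfrac1n N(a,c,b\,|\,x',z,y)\log q(a|c,b)$ (every logarithm has a strictly positive argument because $A_\epsilon^n$ forbids zero-probability triples) and substituting $\tfrac1n N(a,c,b\,|\,x',z,y)=p_{X,[X],Y}(a,c,b)+O(\epsilon)$ gives $q^n(x'|z,y)=2^{-n[H(X|Y[X])\pm O(\epsilon)]}$. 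Summing over $x'\in A_\epsilon^n(X|zy)$ and using $\sum_{x'}q^n(x'|z,y)\le 1$ yields the upper bound $|A_\epsilon^n(X|zy)|\le 2^{n[H(X|Y[X])+O(\epsilon)]}$. For the lower bound, a standard conditional large-deviations argument (the conditional typicality lemma, with the $\epsilon$-parameters suitably nested and using that $(z,y)$ is typical) shows $\sum_{x'\in A_\epsilon^n(X|zy)}q^n(x'|z,y)\ge 1-\epsilon$ for all large $n$; combined with the per-element estimate this gives $|A_\epsilon^n(X|zy)|\ge(1-\epsilon)\,2^{n[H(X|Y[X])-O(\epsilon)]}$.

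I do not expect a conceptual obstacle here: the content is simply that fixing the deterministic image $[x']=z$ turns $(x+H^n)\cap A_\epsilon^n(X|y)$ into an ordinary conditionally typical set whose exponent is $H(X|Y[X])$. The one point requiring care is the bookkeeping of the various $\epsilon$'s — matching the threshold in $A_\epsilon^n(X,[X],Y)$ to that in $A_\epsilon^n(X,Y)$, and invoking the conditional typicality lemma, which strictly needs $(z,y)$ to be $\epsilon'$-typical for some $\epsilon'$ slightly below $\epsilon$; both are routine and are swept into the $O(\epsilon)$ notation.
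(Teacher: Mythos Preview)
Your proposal is correct and follows essentially the same approach as the paper: both argue that membership in $x+H^n$ is equivalent to $[x']=z$, use this to identify $(x+H^n)\cap A_\epsilon^n(X|y)$ with $A_\epsilon^n(X|zy)$ via the fact that $[X]$ is a deterministic function of $X$, and then appeal to standard conditional-typical-set cardinality bounds once $(z,y)$ is known to be typical. The paper presents the set identity as a two-sided inclusion and simply cites the standard size estimate, whereas you give a more explicit type-counting argument and spell out the entropy computation; you also flag the normalising-constant mismatch between $A_\epsilon^n(X,Y)$ and $A_\epsilon^n(X,[X],Y)$, a bookkeeping point the paper passes over silently.
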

\begin{proof}
First, we show that $\left(x+H^n\right)\cap A_{\epsilon}^n(X|y)$ is contained in $A_{\epsilon}^n(X|zy)$. Since $z$ is a function of $x$, we have $(x,z,y)\in A_{\epsilon}^n(X,[X],Y)$. For $x'\in \left(x+H^n\right)\cap A_{\epsilon}^n(X|y)$, we have $[x']=x'+H^n=x+H^n=z$ and $(x',z,y)=(x',[x'],y)\in A_{\epsilon}^n(X,[X],Y)$. Therefore, $x'\in A_{\epsilon}^n(X|zy)$ and hence,
\begin{align*}
\left(x+H^n\right)\cap A_{\epsilon}^n(X|y)\subseteq A_{\epsilon}^n(X|zy)
\end{align*}
Conversely, for $x'\in A_{\epsilon}^n(X|zy)$, since $(x,z)\in A_{\epsilon}^n(X,[X])$ where $[X]$ is a function of $X$, we have $[x']=z$. This implies $x'\in z+H^n=x+H^n$. Clearly, we also have $x'\in A_{\epsilon}^n(X|y)$. The claim on the size of the set follows since $(z,y)\in A_{\epsilon}^n([X]Y)$.
\end{proof}%\newpage
\bibliographystyle{plain}
\bibliography{ariabib}

\begin{thebibliography}{10}

\bibitem{Goblick}
{ T.~J. Goblick, Jr., ``Coding for a discrete information source with a
  distortion measure'', {\em Ph.D. dissertation}, Dept. Electr. Eng., {MIT} ,
  Cambridge, MA, 1962}.

\bibitem{Loeliger96}
{ H.~A. Loeliger and T.~Mittelholzer, ``Convolutional codes over groups'', {\em
  IEEE Trans. Inform. Theory},vol.~42, no.~6, pp.~1660--1686, November 1996.}

\bibitem{loeliger-signal}
{ H.~A. Loeliger, ``Signal sets matched to groups'', {\em IEEE Trans. Inform.
  Theory}, vol.~37, no.~6, pp.~1675--1682, November 1991.}

\bibitem{Sridharan10}
{ S. Shridharan, A. Jafarian, S. Vishwanath, and S. A. Jafar, ``Lattice Coding
  for K User Gaussian Interference Channels'', {\em IEEE Transactions on
  Information Theory}, April 2010. }.

\bibitem{ahlswede_group}
R.~Ahlswede.
\newblock {G}roup codes do not achieve {S}hannons's channel capacity for
  general discrete channels.
\newblock {\em The annals of Mathematical Statistics}, 42(1):224--240, Feb.
  1971.

\bibitem{ahlswede_alg_codes}
R.~Ahlswede and J.~Gemma.
\newblock {B}ounds on algebraic code capacities for noisy channels {I}.
\newblock {\em Information and Control}, 19(2):124--145, 1971.

\bibitem{ahlswede_alg_codes2}
R.~Ahlswede and J.~Gemma.
\newblock {B}ounds on algebraic code capacities for noisy channels {II}.
\newblock {\em Information and Control}, 19(2):146--158, 1971.

\bibitem{algebra_bloch}
N.~J. Bloch.
\newblock {\em {A}bstract Algebra With Applications}.
\newblock Prentice-Hall, Inc, Englewood Cliffs, New Jersey, 1987.

\bibitem{fagnani_abelian}
G.~Como and F.~Fagnani.
\newblock {T}he capacity of finite abelian group codes over symmetric
  memoryless channels.
\newblock {\em {IEEE} Transactions on Information Theory}, 55(5):2037--2054,
  2009.

\bibitem{CSBook}
J.~H. Conway and N.~J.~A. Sloane.
\newblock {\em Sphere Packings, Lattices and Groups}.
\newblock Springer-Verlag, New York, 1988.

\bibitem{Csiszarbook}
I.~Csiszar and J.~Korner.
\newblock {\em Information theory: Coding theorems for Discrete memoryless
  Systems.}
\newblock 1981.

\bibitem{dobrushin_group}
R.~L. Dobrushin.
\newblock {A}symptotic optimality of group and systematic codes for some
  channels.
\newblock {\em Theor. Probab. Appl.}, 8:47--59, 1963.

\bibitem{Dobrusin}
R.~L. Dobrusin.
\newblock Asymptotic bounds of the probability of error for the transmission of
  messages over a discrete memoryless channel with a symmetric transition
  probability matrix.
\newblock {\em Teor.~Veroyatnost.~i~Primenen}, pages 283--311, 1962.

\bibitem{elias}
P.~Elias.
\newblock {C}oding for noisy channels.
\newblock {\em {IRE} {C}onv. {R}ecord}, part. 4:37--46, 1955.

\bibitem{Erez05}
U.~Erez and S.~tenBrink.
\newblock A close-to-capacity dirty paper coding scheme.
\newblock {\em IEEE Trans. Inform. Theory}, 51:3417--3432, October 2005.

\bibitem{CoveringBook}
S.~Litsyn G.~Cohen, I.~Honkala and A.~Lobstein.
\newblock {\em Covering Codes}.
\newblock Elsevier-North-Holland, Amsterdam, 1997.

\bibitem{Garello95}
R.~Garello and S.~Benedetto.
\newblock Multilevel construction of block and trellis group codes.
\newblock {\em IEEE Trans. Inform. Theory}, 41:1257--1264, Sep. 1995.

\bibitem{forney_dynamics}
G.~D.~Forney Jr and M.~Trott.
\newblock {T}he dynamics of group codes: State spaces, trellis diagrams, and
  canonical encoders.
\newblock {\em {IEEE} Transactions on Information Theory}, 39(9):1491--1513,
  1993.

\bibitem{group_hall}
M.~Hall Jr.
\newblock {\em The Theory of Groups}.
\newblock The Macmillan Company, New York, 1959.

\bibitem{korner_marton}
J.~Korner and K.~Marton.
\newblock {H}ow to encode the modulo-two sum of binary sources.
\newblock {\em {IEEE} Transactions on Information Theory}, IT-25:219--221, Mar.
  1979.

\bibitem{dinesh_dsc}
D.~Krithivasan and S.~S. Pradhan.
\newblock {D}istributed source coding using abelian group codes.
\newblock 2011.
\newblock {IEEE} Transactions on Information Theory(57)1495-1519.

\bibitem{MacBook}
F.~J. MacWilliams and N.~J.~A. Sloane.
\newblock {\em The theory of error-correcting codes}.
\newblock Elsevier-North-Holland, 1977.

\bibitem{nazer_gastpar}
B.~A. Nazer and M.~Gastpar.
\newblock {C}omputation over multiple-access channels.
\newblock {\em {IEEE} Transactions on Information Theory}, 53(10 pages =), Oct.
  2007.

\bibitem{padakandla_broadcast_arxiv}
A.~Padakandla and S.S. Pradhan.
\newblock {A} new coding theorem for three user discrete memoryless broadcast
  channel.
\newblock 2012.
\newblock {O}nline: http://128.84.158.119/abs/1207.3146v2.

\bibitem{padakandla_3_user_inter}
A.~Padakandla, A.G. Sahebi, and S.S. Pradhan.
\newblock A new achievable rate region for the 3-user discrete memoryless
  interference channel.
\newblock In {\em Information Theory Proceedings (ISIT), 2012 IEEE
  International Symposium on}, pages 2256--2260, 2012.

\bibitem{Park_Barg_Polar}
W.~Park and A.~Barg.
\newblock {P}olar codes for $q$-ary channels, $q=2^r$.
\newblock 2012.
\newblock {O}nline: http://arxiv.org/abs/1107.4965.

\bibitem{phiosof_zamir_journal}
T.~Philosof and R.~Zamir.
\newblock {O}n the loss of single-letter characterization: The dirty multiple
  access channel.
\newblock {\em {IEEE} Transactions on Information Theory}, 55:2442--2454, June
  2009.

\bibitem{sandeep_discus}
S.~S. Pradhan and K.~Ramchandran.
\newblock {D}istributed source coding using syndromes ({DISCUS}): Design and
  construction.
\newblock {\em {IEEE} Transactions on Information Theory}, 49(3):626--643,
  2003.

\bibitem{Sahebi_polar_allerton2011}
A.~G. Sahebi and S.~S. Pradhan.
\newblock {M}ultilevel {P}olarization of {P}olar {C}odes {O}ver {A}rbitrary
  {D}iscrete {M}emoryless {C}hannels.
\newblock {\em Proc. 49th Allerton Conference on Communication, Control and
  Computing}, Sept. 2011.

\bibitem{sahebi_DSC_Allerton}
A.~G. Sahebi and S.~S. Pradhan.
\newblock {O}n {D}istributed {S}ource {C}oding {U}sing {A}belian {G}roup
  {C}odes.
\newblock {\em {P}roceedings of {F}iftieth {A}nnual {A}llerton {C}onference},
  Oct. 2011.
\newblock IL, USA.

\bibitem{sasoglu_polar_q}
E.~Sasoglu, E~Telatar, and E~Arikan.
\newblock {P}olarization for arbitrary discrete memoryless channels.
\newblock {\em IEEE Information Theory Worshop}, Dec. 2009.
\newblock {L}ausanne, Switzerland.

\bibitem{slepian_group}
D.~Slepian.
\newblock {G}roup codes for for the {G}aussian channel.
\newblock {\em {B}ell {S}yst. {T}ech. {J}ournal}, 1968.

\bibitem{vinodh_group_source_coding}
K.~Vinodh, V.~Lalitha, N.~Prakash, P.V. Kumar, and S.S. Pradhan.
\newblock On the achievable rates of sources having a group alphabet in a
  distributed source coding setting.
\newblock In {\em Communication, Control, and Computing (Allerton), 2010 48th
  Annual Allerton Conference on}, pages 479--486, 2010.

\bibitem{Wyner74}
A.~D. Wyner.
\newblock Recent results in the {S}hannon theory.
\newblock {\em IEEE Trans. Inform. Theory}, IT-20:2--10, January 1974.

\bibitem{Zamir02}
R.~Zamir, S.~Shamai, and U.~Erez.
\newblock Nested linear/lattice codes for structured multiterminal binning.
\newblock {\em IEEE Trans. Inform. Theory}, 48:1250--1276, June 2002.

\end{thebibliography}
\pagebreak
%\bibliographystyle{IEEEtran}
%\bibliography{IEEEabrv,ariabib}%
\end{document}